\newcommand{\R}{{\mathcal R}}
\newcommand{\Y}{{\mathcal Y}}
\newcommand{\C}{{\mathcal C}}
\newcommand{\ve}{\varepsilon}
\newcommand{\V}{{\mathcal V}}
\newcommand{\A}{{\mathcal A}}
\newcommand{\W}{{\mathcal W}}
\newcommand{\Q}{{\mathcal Q}}
\newcommand{\X}{{\mathcal X}}
\newcommand{\BB}{{\mathcal B}}
\newcommand{\E}{{\mathcal E}}
\newcommand{\T}{{\mathcal T}}
\renewcommand{\H}{{\mathcal H}}
\newcommand{\diag}{{\rm diag}}
\newcommand{\N}{{\mathcal{N}}}
\newtheorem{definition}{Definition}
\newtheorem{lemma}{Lemma}
\newtheorem{theorem}{Theorem}
\newtheorem{assumption}{Assumption}
\begin{document}
\title{Cooperative Environmental Monitoring \\
for PTZ Visual Sensor Networks:\\ A Payoff-based Learning Approach}

\author{Takeshi Hatanaka,~\IEEEmembership{Member,~IEEE}, Yasuaki Wasa and\\
Masayuki Fujita,~\IEEEmembership{Member,~IEEE} 
\thanks{Takeshi Hatanaka(corresponding author), Yasuaki Wasa
and Masayuki Fujita
are with the Department of Mechanical and Control Engineering,
Tokyo Institute of Technology, Tokyo 152-8550,
JAPAN, {\sf hatanaka@ctrl.titech.ac.jp}}
%, {\sf wasa@fl.ctrl.titech.ac.jp} {\sf fujita@ctrl.titech.ac.jp}}
}

%\markboth{Journal of \LaTeX\ Class Files,~Vol.~6, No.~1, January~2007}%
%{Shell \MakeLowercase{\textit{et al.}}: Bare Demo of IEEEtran.cls for Journals}

\maketitle

\begin{abstract}
This paper investigates cooperative environmental monitoring
for Pan-Tilt-Zoom (PTZ) visual sensor networks.
We first present a novel formulation of the optimal
environmental monitoring problem, whose objective function is
intertwined with the uncertain state of the environment.
In addition, due to the large volume of vision data, 
it is desired for each sensor 
to execute processing through local computation and communication.
To address the issues, we present a distributed solution to the problem
based on game theoretic cooperative control and payoff-based learning.
At the first stage, a utility function is designed so that the resulting
game constitutes a potential game with potential function
equal to the group objective function,
%through an appropriate utility design technique,
where the designed utility is shown to be computable 
through local image processing and communication. 
Then, we present a payoff-based learning algorithm
so that the sensors are led to 
%not
%just a Nash equilibrium but 
the global objective function
maximizers without using any prior information on 
the environmental state.
Finally, we run experiments 
to demonstrate the effectiveness of the present approach.
\end{abstract}
\begin{IEEEkeywords}
\noindent 
%motion coordination,
%cooperative control,
%passivity-based control,
%pose synchronization,
%flocking algorithm
Environmental monitoring,
Visual sensor networks,
Payoff-based learning,
Game theoretic cooperative control
\end{IEEEkeywords}

%------------------------------------------------------------------------%
% Main Document
%------------------------------------------------------------------------%
% For peerreview papers, this IEEEtran command inserts a page break and
% creates the second title. It will be ignored for other modes.
\IEEEpeerreviewmaketitle

\section{Introduction}
\label{sec:1}

Large-scale environmental monitoring 
to reveal environmental states has become crucial
due to recent serious natural disasters including earthquakes, 
tsunamis, nuclear meltdowns, landslides, typhoons/hurricanes and so on. 
%\cite{RASM12}, 
%\cite{DM_RASM12}.
In the task, it is in general required 
to collect dense data in real time over widespread environment.
As a solution to the issue, sensor networks have emerged 
over the past few decades and been extensively studied.
Moreover, mobile/robotic sensor networks have also been deeply
investigated as a key technology to enhance data collection efficiency
%(See 
\cite{DM_RASM12}.
%and references therein).
%A solution to efficient data collection over wide spread environment is using
%wireless mobile sensor networks .
% with appropriate
%data processing and sampling techniques
%\cite{CL_EJC05}, 
%\cite{OFL_TAC04}.

%The objective of the environmental monitoring is
%to reveal and understand the state of the environment,
%which in general requires large amount of sensing data.
Among a variety of sensors available for 
the monitoring task \cite{DM_RASM12},
%--\cite{sensor2}),
this paper focuses on vision sensors.
%since they are expected as powerful tools
%to grasp what is going in the environment.
%\cite{vision1}, \cite{vision2}, 
In particular, we consider a sensor network consisting of
spatially distributed cameras, 
% as in Fig. \ref{fig:1},
%with communication and computation capability,
which is  called {\it camera/visual sensor network}
\cite{CS_BK}.
%--\cite{TV_SP11}.
%Depending on the scenarios, a variety of sensors 
%have been utilized for the monitoring task
%\cite{DM_RASM12}, \cite{sensor1}, \cite{sensor2}.
%Among them, we focus on a vision sensor
%which is 
Then, we need to take account of the following
nature of vision sensors:
(i) volume of data tends to be larger than the other sensors, 
(ii) vision sensors do not provide explicit physical data, and
%and hence we need to explicitly present how to
%reflect the measurement to the objective function,
(iii) vision sensors are inherently heterogeneous,
which means that, even if quality of two sensors are the same, 
quality of their measurements on a common point can differ 
in the location of the point relative to the camera frames. 
%In particular, this paper considers 
%%investigates an optimal monitoring for 
%a sensor network consisting of
%spatially distributed cameras, 
%% as in Fig. \ref{fig:1},
%%with communication and computation capability,
%which is in general called a {\it camera/visual sensor network}
%\cite{CS_BK}, \cite{SDKFC_SP11}.
%In particular, the nature (i) requires
%%In terms of (i), as the data to be collected and analyzed becomes big, 
%it is desired 
%to employ distributed algorithms for data sampling or processing.

In this paper, we investigate a distributed/cooperative 
optimal monitoring strategy for a network of Pan-Tilt-Zoom (PTZ) 
cameras by controlling the camera parameters,
which are called {\it actions} in this paper.
%To avoid data explosion, which is a bottleneck of the monitoring task, 
%distributed data processing \cite{Saber}
%and sampling techniques \cite{CL_EJC05, OFL_TAC04} play a key role
%in this research field.
%However, the scenario is never closed in the cyberworld
%Though the main goal of this paper is to present a distributed algorithm
%to solve the problem, we need to notice that
Note that 
%and the decisions over the network must be made based on its partial state.
%Thus, interactions between cyber and physical world
%are the central issue of the scenario.
%Accordingly, when a sensor
%takes an action, 
what is the optimal action in the problem is affected by
the unknown environmental state.
Accordingly, we have to solve the optimization problem under the restriction: (iv) 
%Though it is natural for cameras to maximize
%the information amount included in the visual measurements,
%in the environmental monitoring scenario,
each vision sensor has no access to the reward brought about by an action 
before the action is actually executed.
Optimization under (iv) has been deeply
studied in the field of reinforcement learning %\cite{SB_BK98}.
and simulated annealing. %\cite{BT_SS93} 
%can be also a powerful tool to deal with such problems.
However, these algorithms are centralized and might not be 
available due to the nature (i).

This paper first formulates a novel
optimal environmental monitoring problem for PTZ visual sensor networks
reflecting the nature (ii) and (iii), 
where we let the objective function rely on the amount of information
contained in the sensed data and the quality of the measurement.
%we explicitly show how to
%reflect the measurement to the objective function
%in order to deal with the nature (ii).
%Let us now notice that the problem contains an additional requirement
%which stems from the feature of the environmental monitoring scenario:
Then, we next present a distributed solution to the problem
leading the sensors to the globally optimal actions
under the restriction of (iv).
To meet the requirements, this paper employs
techniques in {\it game theoretic cooperative control} originally presented in \cite{MAS_SMC09}
since it provides a systematic design procedure of cooperative control
for heterogeneous networks as stated in (iii). 
Following the procedure of \cite{MAS_SMC09}, we first constitute a potential game \cite{MAS_SMC09} with the potential function
equal to the global objective function through an
appropriate utility design technique \cite{GMW_PER10}.
%, which allows us to
%deal with heterogeneous nature of visual sensor networks as in (iii)
%in a systematic way.
%We also show that each vision sensor can feedback the designed utility 
%in a distributed fashion, which provides a solution to the issue (i).
Then, as a technical tool to address (iv), we employ an 
action selection rule called {\it payoff-based learning} 
%\cite{ZM_SIAM13}, 
\cite{MS08_GEB12}--\cite{GHF_ACC12},
%\cite{MS08_GEB12}, \cite{MYAS_SIAM09}, \cite{MYP_CDC12}, \cite{GHF_ACC12},
where each player chooses his action
based only on the past experienced payoffs.
In particular, we present a novel payoff-based learning algorithm
which guarantees convergence in probability to
the potential function maximizers, which are equal to
the global objective function maximizers.
%and (vii) all vision sensors can change their actions concurrently.
%Since the game is designed so that the potential function
%is equal to the global objective function,
%it immediately means the solution.
Finally, we run experiments on a visual sensor network testbed
to demonstrate the effectiveness of the present approach.

%The contribution of this paper is as follows:
%(i) a novel problem formulation of 
%environmental monitoring for visual sensor networks
%is presented,
%(ii) a novel learning algorithm for potential games is
%presented,
%(iii) it is rigorously proved that the presented algorithm
%leads actions to the potential function maximizers
%with high probability, and
%(iv) experiments are conducted on a visual sensor network testbed.

\subsection*{Related Works and Contributions}

%As stated above, it is desired for visual sensor networks to use distributed algorithms 
%for data sampling and processing due to the nature (i) of vision sensors.

%\cite{ZM_SIAM13}, 
%\cite{SPF_ACC12}, 
%\cite{TV_SP11}, 
%As stated above, visual sensor networks require
%distributed algorithms for data processing and sampling due to nature (i).

%As pointed out in \cite{DSMFC_SP12},
%few attentions have been paid for distributed processing 
%over visual sensor networks 
%despite of the strong needs due to the nature (i) 
%and numerous research works on multi-camera systems. 
Due to the nature of vision sensors (i), distributed processing 
over visual sensor networks has been actively
studied in some recent papers.
Cooperative estimation over visual sensor networks
is studied in \cite{SDKFC_SP11}, \cite{HF_TAC13}, \cite{TV_SP11},
distributed localization/calibration is investigated
in \cite{TV_SP11}, \cite{BLCFZ_ACC12}, and distributed sensing strategies
are presented in \cite{DSMFC_SP12}, \cite{ZM_SIAM13}, \cite{SPF_ACC12}.
In particular, the scenarios and approaches in \cite{DSMFC_SP12}, \cite{ZM_SIAM13}
are closely related to this paper and hence
they will be mentioned later.

The objective of this paper is related to 
coverage control \cite{CL_EJC05}--\cite{SJS_TAC12}
%, \cite{BCM_BK09}, \cite{CZ_CDC08}, \cite{HMA_CDC11}, \cite{WH_TAC10}
whose objective is to deploy mobile sensors efficiently 
via distributed decision-making.
A gradient decent approach widely used 
in the literature \cite{CL_EJC05,BCM_BK09}
is implementable even under the restriction (iv).
However, the approach is not always directly applicable
to the problem of this paper
due to the nature of vision sensors (ii) and (iii).
More importantly, the gradient decent approach leads sensors to
a configuration achieving local maxima of some group objective function,
but such a configuration does not always globally maximize the objective function.
%, and hence
%(v) is not achieved.

Persistent monitoring is also recently studied e.g. in 
\cite{HSS_CDC08}--\cite{CLD_CDC12},
%\cite{HSS_CDC08,HHGHFS_IFAC08,SSR_ICRA11,CLD_CDC12}, 
which differs from coverage in
the perpetual need to cover a changing environment \cite{SSR_ICRA11,CLD_CDC12}.
However, to the best of our knowledge, there are few works
fully taking account of the nature of vision sensors.
In addition, while most of the works \cite{HHGHFS_IFAC08}--\cite{CLD_CDC12}
assume information accumulation/decay models
and availability of the model, this paper does not presume such models.

The papers \cite{MAS_SMC09,DSMFC_SP12,ZM_SIAM13,SJS_TAC12} are most directly 
related to this paper, where the authors investigate
potential game theoretic approaches to coverage control
or collaborative sensing.
%A solution to the issues is provided by \cite{MAS_SMC09}
%based on game theory, in particular, {\it potential game} theory ,
%which is also useful for dealing with heterogeneous networks as mentioned in (iii).
%, where a class of
%games called {\it potential games} \cite{MS96} 
%play a central role in meeting the objective.
The algorithm presented in \cite{MAS_SMC09} guarantees that
players eventually take the globally optimal action with high probability.
However, it presumes availability of 
future payoffs prior to action executions
and hence cannot be implemented under (iv).
\cite{ZM_SIAM13} presents a payoff-based learning algorithm
and applied it to coverage for visual (mobile) sensor networks.
However, the algorithm does not always lead sensors to
the globally optimal actions and the nature (ii) and (iii)
are not fully addressed.
Similar statements are also true for \cite{SJS_TAC12}.
%, i.e. not explicitly show how to evaluate the acquired measurement.
Meanwhile, \cite{DSMFC_SP12} mentions how to use
the visual measurement in the process explicitly.
Though the authors utilize a learning algorithm
assuming a future payoff, they successfully avoid the issue (iv)
by constructing the future 
virtual utility from the estimate of the target states produced by a distributed filter.
However, the approach may limit applications
since there might be no explicit target
in some scenarios.

We finally mention the contribution of the present learning algorithm.
%while the present solution allows one to process the acquired data
%and to make decisions in a distributed fashion.
%Meanwhile, all the payoff-based learning algorithms 
%\cite{MS08_GEB12}--\cite{GHF_ACC12}, \cite{ZM_SIAM13} 
%can be implemented under (iv) by distributed processing.
%while the requirement (iv) 
%prevents applications of
%algorithms assuming future payoffs as in \cite{MS96b}, \cite{SA03}.
%In particular, 
%Among the existing payoff-based learning 
%algorithms in \cite{MS08_GEB12}--\cite{GHF_ACC12}, \cite{ZM_SIAM13} 
The algorithm is regarded as a variation of
\cite{MS08_GEB12} and \cite{ZM_SIAM13}.
\cite{MS08_GEB12} guarantees that potential function maximizers 
are eventually selected with high probability.
Meanwhile, \cite{ZM_SIAM13} has advantages over \cite{MS08_GEB12} 
that the action selection rule is simpler and convergence in probability
is rigorously guaranteed, but it does not always lead sensors
to potential function maximizers.
The contribution of the present algorithm is to embody advantages of these two algorithms,
i.e. guarantees convergence in probability to
potential function maximizers  while maintaining the
simple structure of \cite{ZM_SIAM13}.
%In addition, though \cite{MYP_CDC12} elegantly skips
%the utility design process and communications in the presence of 
%physical interactions among players,
%the scenario of this paper does not have such interactions
%and hence these factors cannot be omitted.

The contributions of this paper are summarized as follows:
\begin{itemize}
\item a novel problem formulation of environmental monitoring for PTZ visual sensor networks
taking account of the nature of vision sensors (ii) and (iii) is presented,
\item a novel simple payoff-based learning algorithm for potential games 
guaranteeing convergence in probability to
the potential function maximizers is proposed, and
\item the approach is demonstrated through
experiments, while such efforts are not always fully made
in the existing works on game theoretic cooperative control.
\end{itemize}

\section{Visual Sensor Networks and Environment}
\label{sec:2}

\subsection{Visual Sensor Networks and Environment}
\label{sec:2.1}

\begin{figure}
\begin{center}
\begin{minipage}[t]{8cm}
\includegraphics[width=8cm]{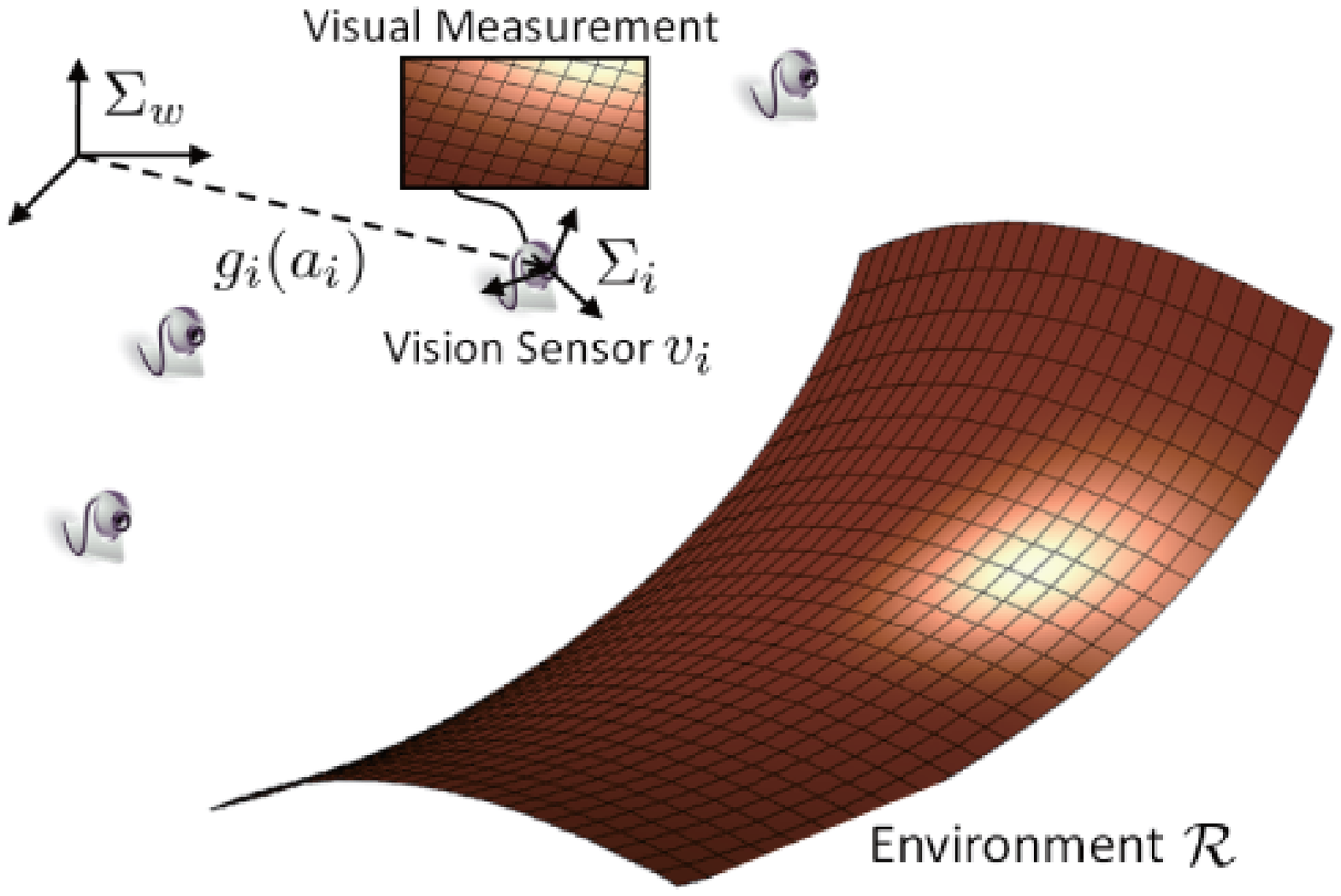}
\caption{Targeted scenario}
\label{fig:1}
\end{minipage}
\hspace{1.5cm}
\begin{minipage}[t]{5cm}
\begin{center}
\includegraphics[width=5cm]{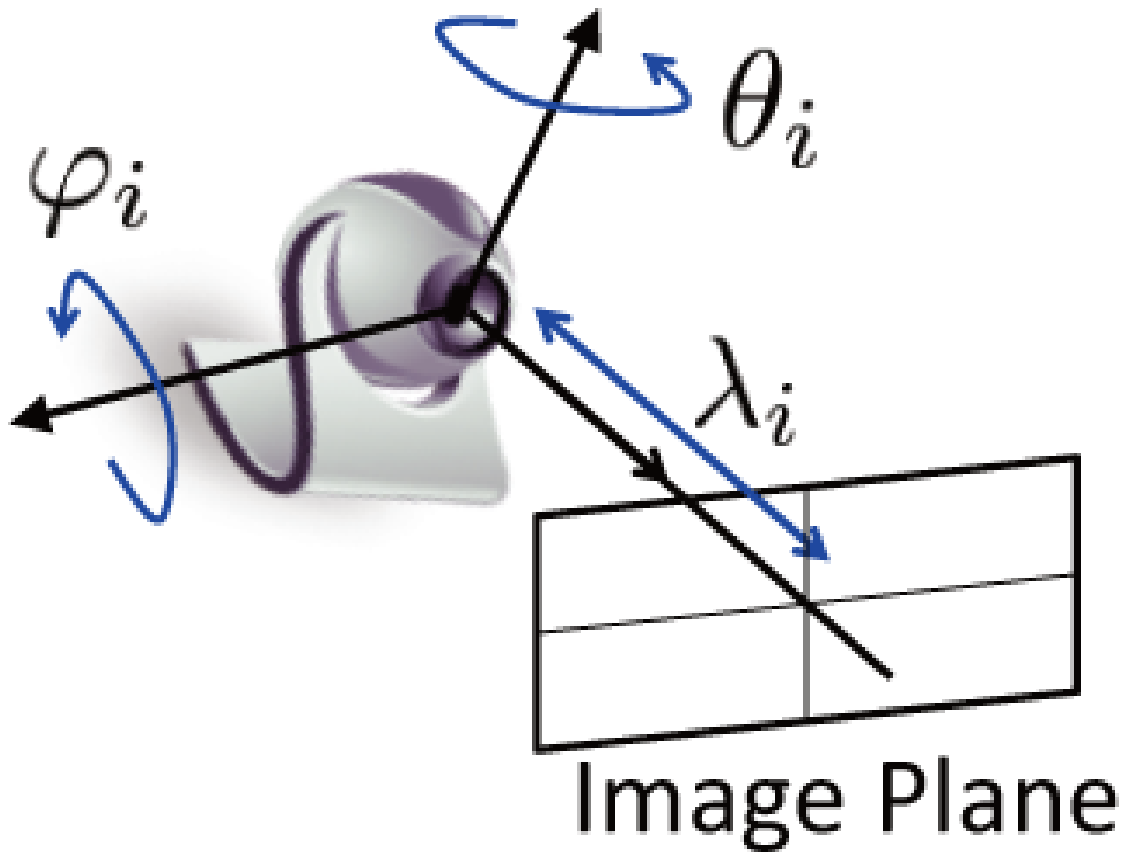}
\caption{Actions of vision sensor $v_i$}
\label{fig:2}
\end{center}
\end{minipage}
\end{center}
\end{figure}

In this paper, we consider the situation illustrated in Fig. \ref{fig:1}, where
$n$ Pan-Tilt-Zoom (PTZ) vision sensors $\V = \{v_1, \cdots, v_n\}$ 
%with communication and computation capability
monitor environment
modeled by a collection of $m$ polygons $\R = \{r_1, \cdots, r_m\}$. 
Let the set of position vectors of all points in $r_j\in \R$
relative to a world frame $\Sigma_w$
%and positions of verteces of polygon $r_j$ 
%relative to $\Sigma_w$ 
be denoted by $\Q_j$.
%and
%$\Q_j^{\rm vtx}\subset {\mathbb R}^3$ respectively.
In the following, we also use the notation $\Q = \cup_{r_j \in \R}\Q_j$.

Suppose that each PTZ vision sensor $v_i \in \V$ can adjust
its horizontal (pan) angle $\theta_i \in \varTheta_i \subseteq [-\pi, \pi]$, 
vertical (tilt) angle $\varphi_i \in \varPhi_i \subseteq [0, \pi]$
and focal length $\lambda_i \in \varLambda_i$ (Fig. \ref{fig:2}), where
$\varTheta_i, \varPhi_i$ and $\varLambda_i$ are assumed to be finite sets.
%Here, we denote the coordinate frame of $v_i\in \V$ by $\Sigma_i$,
%where we identify the $z$-axis of $\Sigma_i$ with the optical axis.
Throughout this paper, the notation 
\begin{equation*}
a_i = (\theta_i, \varphi_i, \lambda_i)
\in \A_i := \varTheta_i \times \varPhi_i\times \varLambda_i
\end{equation*} 
is called an {\it action} of sensor $v_i \in \V$,
and 
%\begin{equation*}
$a = (a_i)_{v_i \in \V} \in \A := \A_1 \times \cdots \times \A_n$
%\end{equation*}
is called a {\it joint action}.
A collection of actions other than 
$v_i$ is denoted as $a_{-i}:=(a_{1},\cdots,a_{i-1},a_{i+1},\cdots,a_{n})$.

Once an action $a_i$ is fixed,
the orientation of sensor $v_i$'s frame $\Sigma_i$
relative to $\Sigma_w$
and its maximal view angle
are uniquely determined, which are respectively denoted by 
$R_i(a_i)\in SO(3) := \{R\in {\mathbb R}^{3\times 3}|\ 
R^TR = I_3,\ \det(R) = +1\}$ and $\bar{\beta}_i(a_i)$.
%are denoted by 
%$R_i(a_i)\in SO(3)$ and  respectively.
The position of the origin of $\Sigma_i$ relative to $\Sigma_w$
is also denoted by $p_i \in {\mathbb R}^3$.
Then, the pose of sensor $v_i$
is represented as $g_i(a_i) = (p_i, R_i(a_i))\in SE(3) := {\mathbb R}^3 \times SO(3)$.
In this paper, we assume that each $v_i\in \V$ is already calibrated
and has knowledge on the pose $g_i(a_i)$
for all $a_i\in \A$, and
$\Q_j = \{q\in {\mathbb R}^3|\ A_{{\rm eq},j}q = 1,\ A_{{\rm ieq},j}q \leq {\bf 1}\}$ 
for all $r_j \in \R$, where ${\bf 1}$ is a vector 
whose elements are all equal to $1$.

We also assume that, when each sensor $v_i\in \V$ takes action $a_i\in \A_i$,
the actions selectable at the next round are constrained by
a subset $\C_i(a_i) \subseteq \A_i$ satisfying
the following assumptions which are in general satisfied in the scenario of this paper.
%%%%%%%%%%%%%%%%%%%%%%%%%%%%%%%%%%%%%%%%%%%%%%%%%%%%%%
% Assumption 1 
%%%%%%%%%%%%%%%%%%%%%%%%%%%%%%%%%%%%%%%%%%%%%%%%%%%%%%
\begin{assumption}{\rm 
\label{ass:1}
The function $\C_i: \A_i \rightarrow 2^{\A_i}$  satisfies:
\begin{itemize}
\item %Reversibility \cite{MAS_SMC09}:
For any $v_i \in \V$, $a_i\in \A_i$ and $a'_i\in \A_i$, 
the inclusion $a'_i\in \C_i(a_i)$ holds iff
$a_i\in \C_i(a'_i)$.
\item %Feasibility \cite{MAS_SMC09}:
For any $v_i \in \V$ and any actions $a_i, a'_i \in {\mathcal 
	   A}_i$, there exists a sequence of actions
$a_i = a_i^1, a_i^2, \cdots, a^{n_f}_i = a_i'$
satisfying $a_i^{\iota} \in \C_i(a_i^{\iota-1})$ for all $\iota \in \{2,\cdots, 
 n_f\}$.
\item For any $v_i\in \V$ and $a_i \in \A_i$, 
the number of elements in $\C_i(a_i)$ 
is greater than or equal to $3$.
\end{itemize}
}
\end{assumption}
%%%%%%%%%%%%%%%%%%%%%%%%%%%%%%%%%%%%%%%%%%%%%%%%%%%%%%%
%The motivation to introduce the constraints will be explained in Sections \ref{sec:4}
%and \ref{sec:6}.

\subsection{Visual Measurements and Communication Structure}

\begin{figure}
\begin{center}
\begin{minipage}[t]{6cm}
\begin{center}
\includegraphics[width=5cm]{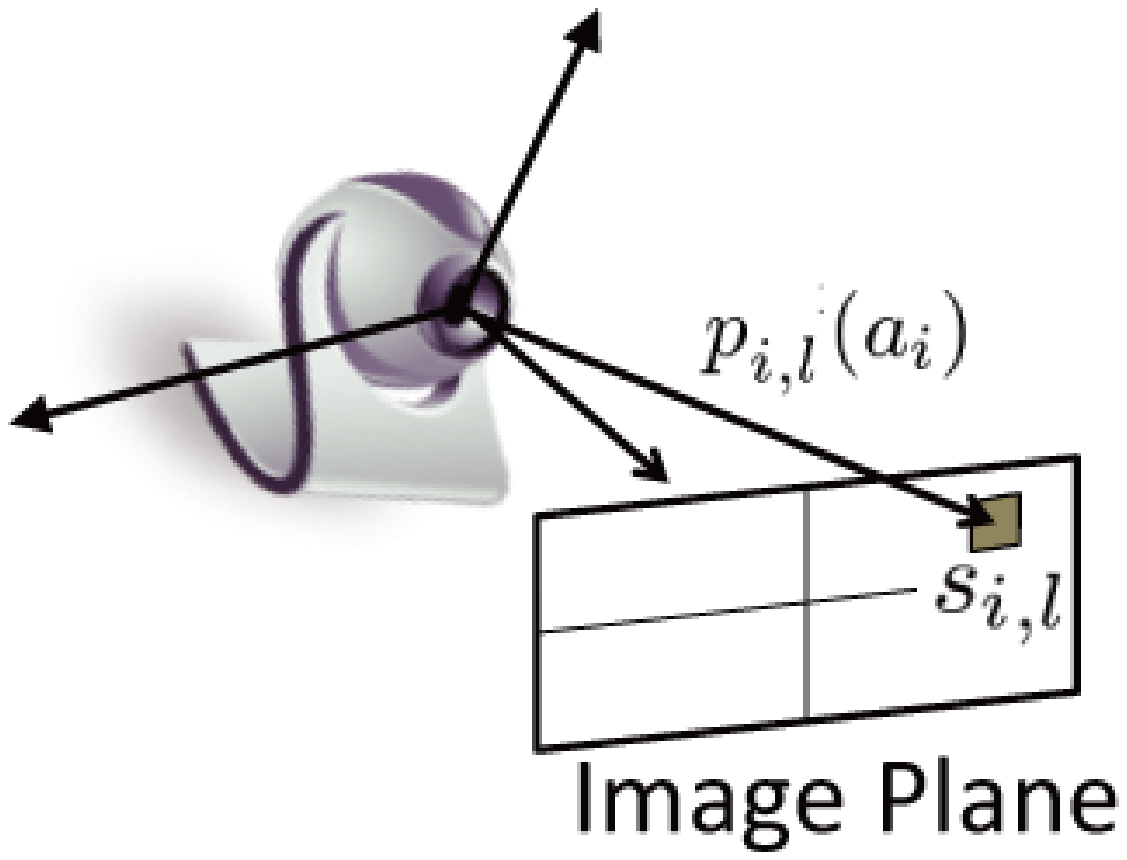}
\caption{Position vector $p_{i,l}(a_i)$ of pixel $s_{i,l}$}
\label{fig:3}
\end{center}
\end{minipage}
\hspace{2cm}
\begin{minipage}[t]{7.5cm}
\begin{center}
\includegraphics[width=7.5cm]{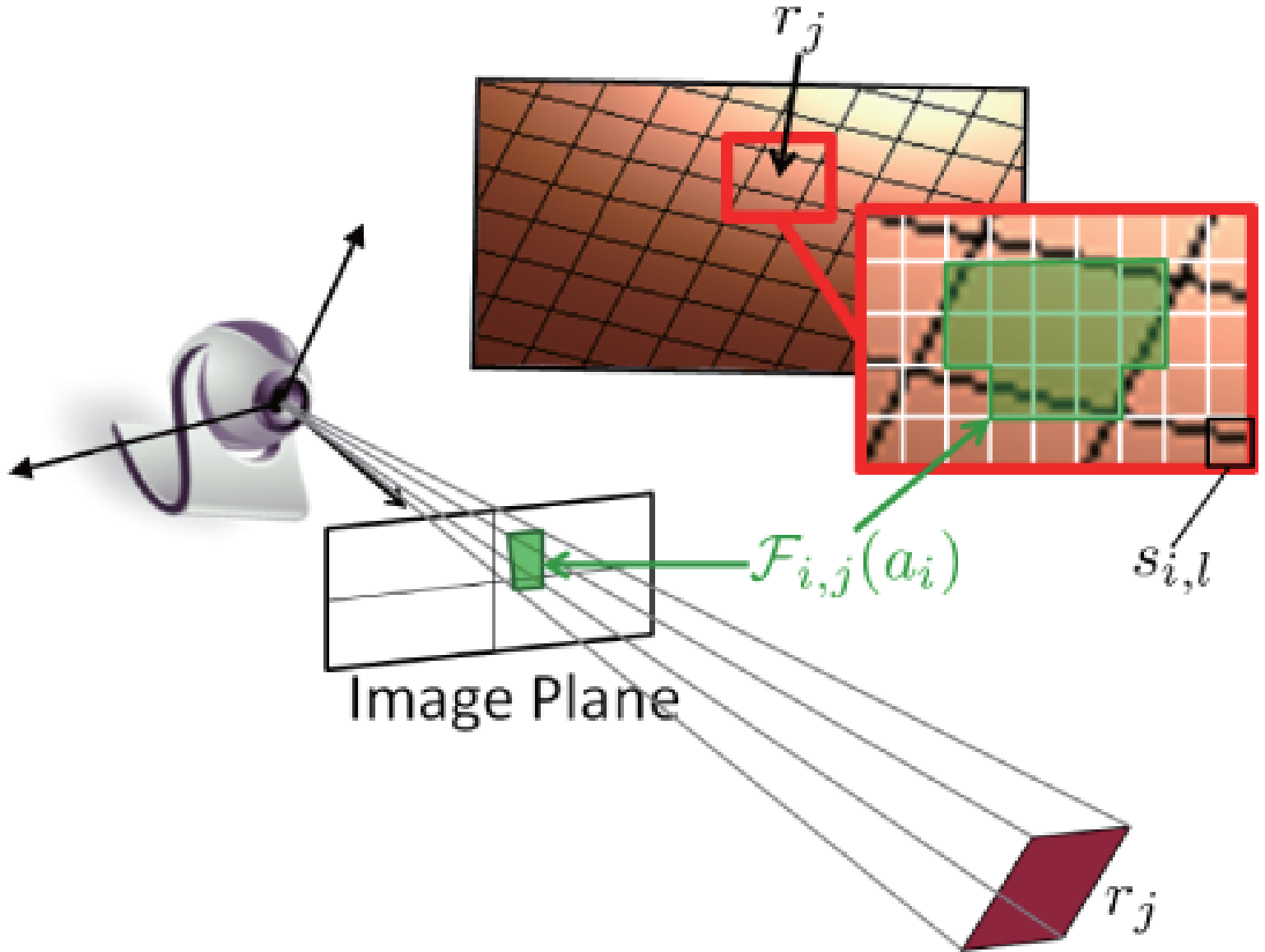}
\caption{Image of set ${\mathcal F}_{i,j}(a_i)$}
\label{fig:4}
\end{center}
\end{minipage}
%\begin{minipage}{4.4cm}
%\begin{center}
%\includegraphics[width=4.4cm]{fig4.eps}
%\caption{Visible Polygons}
%\label{fig:4}
%\end{center}
%\end{minipage}
\end{center}
\end{figure}

This subsection defines visual measurements
of each vision sensor $v_i\in \V$ and communication structures among sensors in $\V$.

Let us first denote the pixels of vision sensor $v_i \in \V$
by ${\mathcal S}_i := \{s_{i, l}|\ l \in \{1, \cdots, S_i\}\}$,
and the position vector of the center of
$s_{i,l}\in {\mathcal S}_i$ relative to $\Sigma_i$ by 
$p_{i,l}(a_i)$ (Fig. \ref{fig:3}).
Then, once an action $a_i$ is fixed, $v_i$
obtains visual measurements (raw data) $y_{i,l}$ for each $s_{i, l} \in {\mathcal S}_i$.
Here, $y_{i,l}$ is a 3D vector for an RGB color image whose elements take integers in
$\{0, \cdots, 255\}$,
and $y_{i,l} \in \{0, \cdots, 255\}$ for a grey-scale image.
Note that each $y_{i,l}$ is provided by either of polygons $r_j \in \R$.

%We next define visibility of each vision sensor $v_i$ with action $a_i$.
%We next define a notion of visible polygons
%from vision sensor $v_i \in \V$. 
%Here, we assume that $r_j$ is visible from 
%$v_i$ if all points in $r_j$ are captured 
%in the image, e.g. only the blue colored
%polygons are visible in case of Fig. \ref{fig:4}.
%Its formal definition is given as follows.
%
%\begin{definition}{(Visibility)}
%\label{def:1}
In this paper, a point $q \in \Q$ is said to be visible
from sensor $v_i \in \V$ with action $a_i\in \A$ if 
\begin{eqnarray}
&&{\rm atan}(\|[b_x\ b_y]\|/b_z)
\leq \bar{\beta}_i(a_i),\ 
[b_x\ b_y\ b_z]^T := R_i^T(a_i)(q-p_i)
\nonumber
\end{eqnarray}
and there exists no pair of 
$q' \in \Q$ and $\alpha \in [0, 1)$ such that
$\alpha (q - p_i) = q' - p_i$.
%
%\begin{comment}
By using the notion, we also define the set of pixels
of $v_i \in \V$ with $a_i \in \A_i$
capturing $r_j \in \R$,
which is denoted by ${\mathcal F}_{i,j}(a_i)$ (Fig. \ref{fig:4}).
Here, a pixel $s_{i,l} \in {\mathcal S}_i$ is a member of ${\mathcal F}_{i,j}(a_i)$
if and only if there exists a visible point $q \in \Q_j$ such that
the point $q$ projected onto the image plane is equal to 
the center of the pixel $s_{i,l}$.
%\begin{eqnarray}
% {\mathcal F}_{i,j}(a_i) = \{s_{i,l} \in {\mathcal S}_i|\ \exists \alpha > 0 \mbox{ and 
%a visible point }
%q \in \Q_j
%\mbox{ s.t. }
%\alpha p_{i,l}(a_i) = R^T_i(a_i)(q - p_i)\}
%\nonumber
%\end{eqnarray}
%\end{comment}
Due to the knowledge of $g_i(a_i)$ and $\Q_j$, each $v_i\in \V$
can obtain ${\mathcal F}_{i,j}(a_i)$ for every $a_i \in \A_i$.
To be precise, a pixel $s_{i,l}$ is included in ${\mathcal F}_{i,j}(a_i)$
if $r_j$ satisfies
\begin{equation}
\alpha_{i,j}(a_i) = \frac{1-A_{{\rm eq},j}p_i}{A_{{\rm eq},j}R_i(a_i)p_{i,l}(a_i)} \in 
(0, \infty),\ 
A_{{\rm ine},j}\left(\alpha_{i,j}(a_i)R_i(a_i)p_{i,l}(a_i) + p_i\right)\leq {\bf 1}
\label{eqn:visibility}
\end{equation}
and the parameter $\alpha_{i,j}(a_i)$ is minimal
among all polygons satisfying (\ref{eqn:visibility}).

In addition, a polygon $r_j \in \R$ is said to be a {\it visible polygon} 
from sensor $v_i \in \V$ with $a_i\in \A_i$ if $|{\mathcal F}_{i,j}(a_i)|\geq 1$,
where $|{\mathcal F}|$ specifies the number of elements
of a finite set ${\mathcal F}$.
We also denote the set of all visible polygons from $v_i$ with 
$a_i$ by $\R_i(a_i) \subseteq \R$.
In addition, when a joint action $a$ is selected,
the set of sensors capturing $r_j$ as a visible polygon
is denoted by $\V_j(a) \subseteq \V$.
%$\A_i = \{\R_i(a_i)\}_{a_i\in \A_i}\in 2^{\R}$.
%Then, the decision parameter of vision sensor $v_i\in \V$ 
%can be simplified as $a_i \in \A_i$,
%which is called an {\it action} of vision sensor $v_i\in \V$.
%

%

We also model the communication structure among sensors 
by an undirected graph $G = (\V,\ \E)$ with $\E \subseteq \V 
\times \V$.
The set of all sensors whose information is available for 
 $v_i\in \V$ is also denoted as 
%\begin{equation*}
$\N_i := \{v_j\in \V|\ (v_i,v_j) \in \E\}$.
%\end{equation*}
In this paper, we use the following assumption.
\begin{assumption}
\label{ass:2}
A pair $(v_{i},v_{i'}),\ i\neq i'$ satisfies $(v_{i},v_{i'}) \in \E$
if there exist $a_{i}\in \A_{i}$, $a_{i'} \in \A_{i'}$ and 
$r_j \in \R$ such that $r_j \in \R_{i}(a_{i}) \cap \R_{i'}(a_{i'})$.
\end{assumption}
This assumption means that if any pair of two sensors 
can capture a common polygon then they need to communicate with each other,
which is essentially similar to \cite{SDKFC_SP11} and the only slight difference in
description stems from whether multi-hop communication is taken into account.

\section{Global Objective and Utility Function}
\label{sec:3}

\subsection{Global Objective Function}
\label{sec:3.1}

Let us formulate the global objective function
$W(a)\in [0,\infty)$ to be maximized by vision sensors $\V$.
For this purpose, we first introduce a function $W_{i,j}(a_i)\in [0,\infty)$
evaluating the value of measurements $\Y_{i,j}(a_i)$  of sensor $v_i \in \V$ about polygon $r_j \in \R$.

Let us assume that the function $W_{i,j}(a_i)$ relies on
(a) how much information $r_j\in \R$ contains, and
(b) quality of the image.
Formally, if the quantitative values of factors (a) and (b) are denoted by
$I^{\rm info}_{i,j}(a_i)\in [0,\infty)$ and 
$I^{\rm qual}_{i,j}(a_i)\in [0,\infty)$ respectively,
the function $W_{i,j}(a_i)$ is described as
\begin{equation}
W_{i,j}(a_i) = \tilde{W}_{i,j}(I^{\rm info}_{i,j}(a_i), I_{i,j}^{\rm qual}(a_i)).
\label{eqn:reward}
\end{equation}
In general, the function $\tilde{W}_{i,j}(\cdot,\cdot)$ is
non-decreasing with respect to
both $I^{\rm info}_{i,j}(a_i)$ and $I^{\rm qual}_{i,j}(a_i)$,
and the equation
$I^{\rm info}_{i,j}(a_i) = I^{\rm qual}_{i,j}(a_i) =0$ holds
if $r_j$ is not visible from $v_i$, i.e. $r_j \notin \R_i(a_i)$.
%and also assume $\tilde{W}_j(0,\cdot) = \tilde{W}_j(\cdot,0) = 0$.

The functions $I^{\rm info}_{i,j}(a_i)$ and $I^{\rm qual}_{i,j}(a_i)\in [0,\infty)$
respectively play roles similar to the density function and the sensing performance function 
in coverage control \cite{CL_EJC05}, \cite{BCM_BK09}. 
However, there are some differences.
Since vision sensors do not provide apparent physical quantity
like temperature or pressure, 
we need to extract the amount of information
contained in the raw data $\Y_{i,j}(a_i)$, which makes the selection 
of $I^{\rm info}_{i,j}(a_i)$ non-trivial.
However, fortunately, there are rich literature on 
information extraction from visual measurements,
and we can freely choose one of them
depending on the targeted scenario.
Some examples will be shown in the next subsection.
%Thus, the quantity is viewed as information amount that sensor 
%$v_i$ is provided by $r_j \in \R_i(a_i)$ when he takes $a_i$.
However, such quantities can be extracted after gaining the visual measurement
and, moreover, the function $I^{\rm info}_{i,j}(a_i)$ is dependent on
the state of the highly uncertain environment regardless of its selection.
Hence, we cannot assume availability of 
the value $I^{\rm info}_{i,j}(a_i)$ prior to execution of $a_i$.
Due to the problem, we will present a solution
using only the past experienced values of $I^{\rm info}_{i,j}(a_i)$
and $I^{\rm qual}_{i,j}(a_i)$ in the subsequent sections.
%prior to action selection, and it becomes available
%only after an action $a_i$ is executed.
%and the visual information $\Y_{i,j}(a_i)$ is obtained.

%In addition, the quality of the image $\Y_{i,j}(a_i)$
%relies not only on the distance from $v_i$ to $r_j$
%but also on the relative pose between them
%and the focal length $\lambda_i$.

%Let us next denote by $I^{\rm qual}_{i,j}(a_i)$ 
%the quality of the visual measurement 
%$\Y_{i,j}(a_i)$.
For visual sensor networks, the quality $I^{\rm qual}_{i,j}(a_i)$ is determined 
not only by the distance between $v_i$ and $r_j$
but also by their relative pose and the focal length $\lambda_i$,
which makes the function complex.
However, since the present solution does not
require to model the function differently from 
coverage control \cite{CL_EJC05}, \cite{BCM_BK09},
it is sufficient to evaluate the quality 
after gaining the image.
For example, using the fraction over the image that $r_j$ occupies as
%\begin{equation}
$I^{\rm qual}_{i,j}(a_i) = f^{\rm qual}(|{\mathcal F}_{i,j}(a_i)|/S_i)$
%\label{eqn:quality}
%\end{equation}
with an increasing function $f^{\rm qual}$
satisfying $f^{\rm qual}(0) = 0$ can be a useful option.
%
%since all of the above factors are reflected.
% by the simple function.

%We next define the total reward 
%provided from environment $r_j$
%to sensor $v_i$ by $w_{i,j}(a_i)\in [0,\infty)$.
%In this paper, the reward $W_{i,j}(a_i) \geq 0$ is assumed to be a function of
%information amount $I^{\rm info}_{i,j}(a_i)$ and image quality $I^{\rm qual}_{i,j}(a_i)$
%as
%

We next consider the reward $W_j(a)\in [0,\infty)$ 
provided from environment $r_j$
to not a single sensor $v_i$ but the visual sensor network $\V$.
%which is not always equal to the summation $\sum_{v_i\in \V}w_{i,j}(a_i)$
%since the value in general decreases as the number of sensors
%capturing $r_j$ i.e. $|\V_j(a)|$.
% with
%
%\begin{equation}
% \V_{j}(a) := \{v_i \in \V|\ r_j \in \R_i(a_i)\}.
%\label{eqn:2.4}
%\end{equation}
%
%increases.
We assume that $W_j(a)$ is a function of
$W_{i,j}(a_i)$ only for vision sensors in $\V_j(a)$ capturing $r_j$ as
%as a visible polygon as
%$(W_{i,j}(a_i))_{v_i \in \V_j(a)}$ as
%for any $r_j \in \R$ as
%
\begin{equation}
W_{j}(a) = \tilde{W}_j((W_{i,j}(a_i))_{v_i \in \V_j(a)})
\label{eqn:total_reward}
\end{equation}
with $W_j(a) = 0$ if $\V_j(a) = \emptyset$
and that vision sensors $\V$
share the information of the function $\tilde{W}_j$.
In the following, we show only two typical selections
of such functions.
The first option is 
\begin{equation}
 W_j(a) = \max_{v_i \in \V_j(a)}W_{i,j}(a_i)
\label{eqn:total_reward1}
\end{equation}
imposing no value on the information $\Y_{i,j}(a_i)$ of sensor $v_i$
if other sensor has better measurements on $r_j$.
The second option is to employ the function
\begin{equation}
 W_j(a) = h\Big(\sum_{v_i \in \V_j(a)}W_{i,j}(a_i)\Big)
\label{eqn:total_reward2}
\end{equation}
for a monotonically increasing concave function $h$ with $h(0) = 0$.
This function weakly accepts the value of the measurement
which is not the best among the sensors.

The goal of this paper is to present a cooperative/distributed
action selection algorithm leading
vision sensors $\V$ to a joint action $a$ maximizing the global objective function defined by
\begin{equation}
 W(a) = \sum_{r_j \in \R}W_j(a)
\label{eqn:social_welfare}
\end{equation}
%
%in a distributed fashion 
under the constraint that
$I^{\rm info}_{i,j}(a_i)$ and $I^{\rm qual}_{i,j}(a_i)$ are available only after 
$v_i$ executes an action $a_i$.

\subsection{Examples of Function $I^{\rm info}_{i,j}$}
\label{sec:3.2}

\begin{figure}
\begin{center}
\begin{minipage}[b]{5cm}
\includegraphics[height=3.3cm]{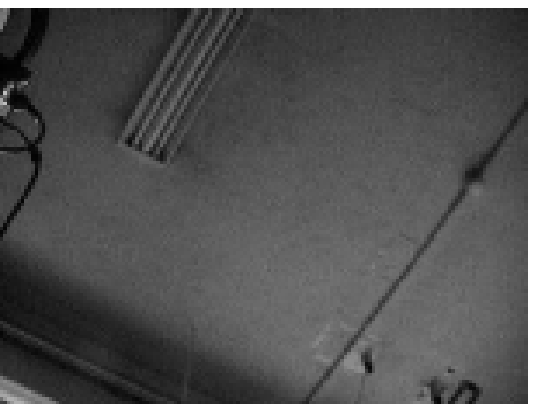}
\caption{Initial image}
\label{fig:5}
\end{minipage}
\hspace{.2cm}
\begin{minipage}[b]{5cm}
\includegraphics[height=3.3cm]{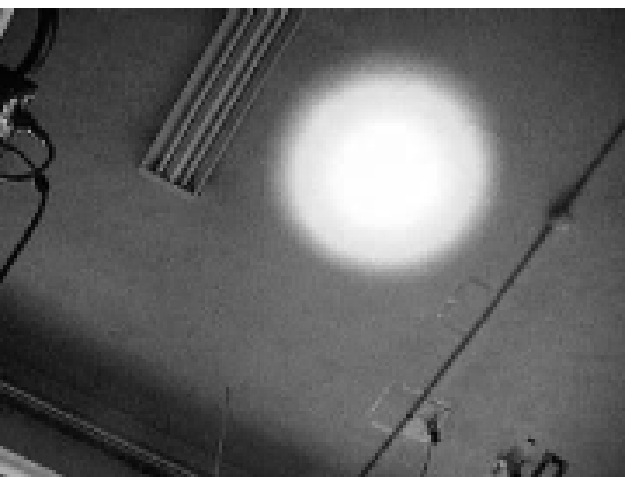}
\caption{Current image}
\label{fig:6}
\end{minipage}
\hspace{.2cm}
\begin{minipage}[b]{5cm}
\includegraphics[height=3.3cm]{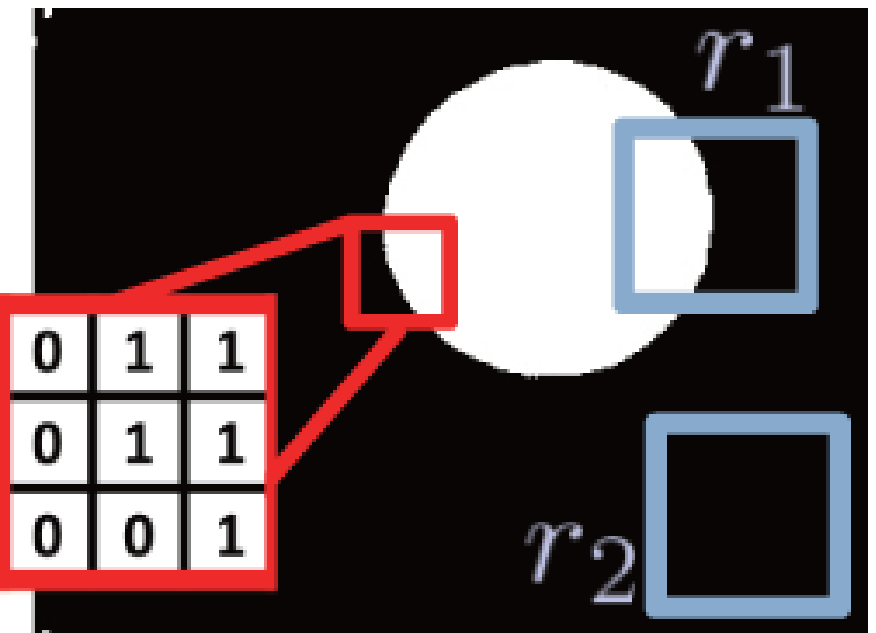}
\caption{Outputs of function $I^{\rm ec}_{i,l}$}
\label{fig:7}
\end{minipage}
%\hspace{.2cm}
%\begin{minipage}[b]{4cm}
%\includegraphics[width=4cm]{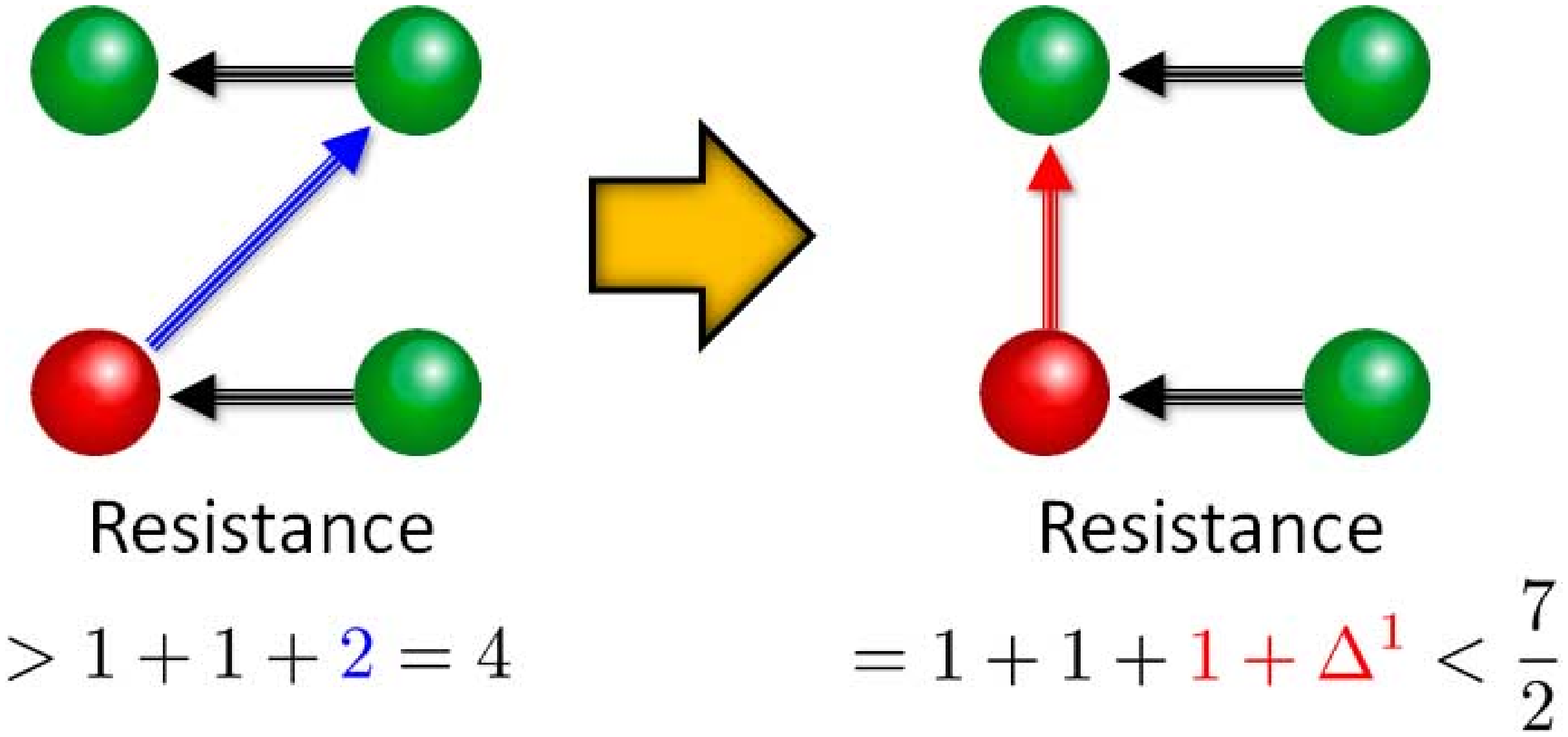}
%\caption{Resources $r_1$ and $r_2$}
%\label{fig:17}
%\end{minipage}
\end{center}
\end{figure}

In this subsection, we will introduce examples of 
the function $I^{\rm info}_{i,j}(a_i)$.

We first consider a normally static environment,
and suppose that we are interested only in 
whether or not each pixel $s_{i,l}$ captures
environmental changes.
The requirement is reflected by 
\begin{equation}
I^{\rm info}_{i,j}(a_i) = \sum_{l \in {\mathcal F}_{i,j}(a_i)} I^{\rm ec}_{i,l}(a_i),\
I^{\rm ec}_{i,l}(a_i) = \left\{
\begin{array}{ll}
0,&\mbox{if } \|y_{i,l} - y^0_{i,l}\|\leq \bar{y}_{\rm threshold}\\
1,&\mbox{if } \|y_{i,l} - y^0_{i,l}\|> \bar{y}_{\rm threshold}
\end{array}
\right.\ l \in {\mathcal F}_{i,j}(a_i),
\label{eqn:info1}
\end{equation}
where $\Y^0_{i,j}(a_i) = \{y^0_{i,l}\}_{l \in {\mathcal F}_{i,j}(a_i)}$
is the stored initial image and $\bar{y}_{\rm threshold}$
is a positive scalar.
%\begin{equation}
% I^{\rm info}_{i,j}(a_i) = \sum_{l \in {\mathcal F}_{i,j}(a_i)} I^{\rm ec}_{i,l}.
%\label{eqn:ec2}
%\end{equation}
A small $I^{\rm info}_{i,j}(a_i)$ means that
no serious event occurs at around $r_j$, while
a large $I^{\rm info}_{i,j}(a_i)$ indicates some 
environmental changes.
For example, suppose that the initial image for an action $a_i$ is given by
the gray scale image in Fig. \ref{fig:5},
and the current measurement with the same $a_i$ is the image in Fig. \ref{fig:6},
where light is shined only in a part of the image.
Then, the outputs of the function $I^{\rm ec}_{i,l}$ with $\bar{y}_{\rm threshold} = 20$
are given as Fig. \ref{fig:7}, where the black and white areas
correspond to $I^{\rm ec}_{i,l} = 0$ and $I^{\rm ec}_{i,l} = 1$, respectively.
If two resources $r_1$ and $r_2$ are captured as in 
Fig. \ref{fig:7}, then $r_1$ including environmental changes
must provide a larger $I^{\rm info}_{i,j}(a_i)$ than $r_2$.

We next consider the situation where a visual sensor network
monitors the sky to help prediction/estimation of
the solar radiation via remote sensing from a satellite \cite{TNHHTPN_JGR11}.
Then, the image data of both the bright blue sky
and the cloud contain little information since such information
can be provided by the low resolution data from a satellite.
Namely, it is desirable for vision sensors to provide images
capturing the borders between blue and cloudy sky.
A metric to measure such amount of information is the image entropy \cite{ZRG_CDC11}.
For example, let us assume that a resource provides Fig. \ref{fig:8},
and the other resource provides Fig. \ref{fig:9}.
Then, the image entropy of Image 1 after a gray-scale processing is equal to 
$7.506$
while that of Image 2 is $6.269$.
As expected, Image 1 containing both of the blue sky and 
the cloud provides a larger entropy.

The other option is to run some existing cloud detection algorithm as in \cite{GUCSCK_AMTD12} 
and to count the number of pixels corresponding to the corner as $I^{\rm info}_{i,j}$.
The output of \cite{GUCSCK_AMTD12} is illustrated by red dots in Figs. \ref{fig:10} and \ref{fig:11},
and Figs. \ref{fig:10} and \ref{fig:11} provide $I^{\rm info}_{i,j} = 1557$
and  $I^{\rm info}_{i,j} = 63$, respectively.

\begin{figure}
\begin{center}
\begin{minipage}{8cm}
\begin{center}
\includegraphics[width=5cm, height=3.375cm]{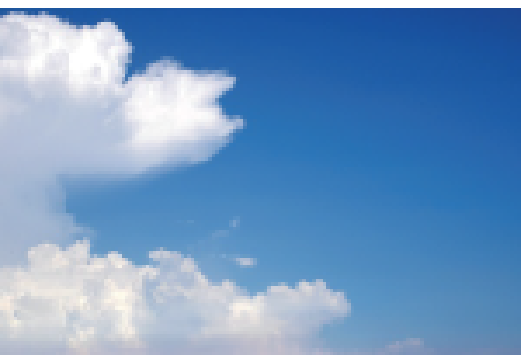}
\caption{Image 1 (entropy $= 7.506$)}
\label{fig:8}
\end{center}
\end{minipage}
\begin{minipage}{8cm}
\begin{center}
\includegraphics[width=5cm, height=3.375cm]{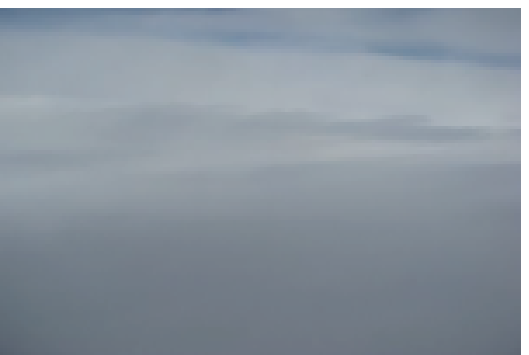}
\caption{Image 2 (entropy $= 6.269$)}
\label{fig:9}
\end{center}
\end{minipage}
\medskip

\begin{minipage}{8cm}
\begin{center}
\includegraphics[width=5cm, height=3.375cm]{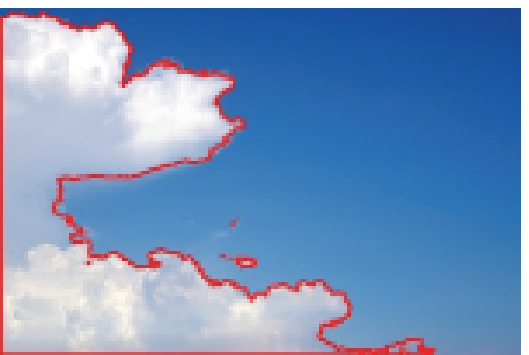}
\caption{Corner detection for Image 1 ($I^{\rm info}_{i,j} = 1557$)}
\label{fig:10}
\end{center}
\end{minipage}
\begin{minipage}{8cm}
\begin{center}
\includegraphics[width=5cm, height=3.375cm]{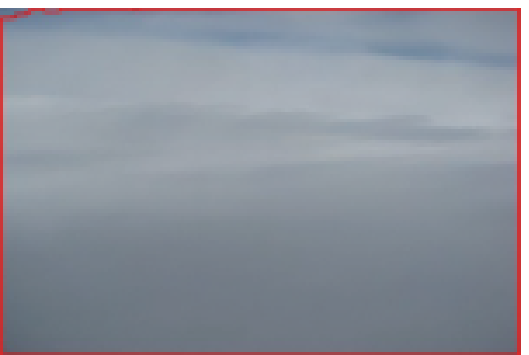}
\caption{Corner detection for Image 2 ($I^{\rm info}_{i,j} = 63$)}
\label{fig:11}
\end{center}
\end{minipage}
\end{center}
\end{figure}

\subsection{Utility Design and Potential Games}
\label{sec:3.3}

We next design a utility function 
$U_i(a)$ which vision sensor $v_i \in \V$ basically tries to maximize.
Here, we use the %determine the function based on the 
marginal contribution utility \cite{MAS_SMC09}, \cite{GMW_PER10}
for the global objective 
(\ref{eqn:social_welfare}) as
\begin{eqnarray}
U_i(a) = W(a) - W^{-i}(a),
\label{eqn:utility}
\end{eqnarray}
where $W^{-i}(a)$ is
equal to the global objective $W$ 
in the case that $v_i$ views no polygon and the other sensors 
take actions $a_{-i}$.
Then, collecting all factors $\V$, $\A$, 
$\{\C_i(\cdot)\}_{v_i\in \V}$, and the utility functions
$\{U_i(\cdot)\}_{v_i\in \V}$ in (\ref{eqn:utility}),
we can define a {\it constrained strategic game}
\begin{equation}
\Gamma = (\V, \A, \{U_i(\cdot)\}_{v_i\in \V},\{\C_i(\cdot)\}_{v_i\in \V}).
\label{eqn:strategiv_game}
\end{equation}

We next introduce the following terminologies.
%%%%%%%%%%%%%%%%%%%%%%%%%%%%%%%%%%%%%%%%%%%%%%%%%%%%%%
% Definition 1
%%%%%%%%%%%%%%%%%%%%%%%%%%%%%%%%%%%%%%%%%%%%%%%%%%%%%%
\begin{definition}[Constrained Potential Games \cite{MAS_SMC09}, \cite{ZM_SIAM13}] \label{def:2}
{\rm A constrained strategic game $\Gamma$
is said to be a {\it constrained potential game} with 
potential function $\phi :{\mathcal A}\rightarrow {\mathbb R}$
if for all $v_i\in \V$,
every $a_{i}\in {\mathcal A}_i$
and every $a_{-i} \in \prod_{i'\neq i}\A_{i'}$,
the following equation holds 
for every $a'_i\in \C_i(a_i)$.}
\begin{equation}
U_i(a'_i,a_{-i})-U_i(a_i,a_{-i})=\phi (a'_i,a_{-i})-\phi (a_i,a_{-i})
\label{eqn:potential_game}
\end{equation}
\end{definition}
%%%%%%%%%%%%%%%%%%%%%%%%%%%%%%%%%%%%%%%%%%%%%%%%%%%%%%%
%We next define the constrained Nash equilibria \cite{MAS_SMC09}, \cite{ZM_SIAM13}.
\begin{definition}[Constrained Nash Equillibria \cite{MAS_SMC09}, \cite{ZM_SIAM13}]
\label{def:3}
{\rm
For a constrained strategic game $\Gamma$,
a joint action $a^*\in \A$ is said to be a {\it constrained
pure Nash equilibrium} if
the equation $U_i(a^*_i,a^*_{-i})=\max_{a_i\in \C_i(a_i^*)}U_i(a_i,a^*_{-i})$
holds for all $v_i \in \V$.
%
%\begin{equation}
%U_i(a^*_i,a^*_{-i})=\max_{a_i\in \C_i(a_i^*)}U_i(a_i,a^*_{-i})
%\label{eqn:2.11}
%\end{equation}
%
}
\end{definition}
Then, it is well known that any constrained 
potential game has at least one Nash equilibrium
and the potential function maximizers must be contained in the set
of Nash equilibria \cite{MAS_SMC09}, \cite{ZM_SIAM13}.
In addition, we have the following lemma
from the feature of the marginal contribution utility.
%as in (\ref{eqn:utility}).
%
\begin{lemma}\cite{GMW_PER10}
\label{lem:1}
The strategic game $\Gamma$ in (\ref{eqn:strategiv_game}) with (\ref{eqn:utility})
constitutes a constrained potential game with potential function $\phi$ equal to
the global objective function $W$.
\end{lemma}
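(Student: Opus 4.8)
The plan is to verify the potential-game condition \eqref{eqn:potential_game} directly, exploiting the special structure of the marginal contribution utility \eqref{eqn:utility}. The key observation is that $W^{-i}(a)$ does not depend on $a_i$: by its definition, $W^{-i}(a)$ is the value of the global objective $W$ in the hypothetical scenario where $v_i$ views no polygon and the remaining sensors play $a_{-i}$, so it is a function of $a_{-i}$ alone. Writing $W^{-i}(a) = W^{-i}(a_{-i})$ makes the argument transparent. Then for any fixed $a_{-i}$, any $a_i \in \A_i$, and any $a'_i \in \C_i(a_i)$, I would substitute \eqref{eqn:utility} into the left-hand side of \eqref{eqn:potential_game}:
\begin{equation*}
U_i(a'_i,a_{-i}) - U_i(a_i,a_{-i}) = \big(W(a'_i,a_{-i}) - W^{-i}(a_{-i})\big) - \big(W(a_i,a_{-i}) - W^{-i}(a_{-i})\big).
\end{equation*}
The two copies of $W^{-i}(a_{-i})$ cancel, leaving exactly $W(a'_i,a_{-i}) - W(a_i,a_{-i})$, which is the right-hand side of \eqref{eqn:potential_game} with $\phi = W$. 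This establishes that $\Gamma$ is a constrained potential game with potential function equal to the global objective $W$.

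First I would make precise the claim that $W^{-i}$ is independent of $a_i$, since this is the only nontrivial point and the entire argument rests on it. The global objective is $W(a) = \sum_{r_j \in \R} W_j(a)$ by \eqref{eqn:social_welfare}, and each $W_j(a)$ depends on the joint action only through $\big(W_{i',j}(a_{i'})\big)_{v_{i'} \in \V_j(a)}$ via \eqref{eqn:total_reward}. In the counterfactual defining $W^{-i}$, sensor $v_i$ is treated as viewing no polygon, so $v_i$ is removed from every set $\V_j$ and its term $W_{i,j}(a_i)$ is stripped out of every aggregation; what remains is assembled entirely from $\big(W_{i',j}(a_{i'})\big)_{v_{i'} \in \V_j(a),\, i' \neq i}$, none of which depends on $a_i$. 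Hence $W^{-i}$ is a genuine function of $a_{-i}$, and the cancellation above is legitimate.

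The main obstacle, such as it is, is purely a matter of bookkeeping in that independence claim rather than any real difficulty: one must be careful that the visibility sets $\V_j(a)$ themselves can change with $a_i$ (since altering $a_i$ alters which polygons $v_i$ captures), so the aggregations in \eqref{eqn:total_reward} are evaluated over index sets that shift as $a_i$ varies. The resolution is that $W^{-i}$ is defined by the fixed counterfactual "$v_i$ views nothing," which fixes $v_i$'s contribution to zero uniformly across all $a_i$; the index set $\V_j \setminus \{v_i\}$ and the values $W_{i',j}(a_{i'})$ for $i' \neq i$ are untouched by $a_i$, so no shifting occurs in the quantity $W^{-i}$ itself. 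With that point settled, the cancellation is exact and \eqref{eqn:potential_game} holds for every admissible deviation, completing the proof; note that the argument in fact holds for all $a'_i \in \A_i$, so the restriction to $a'_i \in \C_i(a_i)$ in the constrained definition is automatically satisfied.
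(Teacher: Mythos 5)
Your proof is correct and is precisely the standard marginal-contribution (``wonderful life'') utility argument that the paper delegates to the citation \cite{GMW_PER10} rather than reproducing: $W^{-i}$ depends only on $a_{-i}$, so it cancels in the utility difference, leaving the difference of $W$. Your care in checking that the shifting index sets $\V_j(a)$ do not spoil the independence of $W^{-i}$ from $a_i$ is exactly the right point to verify, and the observation that the identity holds for all $a'_i\in\A_i$ (not just $a'_i\in\C_i(a_i)$) is also correct.
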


In the remaining part of this section, we clarify a
computation procedure of the utility function $U_i(a)$
after a joint action $a$ is determined.
The quantities $I_{i,j}^{\rm info}$ and $I_{i,j}^{\rm qual}$
%as in (\ref{eqn:2.3}) and (\ref{eqn:2.8})
for any $r_j \in \R_i(a_i)$
must be locally computed %by sensor $v_i\in \V$
since they evaluate the image information $\Y_{i,j}(a_i)$ itself.
From (\ref{eqn:reward}), $W_{i,j}(a_i)$
is also locally computable at $v_i\in \V$.
In addition, we can prove the following lemma.
\begin{lemma}
\label{lem:2}
%Suppose that the graph $G$ satisfies 
Under Assumption \ref{ass:2},
%Then, 
the utility function $U_i(a)$ in (\ref{eqn:utility}) for any fixed $a \in \A$
is uniquely determined for all $v_i\in \V$ if 
the information 
\begin{equation*}
(I^{\rm com}_{i'})_{v_{i'} \in \N_i},\ I^{\rm com}_{i'}
:= (W_{i',j}(a_{i'}),j)_{r_j \in \R_{i'}(a_{i'})}
\end{equation*}
is available for vision sensor $v_i$.
\end{lemma}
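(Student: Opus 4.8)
The plan is to decompose the marginal-contribution utility polygon by polygon and then invoke Assumption \ref{ass:2} to confine the sensors relevant to each polygon to the neighborhood $\N_i$. First I would use (\ref{eqn:utility}), (\ref{eqn:social_welfare}), and the defining property of $W^{-i}$ to write the utility as a sum over the environment:
\begin{equation*}
U_i(a) = W(a) - W^{-i}(a) = \sum_{r_j \in \R} \bigl( W_j(a) - W_j^{-i}(a) \bigr),
\end{equation*}
where $W_j^{-i}(a)$ is the reward (\ref{eqn:total_reward}) evaluated on the reduced collection $(W_{i',j}(a_{i'}))_{v_{i'} \in \V_j(a) \setminus \{v_i\}}$, i.e. the reward $v_i$ would induce from $r_j$ if it viewed no polygon. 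I would then observe that a summand can be nonzero only when $v_i$ itself views $r_j$: if $r_j \notin \R_i(a_i)$ then $v_i \notin \V_j(a)$, so deleting $v_i$ leaves the argument of $\tilde{W}_j$ unchanged and $W_j(a) = W_j^{-i}(a)$. Hence the sum reduces to $r_j \in \R_i(a_i)$, a set $v_i$ knows from its calibration data and the current action $a_i$.

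The key step, which I expect to carry the weight of the argument, is to show that every sensor other than $v_i$ contributing to such a term is a communication neighbor of $v_i$. Fix $r_j \in \R_i(a_i)$ and take any $v_{i'} \in \V_j(a)$ with $i' \neq i$. Then $r_j \in \R_i(a_i) \cap \R_{i'}(a_{i'})$, so the current actions $a_i$ and $a_{i'}$ witness the hypothesis of Assumption \ref{ass:2}; this forces $(v_i, v_{i'}) \in \E$, i.e. $v_{i'} \in \N_i$. Consequently $\V_j(a) \setminus \{v_i\} \subseteq \N_i$ for every $r_j \in \R_i(a_i)$, so no sensor outside the neighborhood can ever co-view a polygon that $v_i$ views.

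Finally I would assemble the data to conclude uniqueness. For each $r_j \in \R_i(a_i)$ the value $W_{i,j}(a_i)$ is locally computable at $v_i$, as already noted before the lemma, while for each co-viewing neighbor the pair $(W_{i',j}(a_{i'}), j)$ appears in the received message $I^{\rm com}_{i'}$; scanning the neighbors' messages for the index $j$ reconstructs exactly the indexed collection $(W_{i',j}(a_{i'}))_{v_{i'} \in \V_j(a)}$, and deleting the $v_i$-entry gives the reduced collection. Since the sensors share the function $\tilde{W}_j$, $v_i$ can evaluate both $W_j(a)$ and $W_j^{-i}(a)$ for every $r_j \in \R_i(a_i)$, and summing yields $U_i(a)$. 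Thus the listed information determines $U_i(a)$ uniquely, which is the claim. The only delicate point is the middle step: it is precisely Assumption \ref{ass:2} that guarantees the communicated neighborhood data is \emph{complete} for the polygons $v_i$ cares about, and without it a distant co-viewing sensor could leave $v_i$ unable to evaluate $\tilde{W}_j$.
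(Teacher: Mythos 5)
Your proposal is correct and follows essentially the same route as the paper's proof: decompose the marginal contribution polygon-by-polygon, reduce the sum to $r_j \in \R_i(a_i)$ via (\ref{eqn:total_reward}), and use Assumption \ref{ass:2} to place every co-viewing sensor in $\N_i$. Your version is merely more explicit (e.g., in separating the locally computed $W_{i,j}(a_i)$ from the neighbors' communicated values), which the paper compresses into the single claim that $\cup_{r_j \in \R_i(a_i)}\V_j(a) \subseteq \N_i$.
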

\begin{proof}
We first define $W_{j}^{-i}(a)$ which is equal to
the value of $W_j(a)$ when  $v_i$ views 
no polygon and the other sensors 
take actions $a_{-i}$.
Then, Equation (\ref{eqn:utility}) implies that
\begin{eqnarray}
U_i(a) =
\sum_{r_j\in \R_i(a_i)}W_j(a) - W_{j}^{-i}(a)
\label{eqn:dist_utility}
\end{eqnarray}
since $W_j(a),\ r_j \notin \R_i(a_i)$ is independent 
of $W_{i,j}(a_i),\ v_i \notin \V_j(a)$ from (\ref{eqn:total_reward}).
%Equation 
(\ref{eqn:total_reward}) and (\ref{eqn:dist_utility}) also mean
$U_i$ %$W_j(a),\ r_j \in \R_i(a_i)$ 
is determined by $\{W_{i',j}(a_{i'})\}_{v_{i'}\in \V_j(a), r_j \in \R_i(a_i)}$.
Assumption \ref{ass:2} implies that
$\cup_{r_j \in \R_i(a_i)}\V_j(a)$ must be included in
$\N_i$ for any $a\in \A$, which completes
the proof.
\end{proof}
Lemma \ref{lem:2} and the knowledge of $\tilde{W}_j$ 
mean that $U_i$
is computable in a distributed fashion in the sense of graph $G$.
More importantly, if $v_i$ just needs to feedback $U_i(a)$ for a fixed joint action $a$,
he has only to locally execute the image processing, 
which is in general the hardest process in the monitoring task,
and to exchange the compacted information $I^{\rm com}_i$ through communication.
%A more formal description of the procedure will be
%shown in the next section.

Hereafter, we use the following assumption, which is not
restrictive since
it is satisfied by just scaling the global objective function $W$ appropriately. 
%%%%%%%%%%%%%%%%%%%%%%%%%%%%%%%%%%%%%%%%%%%%%%%%%%%%%%
% Assumption 2
%%%%%%%%%%%%%%%%%%%%%%%%%%%%%%%%%%%%%%%%%%%%%%%%%%%%%%
\begin{assumption}\label{ass:3}{\rm 
For any $(a, a')$ satisfying 
$a'_i\in \C_i(a_i)$ and $a_{-i}=a'_{-i}$, the inequality
$U_i(a')-U_i(a) < 1/2$ holds
for all $v_i \in \V$.}
\end{assumption}
%%%%%%%%%%%%%%%%%%%%%%%%%%%%%%%%%%%%%%%%%%%%%%%%%%%%%%%

\section{Learning Algorithm}
\label{sec:4}

Since the potential function $\phi$ is equal to the global
objective function $W$ (Lemma \ref{lem:1}), %as shown in Lemma \ref{lem:1}, 
the only remaining task is to design an action selection 
rule determining $a_i(k)$ at each round 
$k\in {\mathbb Z}_+ := \{0, 1, 2, \cdots\}$ such that
the joint action $a(k)$ is eventually led to 
the potential function maximizers.
Note that due to the constraint that
$I^{\rm info}_{i,j}(a_i)$ is available only after 
an action $a_i$ is executed and the communication constraints specified by graph $G$,
$a_i(k)$ must be determined based on the past actions $\{a_i(k')\}_{k'\leq k-1}$,
visual measurements $\{(y_{il}(k'))_{l \in {\mathcal S}_i}\}_{k'\leq k-1}$
and communication messages $\{(I^{\rm com}_{i'}(k'))_{v_{i'} \in \N_i}\}_{k'\leq k}$
from neighbors in $\N_i$.
Now, we see from Lemma \ref{lem:2} that
an algorithm determining $a_i(k)$ based on the past actions and utilities
$\{a_i(k'), U_i(a(k'))\}_{k'\leq k-1}$ meets the requirement,
and such algorithms are called {\it payoff-based learning}
\cite{MS08_GEB12}, \cite{ZM_SIAM13}.

In this paper, we present a learning algorithm
{\it Payoff-based Inhomogeneous Partially Irrational Play (PIPIP)}
based on the algorithm in \cite{ZM_SIAM13}.
In the algorithm, 
every vision sensor $v_i\in \V$ chooses his own action $a_i(k)$ {\it concurrently}
at each round $k\in {\mathbb Z}_+$ using only the past two
actions $a_i(k-2), a_i(k-1)$ and
utilities $U_i(a(k-2)), U_i(a(k-1))$
stored in memory based on the policies called {\it exploration},
{\it exploitation} and {\it irrational decision}.

Initially, each sensor $v_i \in \V$ executes an action $a_i$ 
randomly (uniformly) chosen from $\A_i$ and feedbacks the resulting utility $U_i(a)$.
Then, set $a_i(0) = a_i(1) = a_i$ and $U_i(a(0)) = U_i(a(1)) = U_i(a)$.
%Then, each vision sensor compares the values of $U_i(a(k-1))$ and $U_i(a(k-2))$.
At round $k\geq 2$, if $U_i(a(k-1))\geq U_i(a(k-2))$ holds, then
every sensor $v_i \in \V$ chooses action $a_i(k)$ concurrently according to the rule:
\begin{itemize}
\item ({\it exploration}) $a_i(k)$ is randomly chosen from $\C_i(a_i(k-1))\setminus \{a_i(k-1)\}$
with probability $\varepsilon$, 
%which is called {\it exploration}.
\item ({\it exploitation}) $a_i(k)=a_i(k-1)$ with probability $1-\varepsilon$,
%which is called {\it exploitation}
\end{itemize}
where the parameter $\ve \in (0, 1/2]$ is called an {\it exploration rate}. 
Otherwise ($U_i(a(k-1))< U_i(a(k-2))$), action $a_i(k)$ is chosen
according to the rule:
\begin{itemize}
\item ({\it exploration}) $a_i(k)$ is randomly chosen from $\C_i(a_i(k-1))
\setminus \{a_i(k-1), a_i(k-2)\}$  with probability $\varepsilon$,
%which is called an ({\it exploration}).
\item ({\it exploitation}) $a_i(k)=a_i(k-2)$ with probability 
$(1-\varepsilon)(1-\kappa\varepsilon^{\Delta_i(k)})$,
where $\Delta_i(k) := U_i(a(k-2))-U_i(a(k-1))$ and $\kappa \in [0, 1/2]$,
\item ({\it irrational decision}) $a_i(k) = a_i(k-1)$
with probability $(1-\varepsilon)\kappa \varepsilon^{\Delta_i(k)}$.
%which is called {\it irrational decision}.
%\item  $a_i(k)=a_i(k-2)$ with probability 
%$(1-\varepsilon)(1-\kappa\varepsilon^{\Delta_i(k)})$, 
%which is called {\it exploration}.
\end{itemize}
It is clear under the third item of Assumption \ref{ass:1} that 
the action $a_i(k)$ is well-defined.

%%%%%%%%%%%%%%%%%%%%%%%%%%%%%%%%%%%%%%%%%%%%%%%%%%%%%%
% Algorithm 1
%%%%%%%%%%%%%%%%%%%%%%%%%%%%%%%%%%%%%%%%%%%%%%%%%%%%%%
\begin{algorithm}[t]
\caption{Payoff-based Inhomogeneous Partially Irrational Play}
\label{alg:1}
\begin{algorithmic}
\STATE {\bf Initialization:}
Action $a_i$ is chosen randomly from $\A_i$.
Set $a_i^1 \leftarrow a_i,\  a_i^2\leftarrow a_i,\
U_i^1\leftarrow U_i(a),\ U_i^2 \leftarrow U_i(a)$,
$\Delta_i \leftarrow 0$ for all $v_i\in \V$ and $k \leftarrow 2$.
\STATE {\bf Step 1:} Update $\ve$, if necessary. 
\STATE {\bf Step 2:} 
If $U_i^1\geq U_i^2$, then 
\begin{equation*}
 a_i^{\rm tmp} \leftarrow \left\{
\begin{array}{l}
{\rm rnd}(\C_i(a_i^1)\setminus \{a_i^1\}), \mbox{ with probability } 
\varepsilon\\
a_i^1,\mbox{ with probability }1-\varepsilon
\end{array}
\right..
\end{equation*}
Otherwise, 
\begin{equation*}
 a_i^{\rm tmp} \leftarrow \left\{
\begin{array}{l}
{\rm rnd}(\C_i(a_i^1)\setminus \{a_i^1,a_i^2\}), \mbox{ with probability } 
\varepsilon\\
a_i^1,\mbox{ with probability }(1-\varepsilon)(\kappa\cdot \varepsilon^{\Delta_i})\\
a_i^2,\mbox{ with probability }
(1-\varepsilon)(1-\kappa\cdot \varepsilon^{\Delta_i})
\end{array}
\right..
\end{equation*}
\STATE {\bf Step 3:} Execute the selected action $a_i^{\rm tmp}$.
\STATE {\bf Step 4:} Compute Utility $U_i(a^{\rm tmp})$ by the following procedure:
\STATE\ \  {\bf Step 4.1:} Feedback the visual measurements $\{y_{i,l}\}_{s_l \in {\mathcal S}_i}$, extract
$(I^{\rm info}_{i,j}, I^{\rm qual}_{i,j})$ from the

\ \ \ \ \ \ \ \ \ \ \ \ \ \ measurements, and calculate $W_{i,j}$
for all $r_j \in \R_i(a_i^{\rm tmp})$.
\STATE\ \  {\bf Step 4.2:} Set $I^{\rm com}_i \leftarrow (W_{i,j}(a_i^{\rm tmp}), j)_{r_j \in \R_i(a_i^{\rm tmp})}$
and send it to $\N_i$. 
\STATE\ \  {\bf Step 4.3:} Receive $(I^{\rm com}_{i'})_{v_{i'} \in \N_i}$ and compute
utility $U_i(a^{\rm tmp})$.
\STATE {\bf Step 5:} Set
$a_i^2 \leftarrow a^1_i,\ a_i^1 \leftarrow a_i^{\rm tmp},\ U_i^2 \leftarrow U_i^1,\
U_i^1 \leftarrow U_i(a^{\rm tmp})$ and 
$\Delta_i \leftarrow U_i^2 - U_i^1$.
\STATE {\bf Step 6:} 
$k \leftarrow k+1$ and go to {\bf Step 1}. 
\end{algorithmic}
\end{algorithm}
%%%%%%%%%%%%%%%%%%%%%%%%%%%%%%%%%%%%%%%%%%%%%%%%%%%%%%

Finally, each $v_i$ executes the selected action $a_i(k)$ and computes
the resulting utility $U_i(a(k))$ 
%Since PIPIP utilizes only the past 
%utilities at $k-2$ and $k-1$ for choosing $a_i(k)$,
%it is implemented 
by the procedure stated at the end of Section \ref{sec:3}.
%via feedbacks from environment and neighboring sensors.
At the next round, sensors repeat the same procedure.

The procedure of PIPIP associated with sensor $v_i \in \V$ 
is compactly described in Algorithm \ref{alg:1},
where the function ${\rm rnd}({\mathcal A}')$ outputs an action
chosen from the set ${\mathcal A}'$ according to the uniform distribution.
The important feature of PIPIP is to allow
vision sensors to make the irrational decisions at {\bf Step 2}.
Indeed, Algorithm \ref{alg:1} with $\kappa = 0$
is the same as the algorithm in \cite{ZM_SIAM13}.

%Then, we first obtain the following theorem.
%In PIPIP, the parameter $\varepsilon(k)$ is updated by (\ref{eqn:epsilon})
% to prove the above theorem, which is the same as DISL.
%However, this update rule takes long time to
%reach a sufficiently small $\varepsilon(k)$ when
%the size of the game, i.e. $n(D+1)$ is large.
%Thus, from the practical point of view, we might 
%have to decrease $\varepsilon(k)$ based on heuristics
% or use PHPIP with a sufficiently small $\varepsilon$.
%Even in such cases, the following theorem at least holds 
%similarly to the paper \cite{MYAS_SIAM09}.
%%%%%%%%%%%%%%%%%%%%%%%%%%%%%%%%%%%%%%%%%%%%%%%%%%%%%%
% Theorem 2
%%%%%%%%%%%%%%%%%%%%%%%%%%%%%%%%%%%%%%%%%%%%%%%%%%%%%%
Let us first consider Algorithm \ref{alg:1} 
with a constant $\varepsilon \in (0,1/2]$
skipping {\bf Step 1}, which is called 
{\it Payoff-based Homogeneous Partially Irrational Play (PHPIP)} in this 
paper.
Then, we have the following theorem, which will be proved
in the next section.
\begin{theorem}
\label{thm:2}
{\rm 
Consider a constrained strategic game 
$\Gamma$ in (\ref{eqn:strategiv_game}) with (\ref{eqn:utility})
satisfying Assumptions \ref{ass:1} and \ref{ass:3},
and sensors following PHPIP.
Then, given any probability $p<1$, %(this value becomes large as $\varepsilon$ decreases),
if the exploration rate $\varepsilon$ is sufficiently small, 
for all sufficiently large $k$, the following equation holds.
\begin{equation}
{\rm Prob} \left[a(k)\in \arg\max_{a\in \A} \phi(a)\right] > 
 p.
 \label{eqn:prob_pve}
\end{equation}}
%\in {\mathbb Z}_+
\end{theorem}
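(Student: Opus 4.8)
The plan is to recognize the dynamics as a finite Markov chain on pairs of joint actions and to apply the theory of regular perturbed Markov processes (Young's resistance-tree / stochastic-stability machinery). Since each $U_i$ is a fixed deterministic function of the joint action $a$, the decision rule at round $k$ depends only on $z(k-1):=(a(k-2),a(k-1))$; hence $\{z(k)=(a(k-1),a(k))\}$ is a time-homogeneous Markov chain on the reachable subset of $\A\times\A$, whose one-step transition probabilities $P_\varepsilon$ depend continuously on $\varepsilon$. First I would verify that $P_\varepsilon$ is a regular perturbation: for $\varepsilon\in(0,1/2]$ the chain is irreducible (explorations together with the connectivity in Assumption~\ref{ass:1} link all states) and aperiodic (joint exploitation at a diagonal state $(a,a)$ is a self-loop), and every realizable transition satisfies $P_\varepsilon(z,z')=\Theta(\varepsilon^{r(z,z')})$ for a well-defined resistance $r(z,z')\ge 0$ read off from the exploration factor $\varepsilon$ and the irrational-decision factor $\varepsilon^{\Delta_i}$.

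Next I would identify the recurrent classes of the unperturbed ($\varepsilon=0$) process, which is deterministic: each coordinate keeps $a_i(k-1)$ if $U_i(a(k-1))\ge U_i(a(k-2))$ and otherwise reverts to $a_i(k-2)$. The diagonal states $(a,a)$ are fixed points. To show these are the only recurrent states I would use the Lyapunov function $V(z)=|\{i:a_i(k)\neq a_i(k-1)\}|$: a coordinate that already agrees stays fixed, so the disagreement set can only shrink and $V$ is non-increasing; were $V$ to stabilize at a positive value, the surviving disagreeing coordinates would have to revert at every step, forcing a period-two action orbit along which each such $U_i$ both strictly decreases and returns to its value two steps later, a contradiction. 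Hence every orbit reaches a diagonal, and the recurrent classes are exactly $\{(a,a):a\in\A\}$, which I identify with $\A$.

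The crux is computing the resistances between adjacent recurrent states and relating them to $\phi$. For $b=(b_j,a_{-j})$ with $b_j\in\C_j(a_j)$, the cheapest route from $(a,a)$ to $(b,b)$ is: player $j$ explores to $b_j$ while the others exploit (resistance $1$), reaching $(a,b)$; then, because $b$ and $a$ differ only in coordinate $j$, every other player's memory slots already agree and they settle at no cost, while player $j$ keeps $b_j$ for free if $U_j(b)\ge U_j(a)$ and otherwise must retain it through an irrational decision of probability $\Theta(\varepsilon^{\Delta_j})$ with $\Delta_j=U_j(a)-U_j(b)$. Invoking the potential-game identity of Lemma~\ref{lem:1}, $U_j(b)-U_j(a)=\phi(b)-\phi(a)$, so
\begin{equation*}
r(a\to b)=1+\max\{\phi(a)-\phi(b),\,0\},\qquad r(a\to b)-r(b\to a)=\phi(a)-\phi(b).
\end{equation*}
Here Assumption~\ref{ass:3} is essential: it forces every such single-coordinate resistance into $[1,3/2)$, strictly below the resistance $\ge 2$ of any transition changing two or more coordinates at once, so minimum-resistance spanning trees may be taken to use only one-player transitions, to which the displayed relation applies. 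Finally I would run the standard tree-surgery argument: reversing, in a minimum tree rooted at $y$, the unique path to another state $x$ produces a tree rooted at $x$ whose resistance changes by the telescoping sum $\phi(y)-\phi(x)$; this yields $\gamma(x)+\phi(x)=\gamma(y)+\phi(y)$ for all recurrent $x,y$, where $\gamma$ is the stochastic potential, so $\arg\min\gamma=\arg\max\phi$. By Young's theorem the stochastically stable states are thus exactly the diagonal maximizers $\{(a^*,a^*):a^*\in\arg\max_a\phi(a)\}$, and the stationary distribution $\mu_\varepsilon$ concentrates on them as $\varepsilon\to0$. Since $a(k)$ is the second component of $z(k)$ and the chain is ergodic for each fixed $\varepsilon$, choosing $\varepsilon$ small makes $\mu_\varepsilon(\{z:z_2\in\arg\max\phi\})>p$, and passing to the limit in $k$ gives ${\rm Prob}[a(k)\in\arg\max_a\phi(a)]>p$.

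I expect the main obstacle to be the resistance bookkeeping of the third paragraph: one must argue cleanly that only the deviating player's keep/revert decision carries resistance (the non-deviating players' slots coincide, so their choices are free), justify via Assumption~\ref{ass:3} that multi-coordinate transitions never enter a minimum tree, and confirm that the reduced resistances between single-coordinate-adjacent diagonals obey the exact relation $r(a\to b)-r(b\to a)=\phi(a)-\phi(b)$ required for the telescoping. Everything else, namely the Markov reformulation, the $V$-based recurrence argument, and the tree-surgery step, is comparatively routine once these resistances are in hand.
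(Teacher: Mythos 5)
Your proposal follows essentially the same route as the paper: model $z(k)=(a(k-1),a(k))$ as a regularly perturbed Markov chain (the paper's Lemma \ref{lem:A1}), show the recurrent classes of the unperturbed chain are exactly $\diag(\A)$ (Lemma \ref{lem:A2} -- your Lyapunov argument on the number of disagreeing coordinates is a cleaner packaging of the paper's induction on $n$, and it is correct), compute the resistance of single-coordinate ``straight routes'' as $1+\max\{\phi(a)-\phi(b),0\}$ using the potential identity of Lemma \ref{lem:1} (Lemmas \ref{lem:A3}--\ref{lem:A4}), restrict minimal resistance trees to such edges, and finish with the standard tree-surgery/telescoping argument and Young's theorem (Lemma \ref{lem:3}). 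The resistance bookkeeping in your third paragraph matches the paper's Appendices C and D, including the role of Assumption \ref{ass:3} in forcing these resistances into $[1,3/2)$ against the $\geq 2$ cost of multi-coordinate transitions.

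The one substantive step you assert rather than prove is exactly where the paper does its hardest work: the claim that a minimum-resistance tree ``may be taken to use only one-player transitions'' (the paper's Lemma \ref{lem:A5}). This is not routine. Every edge in $\E_s$ being cheaper than every edge in $\E_d$ does not by itself exclude $\E_d$ edges from the minimal in-tree: a crude count gives an all-$\E_s$ tree of resistance up to $\tfrac{3}{2}(|\A|-1)$ versus at least $|\A|$ for a tree with one $\E_d$ edge, so no contradiction; and the naive exchange (delete the expensive outgoing edge of a node and reattach it via a cheap $\E_s$ edge) can fail because every $\E_s$-neighbor of that node may be one of its descendants in the tree, creating a cycle. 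The paper resolves this by running the Chu--Liu/Edmonds algorithm on the weight-reversed graph and checking, using the strong connectivity of the $\E_s$-graph from Assumption \ref{ass:1} together with the gap between $[1,3/2)$ and $[2,\infty)$, that the cycle-contraction step never selects an $\E_d$ edge. You correctly flag this as the main obstacle, but an argument is needed here for the telescoping identity $\chi(\rho)-\chi(\rho')=\phi(x)-\phi(y)$ to apply to the tree path. A smaller omission of the same kind: you take for granted that the straight route is the \emph{minimal}-resistance path between $\E_s$-adjacent diagonal states (needed because the edge weight in $G_R$ is a minimum over all connecting paths); the paper proves this in the second half of Appendix C by showing any cheaper path admits at most one exploration and is then forced to coincide with the straight route.
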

%%%%%%%%%%%%%%%%%%%%%%%%%%%%%%%%%%%%%%%%%%%%%%%%%%%%%%
Theorem \ref{thm:2} ensures that 
the optimal actions maximizing the global objective $W$
are eventually selected with high probability (Lemma \ref{lem:1})
if the exploration rate $\varepsilon$ is sufficiently small.
%In addition, 
%it is guaranteed %by the theorem 
%that when $\varepsilon(k)$ is gradually decreased at {\bf Step 1}
%based on heuristics\cite{MAS_SMC09},
%the potential function maximizers are eventually selected
%with high probability if the final value 
%of $\varepsilon(k)$ is sufficiently small.
However, it is difficult to reveal a quantitative relation between the probability $p$
and $\ve$ in (\ref{eqn:prob_pve}).

%However, Theorem \ref{thm:2} does not 
%assure asymptotic convergence of the probability to $1$
%in a strict sense. 
%irrespective of the decay rate of $\varepsilon(k)$.
%Though we see through simulations that the decay rate 
%can accererate convergence rate of the joint action 
%to desirable equilibria in some scenarios, but an
%appropriate rate strongly depends on the situation and 
%its rational selection is currently open.

%This allows one to take account of physical constraints
%of PTZ cameras.
%More importantly, constraints can be useful as a free design parameter
%from a practical point of view.
%Indeed, depending on the scenarios,
%persistent possibility of explorations toward all elements of $\A_i$
%can lead to volatile behavior of the global objective function
%in the practical use of the learning algorithm.
%In such a situation, adding some virtual constraints works
%for stabilizing the evolution.

%Similarly to DISL \cite{TV_SP11}, 
We next consider Algorithm \ref{alg:1} with
the following update rule of $\varepsilon$ at {\bf Step 1}
similarly to \cite{ZM_SIAM13}.
\begin{equation}
\varepsilon(k)=k^{-\frac{1}{n(D+1)}},
\label{eqn:epsilon}
\end{equation}
where $D$ is defined as $D:={\max}_{v_i\in \V} D_i$ and
$D_i$ is the minimal number of steps required for
transitioning between any two actions of  $v_i$.
Then, the following theorem holds. %, which will be also proved in the next section.
%%%%%%%%%%%%%%%%%%%%%%%%%%%%%%%%%%%%%%%%%%%%%%%%%%%%%%
% Theorem 1
%%%%%%%%%%%%%%%%%%%%%%%%%%%%%%%%%%%%%%%%%%%%%%%%%%%%%%
\begin{theorem}\label{thm:1}{\rm 
Consider a constrained strategic game 
$\Gamma$ in (\ref{eqn:strategiv_game}) with (\ref{eqn:utility})
satisfying Assumptions \ref{ass:1} and \ref{ass:3}.
Suppose that each vision sensor obeys Algorithm \ref{alg:1}
with (\ref{eqn:epsilon}).
Then, if the parameter $\kappa$ is chosen so as to satisfy
\begin{equation}
\kappa\in \Big(\frac{1}{C - 1}, \frac{1}{2}\Big],\ 
C := \max_{v_i\in \V}\max_{a_i\in {\mathcal A}_i}|\C_i(a_i)|,
\label{eqn:kappa}
\end{equation}
the following equation holds.
\begin{equation}
\lim_{k\rightarrow \infty}{\rm Prob}\left[a(k)\in 
\arg\max_{a\in \A} \phi(a)\right]=1
\label{eqn:convergence}
\end{equation}
%
%is true, where $z(k) := (a(k-1),a(k))$ and
%$\diag(\A') = \{(a,a)\in \A \times \A|\ a\in {\mathcal A}'\},\ \A' \subseteq \A$.}
}
\end{theorem}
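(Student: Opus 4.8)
The plan is to recast Algorithm \ref{alg:1} as a finite inhomogeneous Markov chain and to obtain (\ref{eqn:convergence}) from a strong-ergodicity argument of simulated-annealing type, using the frozen-chain stochastic stability already encoded in Theorem \ref{thm:2}. First I would identify the Markov state. Since every sensor's update at round $k$ depends only on the two most recent joint actions $a(k-2),a(k-1)$ and the associated utilities, and each $U_i$ is a deterministic function of the joint action by Lemma \ref{lem:2}, the pair $z(k):=(a(k-1),a(k))\in\A\times\A$ is a Markov state, and the transition $z(k-1)\mapsto z(k)$ is governed by a stochastic matrix $P_{\ve(k)}$ on the finite space $\A\times\A$. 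Thus PHPIP with a fixed $\ve$ gives a time-homogeneous chain $P_\ve$, while PIPIP with (\ref{eqn:epsilon}) gives the inhomogeneous chain $P(k):=P_{\ve(k)}$, and the goal is the convergence of the law of $z(k)$.

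The second step is to record the frozen-chain behaviour as $\ve\to0$. Reading the probabilities in Step~2 of Algorithm \ref{alg:1}, an exploration move has resistance $1$, the reverting and staying (exploitation) moves have resistance $0$, and the irrational-decision move that retains an action causing a utility drop has resistance $\Delta_i=U_i(a(k-2))-U_i(a(k-1))$. Because $\Gamma$ is a potential game (Lemma \ref{lem:1}), these utility drops equal the corresponding drops of $\phi$ by (\ref{eqn:potential_game}), so resistances accumulated along paths are $\phi$-valued. Verifying that $P_\ve$ is a regular perturbed Markov process and applying the rooted-resistance-tree characterization of stochastic stability, I would conclude that $P_\ve$ has a unique stationary distribution $\mu_\ve$ whose limit $\mu_0:=\lim_{\ve\to0}\mu_\ve$ is supported exactly on the diagonal states $(a^*,a^*)$ with $a^*\in\arg\max_a\phi(a)$; this is the content behind Theorem \ref{thm:2}. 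The condition (\ref{eqn:kappa}) on $\kappa$ enters precisely here, guaranteeing that the irrational branch always receives positive weight and that no state outside $\arg\max_a\phi$ can root a minimum-resistance tree, so that the maximizers, and not merely the Nash equilibria, are selected.

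The third step passes to $P(k)$ and establishes strong ergodicity, from which (\ref{eqn:convergence}) follows since $\mu_0$ charges only maximizers. I would verify the two standard sufficient conditions. For weak ergodicity, Assumption \ref{ass:1} ensures every joint action is reachable from every other within at most $D$ elementary moves, so any state of $\A\times\A$ is reachable from any other within a block of about $D+1$ rounds, during which the $n$ sensors perform at most $n(D+1)$ exploration events; hence the block transition probability is bounded below by order $\ve(k)^{\,n(D+1)}$, and summing the Dobrushin contraction over successive blocks reduces weak ergodicity to the divergence $\sum_k \ve(k)^{\,n(D+1)}=\sum_k k^{-1}=\infty$, which is exactly why the exponent $1/(n(D+1))$ is chosen in (\ref{eqn:epsilon}). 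For the second condition I would use that $\mu_\ve$ is a rational function of $\ve$ with finite limit $\mu_0$, so that $\|\mu_{\ve(k+1)}-\mu_{\ve(k)}\|$ is $O(|\ve(k+1)-\ve(k)|)$ and the increments are summable; weak ergodicity together with this summability yields strong ergodicity, so the law of $z(k)$ converges to $\mu_0$ and $\mathrm{Prob}[a(k)\in\arg\max_a\phi(a)]\to1$.

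The main obstacle I anticipate is the quantitative weak-ergodicity estimate: producing a uniform lower bound on the block transition probability that exposes the correct power of $\ve(k)$, while correctly accounting for the concurrent moves of all $n$ sensors, the exclusion sets $\{a_i^1\}$ and $\{a_i^1,a_i^2\}$ in the exploration branch, and the $\Delta_i$-dependent irrational-decision weights, and then matching that power against the cooling rate so the series diverges. By contrast, the resistance-tree bookkeeping of the second step, though it is conceptually the heart of why the maximizers are selected, is largely a careful adaptation of the stochastic-stability arguments already available for \cite{ZM_SIAM13} and \cite{MS08_GEB12}, with the $\phi$-valued resistances supplied by the potential-game identity.
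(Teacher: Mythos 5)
Your proposal follows essentially the same route as the paper's proof: the Markov state $z(k)=(a(k-1),a(k))$, stochastic stability of the frozen chain delivered by Theorem \ref{thm:2}, and strong ergodicity obtained from weak ergodicity via the block transition lower bound of order $\ve(k)^{n(D+1)}$ matched to the cooling schedule (\ref{eqn:epsilon}), together with summability of the stationary-distribution increments. The only notable difference is one of bookkeeping: the paper invokes the lower bound $\kappa>1/(C-1)$ in the weak-ergodicity step, to guarantee the per-step bound $P^{\ve}_{z^1z^2}\geq(\ve(k)/(C-1))^{n}$ for the exploitation and irrational-decision branches in (\ref{eqn:conv_ep}), whereas in the resistance-tree step only $\kappa>0$ is needed; also, for condition {\bf (B3)} the standard justification is eventual monotonicity (hence bounded variation) of each component of $\mu(\ve)$ rather than a Lipschitz bound, since the stationary distribution involves terms $\ve^{\Delta_i}$ with $\Delta_i<1$ whose derivatives blow up as $\ve\to 0$.
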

%%%%%%%%%%%%%%%%%%%%%%%%%%%%%%%%%%%%%%%%%%%%%%%%%%%%%%
From Lemma \ref{lem:1},
Equation (\ref{eqn:convergence}) means that the probability that 
vision sensors take 
one of the global objective function maximizers converges to $1$.

The statement of Theorem \ref{thm:2} is compatible with \cite{MS08_GEB12}.
The contribution of PIPIP is simplicity of the action selection rule
and the convergence result in Theorem \ref{thm:1} similarly to \cite{ZM_SIAM13}.
Meanwhile, the concurrent version of the algorithm 
in \cite{ZM_SIAM13} ensures convergence in probability to
Nash equilibria but does not guarantee convergence to potential function
maximizers i.e. the global objective function maximizers
in the context of this paper.
% efficient actions, in particular
%the potential function maximizers.
In contrast, thanks to the irrational decisions, PIPIP leads sensors 
to the potential function maximizers.
Namely, PIPIP embodies the desirable features of both algorithms
in \cite{MS08_GEB12} and \cite{ZM_SIAM13}.
%to guarantee the statement even in the presence of
%the action constraints specified by the set $\C_i$.

Remark that PIPIP guarantees the above theorems
even in the presence of the action constraints
specified by the set $\C_i$.
This allows one to take account of physical constraints
of PTZ cameras.
More importantly, constraints can be useful as a design parameter.
Indeed, depending on the scenarios,
persistent possibility of explorations toward all elements of $\A_i$
can lead to volatile behavior of the global objective function
in the practical use of the learning algorithm.
In such a situation, adding some virtual constraints works
for stabilizing the evolution.

Note that both of the above theorems address static games,
which indicates that the present algorithm works in the scenarios of monitoring
normally static environment including sudden changes 
since the environment before/after the change are both static.
In addition, application of the conclusions in \cite{SM_12}
to Theorem \ref{thm:2} means that the present algorithm
successfully adapt to the gradual environmental changes
as long as the speed of the dynamics is sufficiently slow.
%This issue will be demonstrated through experiments in Section \ref{sec:6}.

%This allows one to take account of physical constraints
%of PTZ cameras.
%More importantly, constraints can be useful as a free design parameter
%from a practical point of view.
%Indeed, depending on the scenarios,
%persistent possibility of explorations toward all elements of $\A_i$
%can lead to volatile behavior of the global objective function
%in the practical use of the learning algorithm.
%In such a situation, adding some virtual constraints works
%for stabilizing the evolution.

%Since PIPIP utilizes only the past 
%utilities at $k-2$ and $k-1$ for choosing $a_i(k)$,
%it is implemented by the combination of cyber and physical feedbacks
%mentioned at the end of Section \ref{sec:3}. 
%\subsection{Distributed }
%\label{sec:3.1}

\section{Stability Analysis}
\label{sec:5}

%In this section, we prove Theorems \ref{thm:2} and \ref{thm:1}
%presented in the previous section.

\subsection{Preliminary: Fundamentals of Resistance Tree}
\label{sec:5.1}

Let us consider a Markov process $\{P_k^0\}$ defined over a finite state space $\X$.
A perturbed Markov process 
$\{P_k^\varepsilon \},\ \varepsilon \in [0, 1]$ is defined 
as a process such that the transition of $\{P_k^\varepsilon \}$ 
follows $\{P_k^0\}$ with probability $1 - \varepsilon$
and does not follow with probability $\varepsilon$.
In particular, we focus on a {\it regular perturbation} defined below.

%%%%%%%%%%%%%%%%%%%%%%%%%%%%%%%%%%%%%%%%%%%%%%%%%%%%%%
% Definition 3
%%%%%%%%%%%%%%%%%%%%%%%%%%%%%%%%%%%%%%%%%%%%%%%%%%%%%%
\begin{definition}[Regular Perturbation \cite{Y_BK01}]
\label{def:4}
{\rm
A family of stochastic processes $\{P^\varepsilon_k\}$ is called a {\it 
 regular perturbation} of $\{P_k^0\}$ if the
following conditions are satisfied:\\
{\bf (A1)} For some $\varepsilon^* >0$, the process $\{P^\varepsilon _k\}$ is 
irreducible and aperiodic for all $\varepsilon \in (0,\varepsilon^*]$.\\
{\bf (A2)} Let us denote by $P_{x^1x^2}^\varepsilon$
the transition probability from 
$x^1 \in \X$ to $x^2 \in \X$ along with the Markov process $\{P^\varepsilon _k\}$.
Then, 
${\rm lim}_{\varepsilon \rightarrow 0}P_{x^1x^2}^\varepsilon = P_{x^1x^2}^0$
holds for all $x^1, x^2\in \X$.\\
{\bf (A3)}
If $P_{x^1x^2}^\varepsilon >0$ for some $\varepsilon$, 
then there exists a real number $\chi (x^1\rightarrow x^2)\geq 0$ such that 
\begin{equation}
{\rm lim}_{\varepsilon \rightarrow 0}
\frac{P_{x^1x^2}^\varepsilon}{\varepsilon ^{\chi (x^1\rightarrow x^2)}}
\in (0,\infty ),
\label{eq:2.3}
\end{equation}
where $\chi (x^1\rightarrow x^2)$ is called {\it resistance of transition} 
 $x^1 \rightarrow x^2$.
}
\end{definition}
%%%%%%%%%%%%%%%%%%%%%%%%%%%%%%%%%%%%%%%%%%%%%%%%%%%%%%
%From (A1), if $\{P^\varepsilon_k\}$ 
%is a regular perturbation of $\{P_k^0\}$, 
%a process $\{P^\varepsilon_k\}$ has the unique stationary 
%distribution $\mu (\varepsilon)$ for each $\varepsilon>0$.

We next consider a path $\rho$ from $x \in \X$ to $x' \in \X$ 
along with transitions 
$x^{(1)} = x \rightarrow x^{(2)}\rightarrow\cdots\rightarrow x^{(m)} = x'$
and denote by $P^\varepsilon(\rho)$ the probability of the sequence of transitions.
Then, {\it resistance $\chi(\rho)$ of a path}
$\rho$ is defined as the value satisfying
\begin{equation}
\lim_{\varepsilon\rightarrow 0} \frac{P^\varepsilon(\rho)}{\varepsilon^{\chi(\rho)}} \in (0, \infty).
\label{eqn:resistance_path}
\end{equation}
Then, it is easy to confirm that $\chi(\rho)$ is simply given by
\begin{equation}
 \chi(\rho) = \sum_{i = 1}^{m-1}\chi(x^{(i)}\rightarrow x^{(i+1)}).
\label{eqn:resistance_path1}
\end{equation}

A state $x \in \X$ 
is said to communicate with state $x' \in \X$ if both $x \rightsquigarrow x'$ 
and $x' \rightsquigarrow x$ hold, where the notation $x \rightsquigarrow x'$ 
implies that $x'$ is accessible from $x$ i.e.
a process starting at state $x$ has non-zero probability of 
transitioning into $x'$ at some point. 
A {\it recurrent communication class} is a class such that 
every pair of states in the class communicates with each other and
no state outside the class is accessible from the class.
Let $H_1,\cdots ,H_J$ be recurrent communication classes 
 of unperturbed Markov process $\{P_k^0\}$.
Then, within each class, there is a path with zero resistance from every state to every other.
In the case of a perturbed Markov process $\{P_k^\varepsilon\}$,
there may exist several paths from states in $H_{l}$ to states in $H_{l'}$
for any two distinct classes $H_{l}$ and $H_{l'}$.
%The minimal resistance among all such paths
%is denoted by $\chi_{ll'}$.

We next define a weighted directed graph $G_R = (\H, \E_R, \W_R)$ 
over the recurrent communication classes $\H = \{H_1,\cdots,H_J\}$ of $\{P_k^0\}$,
where the weight $w_{ll'}\in \W_R$ of each edge $(H_{l},H_{l'})$
is equal to the minimal resistance among all paths over $\{P_k^{\ve}\}$ 
from a state in $H_{l}$ to a state in $H_{l'}$.
We also define {\it $l$-tree} which is 
a spanning tree over $G_R$ with root $H_l$ such that, for every $H_{l'} \neq H_l$, 
there is a unique path from $H_{l'}$ to $H_l$.
%We also denote by $\G_R(l)$ the set of all $l$-trees. 
The {\it resistance of an $l$-tree} is the sum of
the weights of all the edges over the tree.
The tree with the minimal resistance among all $l$-trees %in $\G_R(l)$
is called the {\it minimal resistance tree}, %and denoted by $G_R(l)$
and the corresponding minimal resistance is called
{\it stochastic potential} of $H_l$.
Let us now introduce the 
notion of {\it stochastically stable state}.
%%%%%%%%%%%%%%%%%%%%%%%%%%%%%%%%%%%%%%%%%%%%%%%%%%%%%%
% Definition 4
%%%%%%%%%%%%%%%%%%%%%%%%%%%%%%%%%%%%%%%%%%%%%%%%%%%%%%
\begin{definition}[Stochastically Stable State \cite{Y_BK01}]
\label{def:5}
{\rm
A state $x\in \X$ is said to be stochastically stable, if $x$ satisfies 
${\lim}_{\varepsilon\rightarrow 0+}\mu_x(\varepsilon)>0$, where
$\mu_x(\varepsilon)$ is the value of an element of 
stationary distribution $\mu(\varepsilon)$ corresponding to state $x$.}
\end{definition}
%%%%%%%%%%%%%%%%%%%%%%%%%%%%%%%%%%%%%%%%%%%%%%%%%%%%%%
Then, we can use the following result linking
stochastically stable states and stochastic potential.
%%%%%%%%%%%%%%%%%%%%%%%%%%%%%%%%%%%%%%%%%%%%%%%%%%%%%%
% Proposition 1
%%%%%%%%%%%%%%%%%%%%%%%%%%%%%%%%%%%%%%%%%%%%%%%%%%%%%%
\begin{lemma}{\rm \cite{Y_BK01}}
\label{lem:3}
{\rm 
Let $\{P^\varepsilon_k\}$ be a regular perturbation of $\{P_k^0\}$. 
Then ${\lim}_{\varepsilon\rightarrow 0+}\mu (\varepsilon)$ exists and 
the limiting distribution is a stationary distribution of $\{P_k^0\}$. 
Moreover the stochastically stable states are contained 
in recurrent communication classes of $\{P_k^0\}$ with minimum stochastic potential.}
\end{lemma}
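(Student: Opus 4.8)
The plan is to derive the result from the Markov Chain Tree Theorem (the Freidlin--Wentzell spanning-tree representation of stationary distributions) together with the asymptotic expansion of the transition probabilities supplied by (A3). First I would fix $\ve \in (0, \ve^*]$ and invoke (A1): irreducibility and aperiodicity of $\{P_k^\ve\}$ guarantee a unique stationary distribution $\mu(\ve)$ by the Perron--Frobenius theorem for finite stochastic matrices. The tree theorem then expresses each component as
\begin{equation}
\mu_x(\ve) = \frac{q_x(\ve)}{\sum_{x'\in \X} q_{x'}(\ve)},\quad q_x(\ve) := \sum_{T \in \mathcal{T}_x}\ \prod_{(y\to z)\in T} P_{yz}^\ve,
\end{equation}
where $\mathcal{T}_x$ is the set of spanning in-trees rooted at $x$ (every other state has a unique directed path to $x$).

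Next I would read off the order in $\ve$ of each summand. By (A3) one has $P_{yz}^\ve = \Theta(\ve^{\chi(y\to z)})$ whenever the transition has positive probability, so each tree contributes $\prod_{(y\to z)\in T}P_{yz}^\ve = \Theta(\ve^{\chi(T)})$ with $\chi(T)=\sum_{(y\to z)\in T}\chi(y\to z)$, in the spirit of (\ref{eqn:resistance_path1}). Hence $q_x(\ve)=\Theta(\ve^{\gamma_x})$ where $\gamma_x := \min_{T\in \mathcal{T}_x}\chi(T)$ is the state-level stochastic potential. Since each $q_x(\ve)$ is a finite sum of monomials in $\ve$ with nonnegative real exponents and coefficients converging to strictly positive constants, the quotient above is a well-defined ratio of such generalized polynomials; writing $\gamma^\star:=\min_{x'\in\X}\gamma_{x'}$, the limit $\lim_{\ve\to 0+}\mu_x(\ve)$ exists and is strictly positive precisely when $\gamma_x=\gamma^\star$, and vanishes otherwise. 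This simultaneously establishes existence of the limiting distribution $\mu^\star$ and identifies its support as $\{x : \gamma_x = \gamma^\star\}$. To see that $\mu^\star$ is stationary for $\{P_k^0\}$, I would pass to the limit in $\mu(\ve)P^\ve=\mu(\ve)$, using continuity $P^\ve\to P^0$ from (A2) to obtain $\mu^\star P^0=\mu^\star$.

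The remaining, and most delicate, step is to lift this state-level characterization to the class-level statement phrased in terms of $G_R$. The plan is to exploit that every transition inside a recurrent communication class of $\{P_k^0\}$ has zero resistance, so within each $H_l$ a spanning in-tree can be routed \emph{for free}. I would show that a minimum-resistance in-tree rooted at a state $x\in H_l$ decomposes into a zero-resistance intra-class forest together with an inter-class backbone, and that after contracting each $H_l$ to a single node this backbone is exactly an $l$-tree over $G_R$ whose resistance equals the stochastic potential of $H_l$ as defined earlier; transient states of $\{P_k^0\}$ then contribute only strictly higher-order terms and drop out in the limit. Consequently $\gamma_x$ for $x\in H_l$ coincides with the stochastic potential of the class $H_l$, so $\gamma_x=\gamma^\star$ holds iff $H_l$ attains the minimum stochastic potential, which is the assertion.

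The hard part will be this contraction argument: making precise that the minimum over all spanning in-trees of $\{P_k^\ve\}$ is attained by trees whose inter-class edges form an $l$-tree over the reduced graph, and that intra-class routing and transient states add no net resistance. This is exactly where the careful Freidlin--Wentzell graph surgery is required; by contrast, the tree theorem and the asymptotic estimates of the preceding paragraphs are essentially mechanical once set up.
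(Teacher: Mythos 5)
First, a point of reference: the paper itself offers no proof of this lemma --- it is imported as a black box from Young's book \cite{Y_BK01} and only invoked in Section \ref{sec:5.3}. So there is no ``paper proof'' to compare against; your proposal has to be judged against the standard argument (Young's, built on the Freidlin--Wentzell tree machinery), which is exactly the route you chose.

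On its own merits, your proposal is the right approach but incomplete at the decisive step. The first two assertions are handled correctly: (A1) plus the Markov chain tree theorem gives the representation of $\mu(\ve)$, (A3) gives $q_x(\ve)=\Theta(\ve^{\gamma_x})$ with $\gamma_x$ the minimal resistance over spanning in-trees rooted at $x$, the ratio of generalized polynomials yields existence of $\lim_{\ve\to 0+}\mu(\ve)$ with support $\{x:\gamma_x=\gamma^\star\}$, and stationarity of the limit follows from (A2) by passing to the limit in $\mu(\ve)P^\ve=\mu(\ve)$. The genuine gap is in the third assertion. What must be shown is (i) no transient state of $\{P_k^0\}$ can satisfy $\gamma_x=\gamma^\star$, and (ii) for $x\in H_l$ the state-level tree resistance $\gamma_x$ equals, up to an additive constant independent of $l$, the class-level stochastic potential of $H_l$ defined through $l$-trees over $G_R$. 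You assert both (``transient states contribute only strictly higher-order terms'', ``the backbone is exactly an $l$-tree whose resistance equals the stochastic potential'') but prove neither, and you explicitly defer them as ``the hard part.'' That part is not mechanical: it requires the tree-surgery lemmas --- append a zero-resistance path from a transient root into a recurrent class, delete the duplicated outgoing edges along it, and verify the result is again a spanning in-tree of no larger resistance (with strict decrease, which is what rules transient states out); and conversely expand an $l$-tree over $G_R$ into a state-level in-tree using zero-resistance intra-class routing and zero-resistance absorption of transient states. These constructions are precisely the substance of Young's proof; without them, your argument establishes the first two claims of the lemma but only reduces the third to an unproven combinatorial equivalence between the two notions of stochastic potential.
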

%%%%%%%%%%%%%%%%%%%%%%%%%%%%%%%%%%%%%%%%%%%%%%%%%%%%%%

\subsection{Auxiliary Results}
\label{sec:5.2}

We first consider PHPIP with a constant exploration rate $\varepsilon$.
Then, the transitions of the state $z(k) = (a(k-1),a(k))$ for PHPIP are 
described by a perturbed homogeneous Markov process $\{P^\varepsilon\}$ 
on the state space 
%
%\begin{equation*}
${\mathcal B}:=\{(a,a')\in {\mathcal A}\times {\mathcal A}|\ a'_i\in 
\C_i(a_i)\ \forall i \in {\mathcal V}\}$.
%\label{eq:3.5}
%\end{equation*}
In the following, 
we use the notation 
%the set of the optimal Nash equilibria
%of a constrained potential game $\Gamma$
$\diag(\A') = \{(a,a)\in \A \times \A|\ a\in {\mathcal A}'\},\ \A' \subseteq \A$
similarly to \cite{ZM_SIAM13}.
%
%${\mathcal B}=\{(a,a')\in {\mathcal A}\times {\mathcal A}|\ a'_i\in 
%\C_i(a_i)\ \forall i \in {\mathcal V}\}$.
\begin{figure}[t]
\begin{center}
\begin{minipage}[t]{8cm}
\begin{center}
\includegraphics[width=7.8cm]{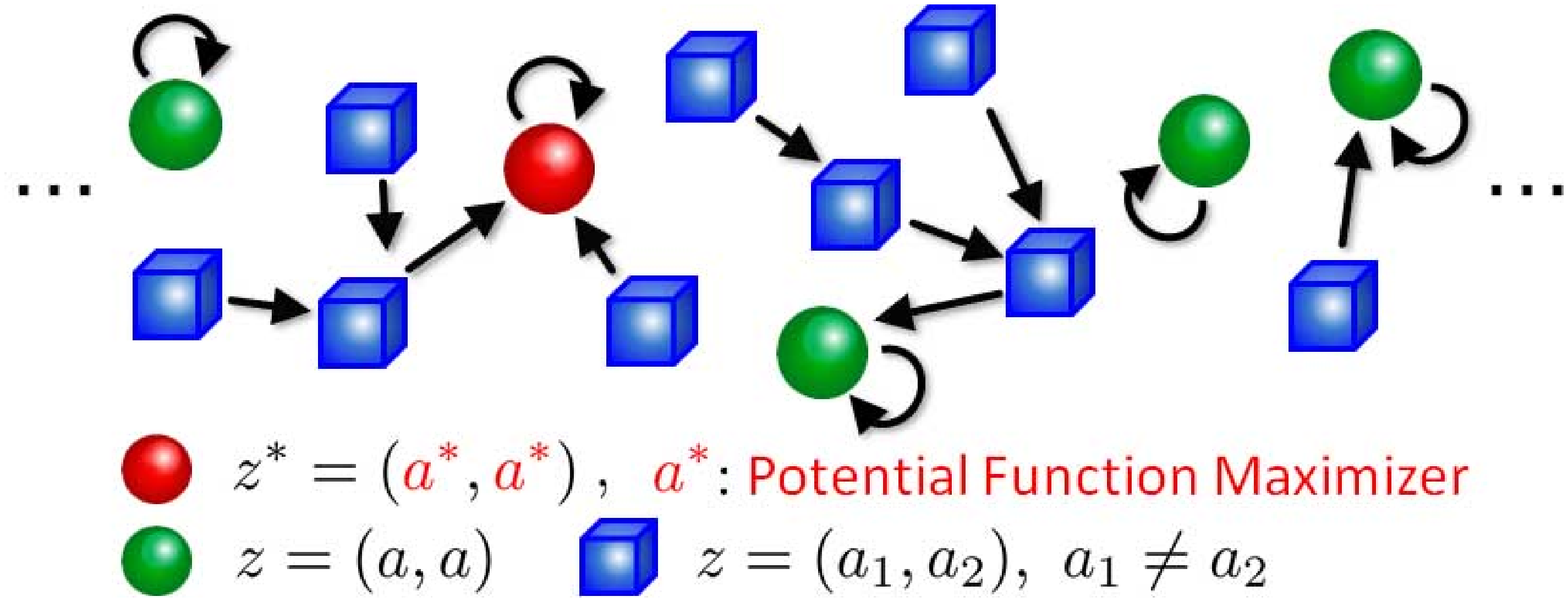}
\caption{Image of Markov process $\{P^{0}\}$}
\label{fig:12}
\end{center}
\hspace{.5cm}
\end{minipage}
\begin{minipage}[t]{8cm}
\begin{center}
\includegraphics[width=7.8cm]{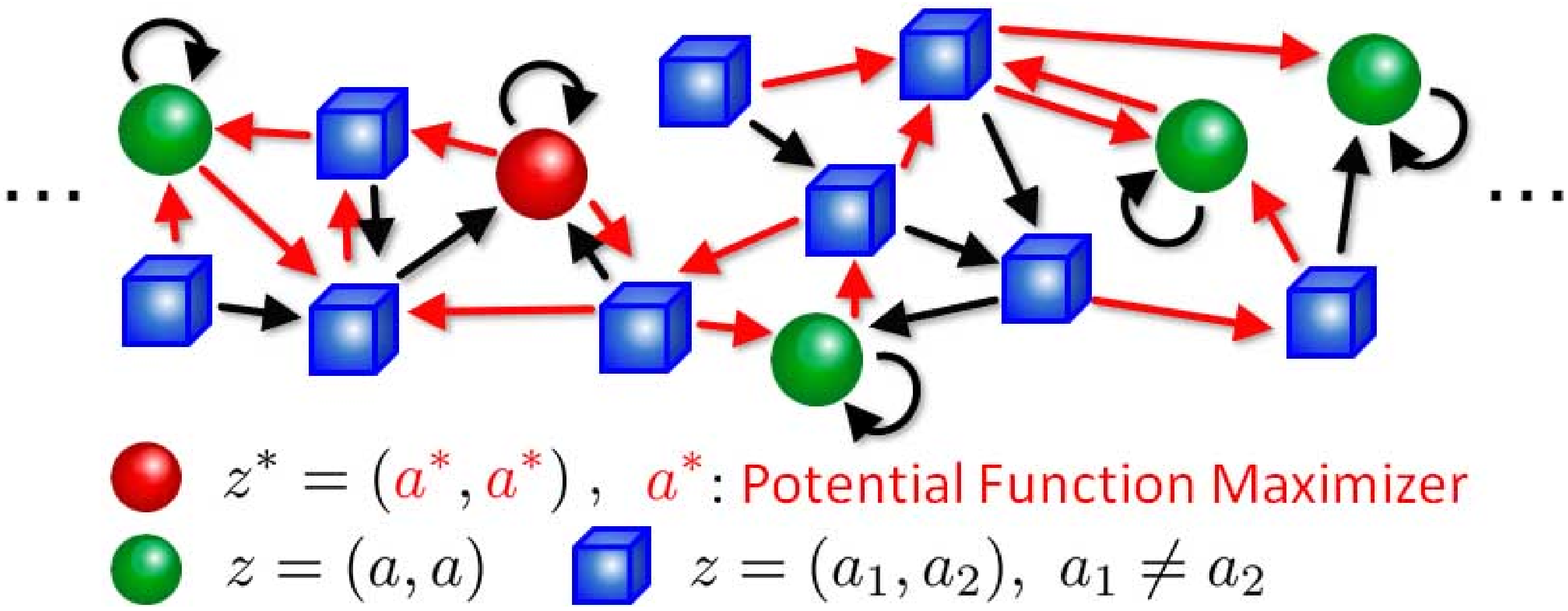}
\caption{Image of Markov process $\{P^{\ve}\}$ 
(Transitions colored by red can happen only when $\ve \neq 0$)}
\label{fig:13}
\end{center}
\end{minipage}
\end{center}
\end{figure}

In terms of the Markov process $\{P^\varepsilon\}$ induced by PHPIP,
the following lemma holds.
%%%%%%%%%%%%%%%%%%%%%%%%%%%%%%%%%%%%%%%%%%%%%%%%%%%%%%
% Lemma 1
%%%%%%%%%%%%%%%%%%%%%%%%%%%%%%%%%%%%%%%%%%%%%%%%%%%%%%
\begin{lemma}\label{lem:A1}
{\rm 
The Markov process $\{P^\varepsilon\}$ induced by PHPIP applied to
the constrained strategic game $\Gamma$ in (\ref{eqn:strategiv_game}) with (\ref{eqn:utility}) is 
a regular perturbation of $\{P^0 \}$ under Assumption \ref{ass:1}.}
\end{lemma}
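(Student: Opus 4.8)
The plan is to verify the three defining conditions (A1)--(A3) of a regular perturbation in Definition \ref{def:4} directly for the Markov process $\{P^\varepsilon\}$ that PHPIP induces on the lifted state space $\BB$. Condition (A2) is the easiest and I would dispose of it first: as $\varepsilon \to 0$, every exploration probability $\varepsilon$ and every irrational-decision probability $(1-\varepsilon)\kappa\varepsilon^{\Delta_i}$ vanishes while the exploitation probabilities tend to $1$, so $P^\varepsilon_{zz'} \to P^0_{zz'}$ for every pair of states, where $\{P^0\}$ is the pure-exploitation dynamics obtained by deleting exploration and irrational moves.

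Next I would establish (A3) by reading off resistances from the action-selection rule. Because the sensors decide concurrently and independently, the joint transition probability from $z = (a,a')$ to $z' = (a',a'')$ factors as $\prod_{v_i\in\V}$ of the individual probabilities of each $v_i$ moving to $a_i''$. Crucially, $z$ itself fixes, for every $v_i$, the sign of $U_i(a') - U_i(a)$ and hence which branch of \textbf{Step 2} governs $v_i$; within that branch the probability of reaching any admissible $a_i''$ is, to leading order in $\varepsilon$, a positive constant times $\varepsilon^0$ (exploitation, either regime), $\varepsilon^1$ (exploration), or $\varepsilon^{\Delta_i}$ (irrational decision). Summing these exponents over $v_i$ — and taking the minimum over the finitely many ways a single joint move can be realized — yields a well-defined resistance $\chi(z\to z')\ge 0$ for which the limit in (\ref{eq:2.3}) lies in $(0,\infty)$. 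This step is essentially bookkeeping once the regime-determination property is noted.

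The substantive part is (A1), irreducibility and aperiodicity of $\{P^\varepsilon\}$ for all small $\varepsilon>0$, and here I would lean on all three items of Assumption \ref{ass:1}. For aperiodicity it suffices to exhibit a single self-loop: at a diagonal state $(a,a)$ every sensor is in the regime $U_i^1\ge U_i^2$ with equality, so simultaneous exploitation returns $(a,a)$ with probability $(1-\varepsilon)^n>0$. For irreducibility I would argue that any state of $\BB$ is reachable from any other through a sequence of positive-probability exploration moves: the connectivity item of Assumption \ref{ass:1} supplies, for each sensor, a $\C_i$-admissible path between any two of its actions; the symmetry item makes these moves reversible; and the cardinality bound $|\C_i(a_i)|\ge 3$ is exactly what keeps an exploration move available even in the bad regime, where the rule excludes the two memorized actions $\{a_i^1,a_i^2\}$ from the exploration set.

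The main obstacle I anticipate is precisely this reachability argument, because the walk lives on the \emph{pairs} $(a(k-1),a(k))\in\BB$ rather than on $\A$, so the moves available to each sensor depend on its two-step memory and on the sign of $\Delta_i$. I would handle it by first driving the chain into a diagonal state $(a,a)$ — using exploitation to equalize the memory — from which every sensor starts in the good regime and may explore freely, and then steering each coordinate independently along its Assumption \ref{ass:1} path to assemble an arbitrary target pair $(b,b')$; the $\ge 3$ branching together with reversibility guarantees that none of these intermediate moves is ever blocked. Once (A1)--(A3) are in place, $\{P^\varepsilon\}$ is a regular perturbation of $\{P^0\}$ by Definition \ref{def:4}, which completes the proof.
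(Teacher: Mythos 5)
Your proposal is correct and follows essentially the same route as the paper: (A2) by direct limit of the transition probabilities, (A3) by factoring the joint transition over sensors and reading off the exponents $0$, $1$, or $\Delta_i$ per sensor (the paper's $\Lambda_1,\dots,\Lambda_5$ partition is exactly this bookkeeping), and (A1) from the exploration rule together with the connectivity, symmetry, and cardinality items of Assumption \ref{ass:1}. The only cosmetic difference is aperiodicity: you use the self-loop at diagonal states plus irreducibility, whereas the paper exhibits explicit cycles of coprime lengths $2$ and $3$ through each off-diagonal state; both are valid.
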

%%%%%%%%%%%%%%%%%%%%%%%%%%%%%%%%%%%%%%%%%%%%%%%%%%%%%%
%
\begin{proof}
See Appendix \ref{app:1}
\end{proof}
From Lemma \ref{lem:A1}, the perturbed process
$\{P^\varepsilon\}$ is irreducible and aperiodic, and hence there exists 
a unique stationary distribution $\mu (\varepsilon)$ for every 
$\varepsilon$. 
 %because $\{P^\varepsilon\}$ is a regular perturbation of $\{P^0 \}$, 
We also see from the former half of Lemma \ref{lem:3} that 
$\lim_{\varepsilon\rightarrow 0+}\mu (\varepsilon)$ exists and 
the limiting distribution is the stationary distribution of $\{P^0\}$.

We also have the following lemma. %on the Markov process $\{P^\varepsilon\}$ induced by PHPIP.
%%%%%%%%%%%%%%%%%%%%%%%%%%%%%%%%%%%%%%%%%%%%%%%%%%%%%%
% Lemma 2
%%%%%%%%%%%%%%%%%%%%%%%%%%%%%%%%%%%%%%%%%%%%%%%%%%%%%%
\begin{lemma}
\label{lem:A2}
{\rm 
Consider the Markov process $\{P^\varepsilon\}$ induced by PHPIP
applied to the constrained strategic game $\Gamma$ in  (\ref{eqn:strategiv_game}) with (\ref{eqn:utility}).
Then, the recurrent communication classes 
of the unperturbed Markov process $\{P^0 \}$ are given by elements of 
$\diag(\A)=\{(a,a)\in \A \times \A|\ a\in {\mathcal A}\}$, namely 
\begin{equation}
 H_{\iota} = \{(a^{\iota},a^{\iota})\},\ \iota =  1,\cdots,|\A|.
\label{eqn:recurrent_com_class}
\end{equation}}
\end{lemma}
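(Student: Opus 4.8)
The plan is to exploit the fact that the unperturbed kernel $\{P^0\}$ is a \emph{deterministic} dynamical system on $\BB$ and to show that its only periodic orbits are the diagonal fixed points.

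First I would identify $\{P^0\}$ explicitly. Setting $\varepsilon=0$ in the selection rule of PHPIP, the exploration probability vanishes and, in the branch $U_i(a(k-1))<U_i(a(k-2))$, the irrational-decision probability $(1-\varepsilon)\kappa\varepsilon^{\Delta_i(k)}$ also vanishes since $\Delta_i(k)>0$ there. Hence under $\{P^0\}$ each sensor moves deterministically: writing the state as $z(k)=(a(k-1),a(k))$, the successor is given componentwise by $a_i(k+1)=a_i(k)$ if $U_i(a(k))\geq U_i(a(k-1))$ and $a_i(k+1)=a_i(k-1)$ otherwise, the new state being $(a(k),a(k+1))$. (Here I use that $a_i\in\C_i(a_i)$, so $(a,a)\in\BB$, which is consistent with the exploration step removing $a_i^1$ and with $|\C_i(a_i)|\geq 3$ in Assumption \ref{ass:1}.)

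Second, I would check that every diagonal state $(a,a)$ is a fixed point: with old and new actions equal, $U_i(a)\geq U_i(a)$ holds for all $v_i$, so each sensor keeps its action and the state stays at $(a,a)$. Thus each $\{(a,a)\}$ is an absorbing singleton and therefore a recurrent communication class, which already produces the $|\A|$ classes claimed in (\ref{eqn:recurrent_com_class}).

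Third, and this is the crux, I would rule out any other recurrent class by showing that $\{P^0\}$ admits no periodic orbit besides the diagonal fixed points. Represent an orbit of period $T$ by a $T$-periodic sequence of joint actions $\{x_t\}$ with $z_t=(x_t,x_{t+1})$, so the rule reads $x_{t+1,i}=x_{t,i}$ if $U_i(x_t)\geq U_i(x_{t-1})$ and $x_{t+1,i}=x_{t-1,i}$ otherwise. The key elementary observation is that whenever $x_{t,i}=x_{t-1,i}$ the rule forces $x_{t+1,i}$ to equal that same value. Fixing $v_i$ and choosing $t_0$ that maximizes $U_i(x_t)$ over one period, the inequality $U_i(x_{t_0})\geq U_i(x_{t_0-1})$ yields $x_{t_0+1,i}=x_{t_0,i}$; the observation then propagates by induction to give $x_{t,i}=x_{t_0,i}$ for all $t\geq t_0$, hence for all $t$ by periodicity. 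As this holds for every $v_i$, the sequence $\{x_t\}$ is constant, so the orbit is a single diagonal fixed point. Finally, since $\BB$ is finite and $\{P^0\}$ deterministic, every trajectory is eventually periodic and therefore reaches a diagonal fixed point from which it never returns, so every non-diagonal state is transient and the recurrent communication classes are exactly the singletons $\{(a,a)\}$, $a\in\A$. The main obstacle is this third step; the whole argument hinges on the max-utility freezing observation, which rules out nontrivial cycles and, notably, does so without invoking the potential-game structure.
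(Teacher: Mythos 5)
Your proof is correct, but it reaches the conclusion by a genuinely different route than the paper. Both arguments start from the same two facts: the $\varepsilon=0$ dynamics is deterministic, and once a sensor's two memory slots agree, both branches of the update return that same action, so the sensor is frozen forever. The paper then proves transience of the off-diagonal states by induction on the number of sensors: if every sensor satisfies $U_i(a^1)\geq U_i(a^0)$ (resp.\ every sensor satisfies the reverse inequality) the chain reaches the diagonal in one (resp.\ two) steps, and in the mixed case the sensors with $U_i(a^1)\geq U_i(a^0)$ freeze, reducing the dynamics to an instance with fewer active players to which the induction hypothesis applies. You instead argue globally: the recurrent classes of a deterministic chain on the finite set $\BB$ are exactly its cycles, and evaluating each sensor at a time maximizing its own utility along the cycle forces that sensor to freeze from that time on, so by periodicity every cycle collapses to a diagonal fixed point. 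Your version buys a shorter, induction-free argument that also dispenses with the paper's slightly informal ``the resulting situation is the same as a game with fewer players'' step (which tacitly uses that frozen sensors enter the remaining sensors' updates only through their now-constant actions); the paper's player-by-player case analysis makes the mechanism of convergence to the diagonal more explicit. Neither proof needs the potential-game structure, as you correctly observe.
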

%%%%%%%%%%%%%%%%%%%%%%%%%%%%%%%%%%%%%%%%%%%%%%%%%%%%%%
%
\begin{proof}
See Appendix \ref{app:2}.
\end{proof}
The lemma means that all the paths over the Markov process $\{P^{0}\}$
eventually reach and remain at a state such that $a(k-1) = a(k)$
as illustrated in Fig. \ref{fig:12}.
Meanwhile, the process $\{P^{\ve}\},\ \ve\neq 0$ contains 
paths traversing two of such states as in 
Fig. \ref{fig:13}.

%A feasible path over the process
%$\{P^\varepsilon\}$ from $z\in {\mathcal B}$ to $z'\in {\mathcal B}$ is especially 
%said to be a {\it route}
%if both of the two nodes $z$ and $z'$ 
%are elements of $\diag(\A) \subset {\mathcal B}$.
%Note that a route is a path and hence the resistance of the route
%is also given by (\ref{eqn:resistance_path}).
%Especially, we define a {\it straight route} as follows, where we use 
%the notation
We next introduce the following terminology,
where we use the notation
\begin{eqnarray}
&&\E_{s} := \{(z=(a,a),z'=(a',a'))\in \diag(\A)\times\diag(\A)|\ 
\nonumber\\
&&\hspace{3cm}\exists v_i \in \V \mbox{ s.t. }a_i\in \C_i(a_i'), a_i \neq a_i' \mbox{ and } a_{-i} 
= a_{-i}'\}.
\label{eqn:set_es}
\end{eqnarray}
%
%%%%%%%%%%%%%%%%%%%%%%%%%%%%%%%%%%%%%%%%%%%%%%%%%%%%%%
% Definition 7
%%%%%%%%%%%%%%%%%%%%%%%%%%%%%%%%%%%%%%%%%%%%%%%%%%%%%%
\begin{definition}[Straight Route]
\label{def:7}
{\rm
A feasible path over $\{P^\varepsilon\}$ 
from $z^1=(a^1,a^1)$ to $z^2=(a^2,a^2)$ such that
$(z^1,z^2)\in \E_{s}$
is said to be a {\it straight route} 
if the path describes the two rounds transitions that only 
$v_i$ satisfying $a^1_i \neq a^2_i$ chooses
$a_i(k+1) = a_i^2$ through exploration at the first round
and he also chooses $a_i(k+2) = a_i^2$ at the second round
while the other sensors do not update their actions (Fig. \ref{fig:14}).
%\end{definition}
%%%%%%%%%%%%%%%%%%%%%%%%%%%%%%%%%%%%%%%%%%%%%%%%%%%%%%
%%%%%%%%%%%%%%%%%%%%%%%%%%%%%%%%%%%%%%%%%%%%%%%%%%%%%%
% Definition 8
%%%%%%%%%%%%%%%%%%%%%%%%%%%%%%%%%%%%%%%%%%%%%%%%%%%%%%
%\begin{definition}[$M$-Straight-Route]
%\label{def:8}
In addition, a feasible path over $\{P^\varepsilon\}$ 
from $z^1\in \diag(\A)$ to $z^2\in \diag(\A)$ 
%such that $(z^1,z^2)\in \E_{s}$ 
is said to be an {\it $M$-straight-route} 
if the path contains $M$ nodes in $\diag(\A)$ 
including $z^1$ and $z^2$, visits the $M$ nodes only once,
and any path between such nodes are straight routes.}
\end{definition}
%%%%%%%%%%%%%%%%%%%%%%%%%%%%%%%%%%%%%%%%%%%%%%%%%%%%%%

\begin{figure}
\begin{center}
%\begin{minipage}[t]{7.5cm}
%\begin{center}
\includegraphics[width=7cm]{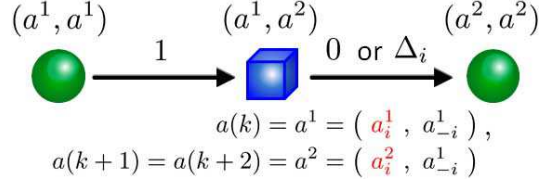}
\caption{Straight route (the numbers above the arrows
describe resistances of transitions)}
\label{fig:14}
%\end{center}
%\end{minipage}
%\hspace{.5cm}
%\begin{minipage}[t]{8cm}
%\begin{center}
%\includegraphics[width=8cm]{fig12.eps}
%\caption{Graph $G_R$}
%\label{fig:12}
%\end{center}
%\end{minipage}
\end{center}
\end{figure}

In terms of the straight route, we have the following lemmas.
%%%%%%%%%%%%%%%%%%%%%%%%%%%%%%%%%%%%%%%%%%%%%%%%%%%%%%
% Lemma 3
%%%%%%%%%%%%%%%%%%%%%%%%%%%%%%%%%%%%%%%%%%%%%%%%%%%%%%
\begin{lemma}
\label{lem:A3}
{\rm 
Consider paths from any state $z^1=(a^1,a^1)\in \diag(\A)$ to any state 
$z^2=(a^2,a^2)\in \diag(\A)$ such that $(z^1,z^2) \in \E_{s}$
over the process $\{P^\varepsilon\}$ induced by PHPIP applied to
the game $\Gamma$ in (\ref{eqn:strategiv_game}) with (\ref{eqn:utility}).
Then, under Assumption \ref{ass:3}, the resistance $\chi(\rho)$ of 
the straight route $\rho$ from $z^1$ to $z^2$ is strictly 
smaller than $3/2$ and $\chi(\rho)$ is minimal
among all paths from $z^1$ to $z^2$.}
\end{lemma}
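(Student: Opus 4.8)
The plan is to compute the resistance of a straight route explicitly by multiplying the transition probabilities along its two rounds, and then to compare it against the resistance of any competing path from $z^1$ to $z^2$. First I would recall that a straight route consists of exactly two transitions: in the first round the single sensor $v_i$ with $a^1_i \neq a^2_i$ performs an exploration, moving from $a^1_i$ to $a^2_i$ with probability $\varepsilon/(|\C_i(a^1_i)|-1)$ (or a similar exploration factor), contributing resistance $1$ by Definition \ref{def:4}; in the second round, because the utility may have decreased, $v_i$ must be prevented from reverting, and it retains $a^2_i$ via the irrational-decision branch of \textbf{Step 2}. The key point is that this irrational decision carries probability $(1-\varepsilon)\kappa\varepsilon^{\Delta_i}$, so by (\ref{eq:2.3}) its resistance equals $\Delta_i = U_i(a^1)-U_i(a^2)$ (the utility drop). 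Assumption \ref{ass:3} guarantees $\Delta_i < 1/2$, so the total resistance of the straight route is $\chi(\rho) = 1 + \Delta_i < 3/2$.

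Next I would handle the two cases for the second round separately, since whether the irrational branch is invoked depends on the sign of the utility change. If $U_i(a^2)\geq U_i(a^1)$, then at the second round the exploitation branch $a_i(k)=a_i(k-1)=a^2_i$ is used with probability $1-\varepsilon$, contributing resistance $0$, so $\chi(\rho)=1<3/2$; if $U_i(a^2)<U_i(a^1)$, the irrational decision is needed and contributes $\Delta_i<1/2$, giving $\chi(\rho)<3/2$. In either case the bound $\chi(\rho)<3/2$ holds. Throughout, I would note that all other sensors remain fixed through both rounds via pure exploitation (resistance $0$), which is exactly what Definition \ref{def:7} of the straight route prescribes.

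The harder half is minimality: showing that no path from $z^1$ to $z^2$ has resistance strictly below that of the straight route. Here I would argue that any feasible path from $z^1=(a^1,a^1)$ to $z^2=(a^2,a^2)$ with $(z^1,z^2)\in\E_s$ must, at some round, effect the change $a_i:a^1_i\to a^2_i$ for the relevant sensor $v_i$, and by Assumption \ref{ass:1} and the structure of the exploration rule, any single action change away from the current configuration requires at least one exploration step, whose resistance is exactly $1$. Thus every such path accrues resistance at least $1$ from its first genuine move. The remaining task is to rule out competing paths of resistance exactly $1$ or lying in $[1,3/2)$ that reach $z^2$ by some other route (for instance, passing through intermediate diagonal states); I would show that any departure from the two-round straight pattern forces either an additional exploration (adding resistance $\geq 1$, hence total $\geq 2 > 3/2$) or an irrational step whose exponent, combined with the first exploration, still sums to at least the straight-route value.

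The main obstacle I anticipate is this minimality argument: carefully enumerating the alternative transition patterns and verifying that each one either requires a second exploration or an irrational decision whose resistance does not undercut $1+\Delta_i$. In particular, I would need to confirm that returning to $z^1$ and re-attempting, or routing through a third diagonal state, never produces a cheaper path, which relies delicately on Assumption \ref{ass:3} bounding each utility jump by $1/2$ so that no chain of irrational steps can sum below a single exploration's resistance. Once the first-move resistance of $1$ is pinned down and all cheaper detours are excluded, the claim $\chi(\rho)<3/2$ together with minimality follows.
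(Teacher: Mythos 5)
Your computation of the straight route's resistance is correct and matches the paper exactly: the single exploration in the first round contributes resistance $1$, and the second round contributes $0$ (exploitation, when $U_i(a^2)\geq U_i(a^1)$) or $\Delta_i = U_i(a^1)-U_i(a^2) < 1/2$ (irrational decision, when $U_i(a^2)<U_i(a^1)$), so $\chi(\rho)<3/2$ follows from Assumption \ref{ass:3}.

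The minimality half, however, is left as an announced plan rather than an argument, and the one concrete worry you raise (chains of irrational steps possibly undercutting a single exploration) points away from the step that actually closes the proof. The paper argues as follows: if some path $\rho'$ had $\chi(\rho')<\chi(\rho)<3/2$, it could contain at most one exploration, since two explorations already cost resistance $2$. A sensor whose two memory slots agree never leaves its current action without exploring, so the lone exploration must be taken by the sensor $v_i$ with $a_i^1\neq a_i^2$, say toward some action $a_i'$. The decisive observation you are missing is that, once the exploration budget is exhausted, every sensor can only replay actions stored in its two-slot memory; hence the only joint actions that can ever occur along $\rho'$ are $a^1$ and $(a_i',a_{-i}^1)$. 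This forces $a_i'=a_i^2$, rules out any detour through a third diagonal state, and forces the terminal transition into $z^2$ to be exactly $(a^1,a^2)\rightarrow(a^2,a^2)$, whose resistance equals $\chi(\rho)-1$ (that is, $0$ or $\Delta_i$, the same as the straight route's second leg). Adding the resistance $1$ of the exploration gives $\chi(\rho')\geq \chi(\rho)$, a contradiction. Without this confinement argument your proposed enumeration of ``alternative transition patterns'' has no finite handle, and the scenario of returning to $z^1$ and re-attempting (which necessarily costs a second exploration, hence resistance at least $2$) is not actually excluded by anything you wrote. So the bound $\chi(\rho)<3/2$ is proved, but the minimality claim as you present it has a genuine gap.
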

%%%%%%%%%%%%%%%%%%%%%%%%%%%%%%%%%%%%%%%%%%%%%%%%%%%%%%
%
\begin{proof}
See Appendix \ref{app:3}.
\end{proof}
%

%Let us now consider the $M$-straight route and
%prove the following lemma. 
%which
%links the potential function and the resistance of the route.
%%%%%%%%%%%%%%%%%%%%%%%%%%%%%%%%%%%%%%%%%%%%%%%%%%%%%%
% Lemma 4
%%%%%%%%%%%%%%%%%%%%%%%%%%%%%%%%%%%%%%%%%%%%%%%%%%%%%%
\begin{lemma}\label{lem:A4}
{\rm 
Consider the Markov process $\{P^\varepsilon\}$ induced by PHPIP
applied to the game $\Gamma$ in (\ref{eqn:strategiv_game}) with (\ref{eqn:utility}).
Let us describe an $M$-straight-route $\rho$ 
from state $z^1 = (a^1,a^1)$ to state 
$z^2=(a^2,a^2)$ as
%
%\begin {equation}
$\rho:\ z^{(1)} = z^1{\Rightarrow}z^{(2)}{\Rightarrow}
\cdots \Rightarrow z^{(M-1)}{\Rightarrow}
z^{(M)} = z^2$, 
%\label{eq:4.6}
%\end {equation}
%
where $z^{(l)} = (a^{(l)},a^{(l)})\in \diag(\A), l\in \{1,\cdots, M\}$ 
and all the arrows between them are straight routes.
In addition, we consider the (reverse)
$M$-straight-route $\rho':\ z^{(1)} = z^1{\Leftarrow}z^{(2)}{\Leftarrow}\cdots
{\Leftarrow}z^{(M-1)}{\Leftarrow}z^{(M)} = z^2$
from $z^2$ to $z^1$.
%
%\begin {equation}
%\label{eq:4.7}
%\end {equation}
%
Then, under Assumption \ref{ass:3}, if $\phi(a^0) > \phi(a^1)$, 
the inequality $\chi(\rho) > \chi(\rho')$ holds true.}
\end{lemma}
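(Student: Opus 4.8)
The plan is to evaluate the resistance of every straight-route segment \emph{exactly} in terms of the potential $\phi$, and then to exploit a telescoping cancellation between $\rho$ and its reverse $\rho'$. First I would re-examine a single straight route from $z^{(l)}=(a^{(l)},a^{(l)})$ to $z^{(l+1)}=(a^{(l+1)},a^{(l+1)})$, along which exactly one sensor $v_i$ switches from $a_i^{(l)}$ to $a_i^{(l+1)}$ over two rounds while the others hold. At the departure node both stored utilities coincide, so the first round is forced to be an exploration and contributes resistance $1$. The second round depends on the sign of $U_i(a^{(l+1)})-U_i(a^{(l)})$: if this sensor's utility does not decrease, the switch is completed by exploitation with resistance $0$; if it decreases, it can only be completed by an irrational decision, whose probability $(1-\varepsilon)\kappa\varepsilon^{\Delta_i}$ with $\Delta_i=U_i(a^{(l)})-U_i(a^{(l+1)})$ contributes resistance $\Delta_i$. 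Invoking the potential-game identity $U_i(a^{(l+1)})-U_i(a^{(l)})=\phi(a^{(l+1)})-\phi(a^{(l)})$ from Lemma \ref{lem:1}, I can write the segment resistance as $\chi(z^{(l)}\to z^{(l+1)})=1+[\phi(a^{(l)})-\phi(a^{(l+1)})]^+$, where $[x]^+:=\max\{0,x\}$; Assumption \ref{ass:3} keeps this below $3/2$, consistent with Lemma \ref{lem:A3}.

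Next I would add these segment resistances along $\rho$ using the additivity (\ref{eqn:resistance_path1}), and do the same for $\rho'$, whose segments are again straight routes by the symmetry in Assumption \ref{ass:1} and merely interchange source and target, so that $\chi(z^{(l+1)}\to z^{(l)})=1+[\phi(a^{(l+1)})-\phi(a^{(l)})]^+$. Subtracting, the $(M-1)$ unit contributions cancel and each segment leaves $[\phi(a^{(l)})-\phi(a^{(l+1)})]^+-[\phi(a^{(l+1)})-\phi(a^{(l)})]^+$. Applying the elementary identity $[x]^+-[-x]^+=x$ collapses each term to $\phi(a^{(l)})-\phi(a^{(l+1)})$, and the resulting sum telescopes to
\[
\chi(\rho)-\chi(\rho')=\sum_{l=1}^{M-1}\big(\phi(a^{(l)})-\phi(a^{(l+1)})\big)=\phi(a^1)-\phi(a^2).
\]

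The conclusion then follows at once: since $\chi(\rho)-\chi(\rho')$ equals $\phi(a^1)-\phi(a^2)$, the potential gap between the two endpoints (here $\phi(a^1)>\phi(a^2)$) forces $\chi(\rho)>\chi(\rho')$. I expect the only delicate part to be the exact second-round bookkeeping, namely confirming that a utility-\emph{decreasing} switch cannot be realized by exploitation and must pay the irrational-decision resistance $\Delta_i$, while every non-deviating sensor sits in the $U_i^1\ge U_i^2$ branch and contributes nothing. Once these per-round resistances are pinned to the algorithm's transition probabilities, the telescoping identity is purely mechanical, so the rest is routine.
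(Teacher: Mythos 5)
Your proposal is correct and follows essentially the same route as the paper: compute each straight-route segment's resistance from the proof of Lemma \ref{lem:A3}, convert the deviator's utility change to a potential change via Lemma \ref{lem:1}, and telescope the difference $\chi(\rho)-\chi(\rho')$ down to $\phi(a^1)-\phi(a^2)$. Your $[\cdot]^+$ bookkeeping is just a compact rewriting of the paper's case split into segments of resistance $1$ versus $1+\tilde{\Delta}_\iota$, so nothing substantive differs.
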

%%%%%%%%%%%%%%%%%%%%%%%%%%%%%%%%%%%%%%%%%%%%%%%%%%%%%%
%
\begin{proof}
See Appendix \ref{app:4}.
\end{proof}

\subsection{Proof of Theorems}
\label{sec:5.3}

\subsection*{Proof of Theorem \ref{thm:2}}

\begin{figure}[t]
\begin{center}
\begin{minipage}[t]{8.5cm}
\includegraphics[width=8.5cm]{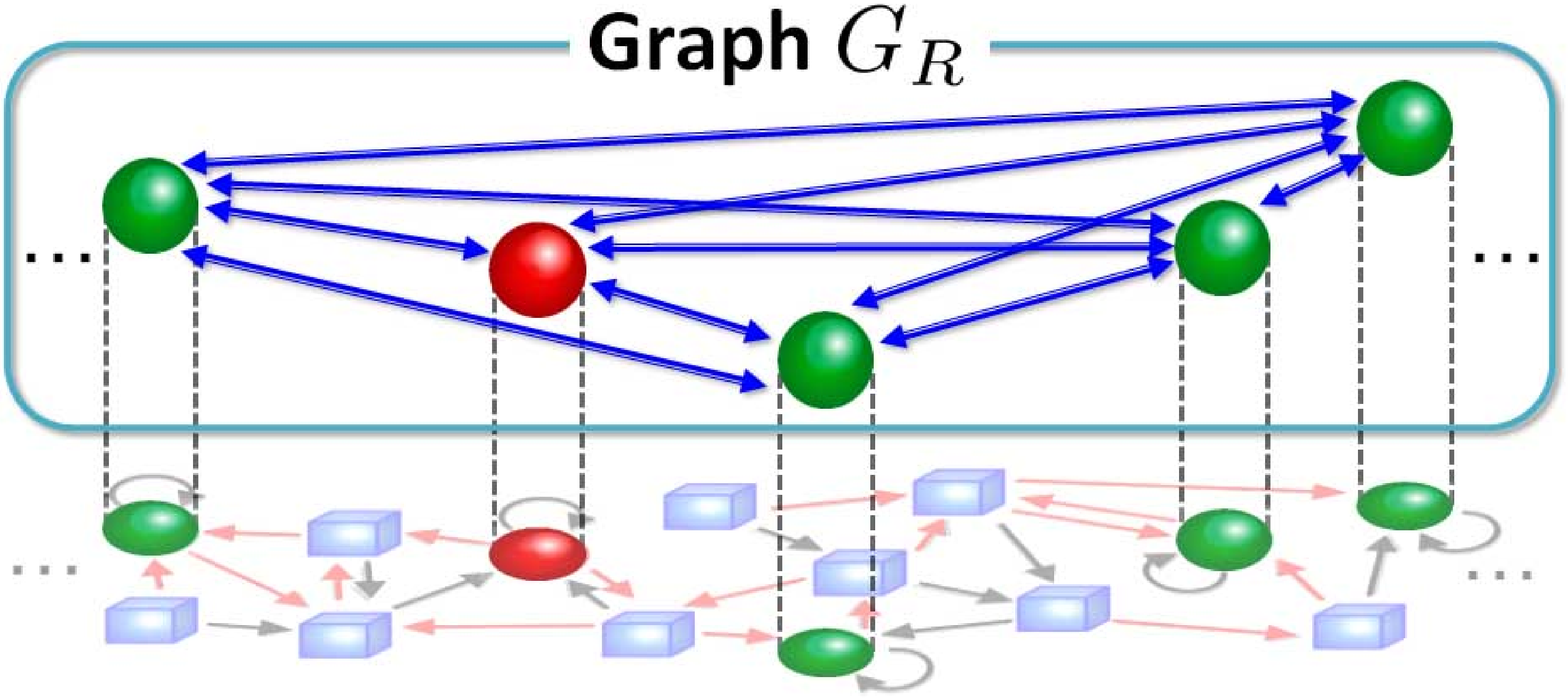}
\caption{Graph $G_R$}
\label{fig:15}
\end{minipage}
\hspace{.5cm}
\begin{minipage}[t]{7cm}
\includegraphics[width=7cm]{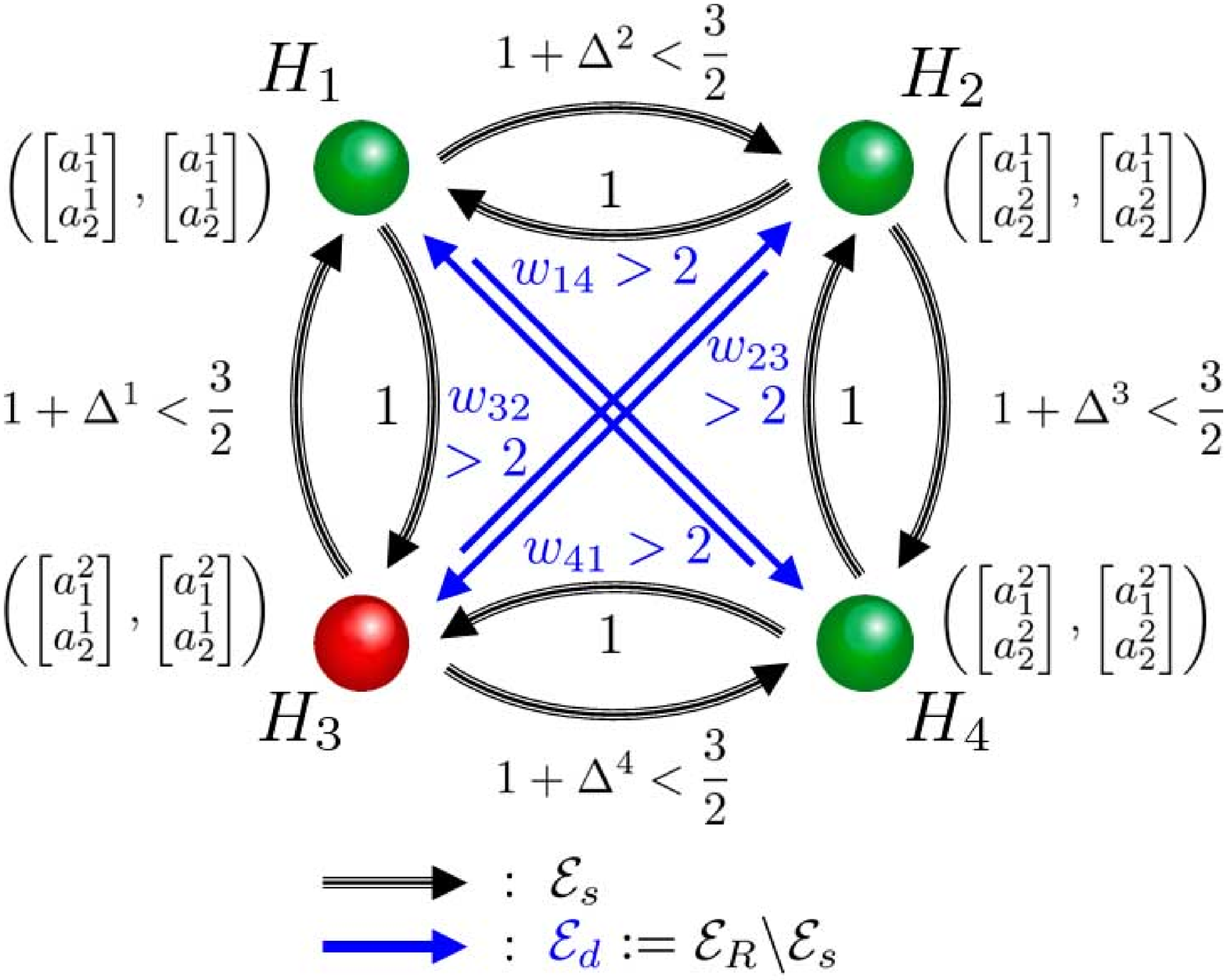}
\caption{Illustrative example of a two player game}
\label{fig:16}
\end{minipage}
\end{center}
\end{figure}

Let us form the directed graph $G_R = (\H, \E_R, \W_R)$ 
as in Subsection \ref{sec:5.1} over the recurrent communication classes
for the unperturbed Markov process $\{P^0\}$
induced by PHPIP (Fig. \ref{fig:15}).
From (\ref{eqn:recurrent_com_class}), the node set $\H$ of the graph $G_R$ is
given by $\diag(\A)$.
Since all the recurrent communication classes have only one element from (\ref{eqn:recurrent_com_class}),
the weight of the edge for any two states $z^1$ and $z^2 \in \diag(\A)$
is simply given by the path with the minimal resistance 
among all paths from $z^1$ to $z^2$ over $\{P^{\ve}\}$.
In addition, Lemma \ref{lem:3} proves that
if $(z^1,z^2)\in \E_{s}$ defined by (\ref{eqn:set_es}), the minimal weight 
is given by the straight route from $z^1$ to $z^2$.
For instance, let us consider a two player game
with $\A_1 = \{a_1^1, a_1^2\}$ and $\A_2 = \{a_2^1, a_2^2\}$.
Then, graph $G_R$ is illustrated as in Fig. \ref{fig:16},
where only the edges colored by blue are contained in $\E_d := \E_R\setminus \E_s$ and
have resistance greater than 2 since both of two players have to 
take exploration to escape from the state.

%%%%%%%%%%%%%%%%%%%%%%%%%%%%%%%%%%%%%%%%%%%%%%%%%%%%%%
% Fig 1
%%%%%%%%%%%%%%%%%%%%%%%%%%%%%%%%%%%%%%%%%%%%%%%%%%%%%%
%\begin{figure}%[!htb]
%\centering
%\includegraphics[width=8cm]{fig1.eps}
%\caption{Image of Kruskal's Algorithm}
%\label{fig:1}
%\end{figure}
%%%%%%%%%%%%%%%%%%%%%%%%%%%%%%%%%%%%%%%%%%%%%%%%%%%%%%

%%%%%%%%%%%%%%%%%%%%%%%%%%%%%%%%%%%%%%%%%%%%%%%%%%%%%%
\begin{figure}%[!htb]
\centering
\begin{minipage}{7cm}
\includegraphics[width=7cm]{fig17.eps}
\caption{Trees with root $H_1$ over graph in Fig. \ref{fig:16}}
\label{fig:17}
\end{minipage}
\hspace{.5cm}
\begin{minipage}{8cm}
\includegraphics[width=8cm]{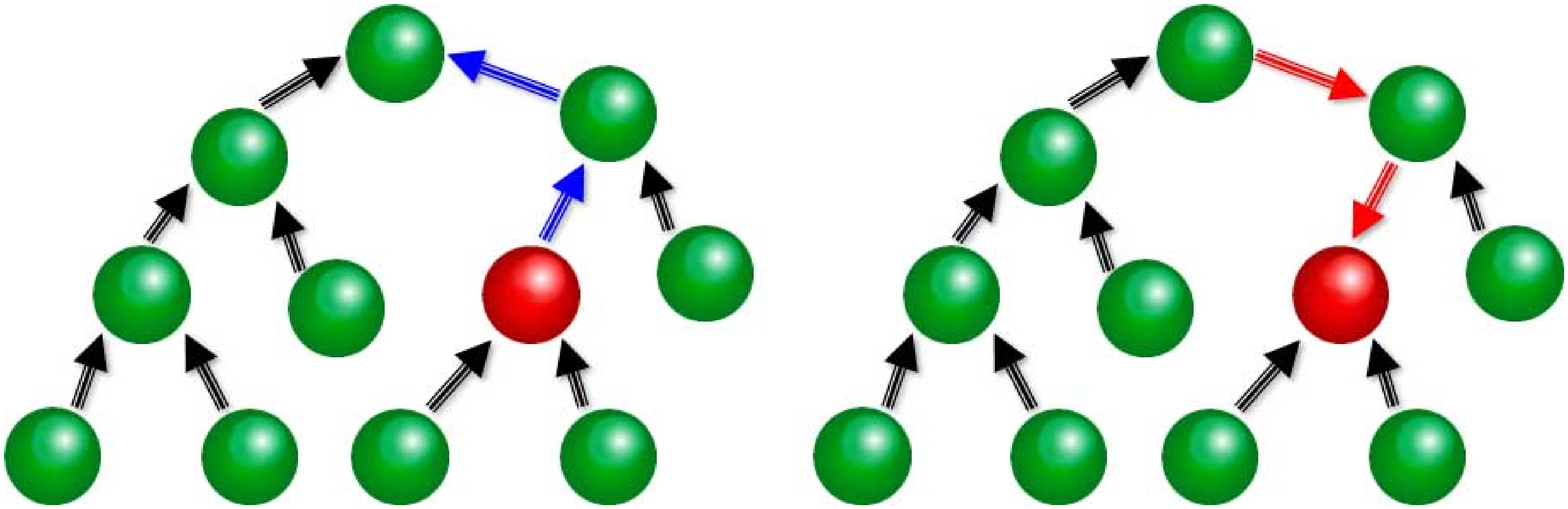}
\caption{Resistance trees (the red node is the state $z^* = (a^*, a^*)$ with 
the potential function maximizer $a^*$)}
\label{fig:18}
\end{minipage}
\end{figure}
%%%%%%%%%%%%%%%%%%%%%%%%%%%%%%%%%%%%%%%%%%%%%%%%%%%%%%

Let us focus on $l$-trees over $G_R$ with a root
$H_l = z^l\in \diag(\A)$. 
Recall now that the resistance of the tree
is the sum of the weights of all edges
constituting the tree as defined in Subsection \ref{sec:5.1}.
Let us now consider a tree for the graph in Fig. \ref{fig:16}
containing an edge in $\E_d$ (Left figure of Fig. \ref{fig:17}).
Then, it is easy to confirm that a tree
with a smaller resistance can be formed by replacing the edge in $\E_d$
by an edge in $\E_s$ as illustrated in the right figure of Fig. \ref{fig:17}.
From this example, we have a conjecture that the minimal resistance tree 
consists only of edges in $\E_s$.
The following lemma proves that the conjecture is true 
for a general case with $n$ sensors.
%%%%%%%%%%%%%%%%%%%%%%%%%%%%%%%%%%%%%%%%%%%%%%%%%%%%%%
% Lemma 5
%%%%%%%%%%%%%%%%%%%%%%%%%%%%%%%%%%%%%%%%%%%%%%%%%%%%%%
\begin{lemma}
\label{lem:A5}
{\rm 
Consider the weighted directed graph $G_R$ constituted
from the Markov process $\{P^\varepsilon\}$
induced by PHPIP applied to
the constrained strategic game $\Gamma$ in (\ref{eqn:strategiv_game}) with (\ref{eqn:utility}).
Let us denote by $\T = (\diag(\A),\E_l,\W_l)$ the minimal resistance tree 
with root $z^l \in \diag(\A)$.
If Assumptions \ref{ass:1} and \ref{ass:3} are satisfied, 
then the edge set $\E_l$ must be a subset of $\E_{s}$.}
\end{lemma}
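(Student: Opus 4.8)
The plan is to prove the claim by contradiction using an edge-swapping (tree-surgery) argument of the type suggested by Figs.~\ref{fig:16}--\ref{fig:17}. Suppose the minimal resistance $l$-tree $\T=(\diag(\A),\E_l,\W_l)$ contained an edge $e=(z^1,z^2)\in\E_d$ with $z^1=(a^1,a^1)$ and $z^2=(a^2,a^2)$. Because $(z^1,z^2)\notin\E_s$, the joint actions $a^1$ and $a^2$ differ in the actions of at least two sensors. Using the second item of Assumption~\ref{ass:1}, I would first connect $a^1$ to $a^2$ by changing one sensor's action at a time along feasible transitions, obtaining a chain of diagonal states $z^1=w^{(0)},w^{(1)},\dots,w^{(m)}=z^2$ with $(w^{(t)},w^{(t+1)})\in\E_s$ for every $t$. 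By Lemma~\ref{lem:A3} each such straight route has resistance strictly below $3/2$ and is itself the minimal-resistance path between its endpoints, so its resistance coincides with the corresponding edge weight in $G_R$.

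Next I would perform the surgery. Removing $e$ splits $\T$ into the subtree $\T_1$ containing $z^1$ and the subtree $\T_2$ containing both the root $z^l$ and $z^2$. Since $w^{(0)}=z^1\in\T_1$ and $w^{(m)}=z^2\in\T_2$, the chain above must cross the cut, so there is an index $t$ with $w^{(t)}\in\T_1$ and $w^{(t+1)}\in\T_2$; the edge $(w^{(t)},w^{(t+1)})\in\E_s$ joins the two components. I would then rebuild an $l$-tree by deleting $e$, reversing the unique path from $w^{(t)}$ up to $z^1$ inside $\T_1$ so that $w^{(t)}$ becomes the attachment point, and adding the crossing edge $(w^{(t)},w^{(t+1)})$. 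The result is again a spanning tree in which every node has a unique directed path to $z^l$.

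Finally I would compare resistances. The change in total resistance equals the resistance of the added $\E_s$ edge, plus the net cost of reversing the $\T_1$-path, minus the resistance of the deleted edge $e$. The added straight route contributes less than $3/2$. For the deleted $\E_d$ edge I would argue that its weight is at least $3/2$: any path realizing it either passes through an intermediate diagonal state -- in which case it already decomposes into $\E_s$ edges and the swap cannot increase resistance -- or escapes $z^1$ by having two sensors explore in the same round, which by the resistance accounting of Lemma~\ref{lem:A3} costs at least $2$. The delicate term is the reversal cost, and this is exactly where Lemma~\ref{lem:A4} enters: it controls the difference between the resistance of a straight route and that of its reverse in terms of the potential $\phi$, so the telescoping sum over the reversed $\T_1$-path is bounded by a potential difference between its endpoints and cannot overwhelm the gap between the deleted $\E_d$ edge and the added $\E_s$ edge. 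Collecting these estimates yields a tree of strictly smaller resistance, contradicting the minimality of $\T$ and forcing $\E_l\subseteq\E_s$. I expect the main obstacle to be precisely this resistance bookkeeping under edge reversal -- making the potential-difference bound from Lemma~\ref{lem:A4} line up with the $\E_d$-versus-$\E_s$ resistance gap so that the net change is provably negative.
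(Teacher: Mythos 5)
Your overall strategy --- swap the offending $\E_d$ edge for a cheap $\E_s$ edge and re-root part of the tree --- is the natural first attempt, but the step you yourself flag as ``delicate'' is where the argument actually breaks, and Lemma \ref{lem:A4} does not rescue it. After deleting $e=(z^1,z^2)$ and adding the crossing edge $(w^{(t)},w^{(t+1)})$, you must reverse the unique path in $\T_1$ from $w^{(t)}$ to $z^1$. Even in the best case where that path consists entirely of $\E_s$ edges, the telescoping identity behind Lemma \ref{lem:A4} gives a net reversal cost equal to a \emph{potential difference between the two endpoints of that path}, namely $\phi$ evaluated at $z^1$ minus $\phi$ evaluated at $w^{(t)}$. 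Assumption \ref{ass:3} bounds only \emph{single-step} utility (hence potential) increments by $1/2$; the endpoints of the reversed path may be many constrained steps apart, so this potential difference can be far larger than the total slack you have available, which is the gap between the deleted $\E_d$ weight ($\geq 2$) and the added $\E_s$ weight ($<3/2$), i.e.\ about $1/2$. There is no choice of crossing index $t$ that controls this term in general, and a secondary problem is that the $\T_1$-path need not consist of straight routes at all (the tree may contain several $\E_d$ edges), in which case Lemma \ref{lem:A4} does not even apply to it. This is precisely why the path-reversal trick is usable in the proof of Theorem \ref{thm:2} --- there the reversed path ends at a potential \emph{maximizer}, so the sign of the potential difference is known --- but not here, where $z^1$ and $w^{(t)}$ are arbitrary.

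The paper avoids reversal entirely. It first establishes the clean weight separation $w_{l_sl_s'} < 3/2 \le 2 < w_{l_dl_d'}$ between $\E_s$ and $\E_d$ edges (your accounting for this part is essentially right), and then runs the Chu-Liu/Edmonds algorithm on the weight-reversed graph $G_R'$ to construct the minimal $l$-tree directly. The only delicate point there is the cycle-contraction step: the adjusted weights $\tilde{w}_{ll'} = w'_{ll'} - w^{l}_{\rm cyc} + \bar{w}_{\rm cyc}$ shift every candidate edge into a node by a common amount lying in $(-1/2,0]$ (because all cycle-edge weights lie in $[1,3/2)$), so the $\E_s$-versus-$\E_d$ separation at threshold $3/2$ survives contraction, and Assumption \ref{ass:1} guarantees each cycle has at least one node with an incoming $\E_s$ edge from outside. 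Hence the algorithm never selects an $\E_d$ edge. If you want to keep an exchange-style proof you would need an argument that never reverses tree edges (arborescence optimality is not characterized by single-edge swaps, which is exactly why Chu-Liu/Edmonds is nontrivial); as written, your proposal has a genuine gap at the reversal-cost estimate.
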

%%%%%%%%%%%%%%%%%%%%%%%%%%%%%%%%%%%%%%%%%%%%%%%%%%%%%%
%
\begin{proof}
See Appendix \ref{app:5}.
\end{proof}
%
%%%%%%%%%%%%%%%%%%%%%%%%%%%%%%%%%%%%%%%%%%%%%%%%%%%%%%
% Fig 2

We are ready to prove Theorem \ref{thm:2}.
It is now sufficient to prove that all the 
stochastically stable states of $\{P^{\ve}\}$ 
are included in $\diag(\arg\max_{a\in \A}\phi(a))$,
%with the set of the potential function maximizers $\arg\max_{a\in \A}\phi(a)$
since the probability of $a(k) \in \arg\max_{a\in \A}\phi(a)$ is 
greater than the probability of $z(k) = (a(k-1), a(k))\in \diag(\arg\max_{a\in \A}\phi(a))$.
We also see from (\ref{eqn:recurrent_com_class}) and Lemmas \ref{lem:3} and \ref{lem:A1}
that we need only to prove that the states in $\diag (\A)$
with the minimal stochastic potential 
are included in $\arg\max_{a\in \A}\phi(a)$.

We first introduce the notations $z' = (a',a')\in \diag (\A)$ 
 with $a' \notin \arg\max_{a\in \A}\phi(a)$ 
 and $z^*=(a^*,a^*)\in \diag({\mathcal A})$ with $a^*\in 
 \arg\max_{a\in \A}\phi(a)$.
Let the minimal resistance tree for the state
$z'$ be denoted by $T$.
Then, there exists a unique path $\rho$
from $z^*$ to $z'$ over $T$.
From Lemma \ref{lem:A5}, the path $\rho$ corresponds to an $M$-straight-route for some $M$.
Now, we can build a tree $T'$ with root $z^*$ such that
only the path $\rho$ is replaced by its reverse path $\rho'$
(Fig. \ref{fig:18}).
Then, we have $\chi(\rho)>\chi(\rho')$
from Lemma \ref{lem:A4} since $\phi(a^*) > \phi(a')$.
Thus, the resistance of $T'$ is smaller than that of $T$
and the stochastic potential of $z^*$ is smaller than or equal to
the resistance of $T'$.
The statement holds regardless of the selection of $a'$.
This completes the proof.

\subsection*{Proof of Theorem \ref{thm:1}}

Let us next consider PIPIP with time-varying $\varepsilon(k)$ and
first prove strong ergodicity of the inhomogeneous Markov process
$\{P_k^\varepsilon\}$ induced by PIPIP. 
Here, a Markov process $\{P_k\}$ over a state space $\X$ 
is said to be {\it strongly ergodic} \cite{IM_BK76}
if there exists a stochastic vector $\mu ^*$ such 
that the following equation holds
for any distribution $\mu$ on $\X$ and time $k_0$.
\begin{equation}
{\rm lim}_{k\rightarrow \infty}\mu P(k_0,k) = \mu^*,\ 
P(k_0,k) := \prod^{k-1}_{k'=k_0}P_{k'},\  0 \leq k_0 < k.
\label{eqn:strong_ergodicity}
\end{equation}
If $\{P_k\}$ is strongly ergodic, the distribution $\mu$ converges 
to the unique distribution $\mu^*$ from any initial state.
Meanwhile, the process $\{P_k\}$ is said to be {\it weakly ergodic} 
\cite{IM_BK76} if the following equation holds for all
${x^1},x^2,x^3 \in \X$ and ${k_0}\in {\mathbb Z}_+$.
\begin{equation*}
\lim_{k\rightarrow \infty}(P_{x^1x^3}(k_0,k)-P_{x^2x^3}(k_0,k)) = 0
\end{equation*}
Here, we also use the following lemmas.
%%%%%%%%%%%%%%%%%%%%%%%%%%%%%%%%%%%%%%%%%%%%%%%%%%%%%%
% Proposition 2
%%%%%%%%%%%%%%%%%%%%%%%%%%%%%%%%%%%%%%%%%%%%%%%%%%%%%%
\begin{lemma}{\rm \cite{IM_BK76}}
\label{prop:2}
{\rm 
A Markov process $\{P_k\}$ is strongly ergodic if the following conditions hold:
{\bf (B1)} The Markov process $\{P_k\}$ is weakly ergodic.
{\bf (B2)} For each $k$, there exists a stochastic vector 
$\mu ^{k}$ on $\X$ such that $\mu ^{k}$ is the left
eigenvector of the transition matrix $P(k)$ with eigenvalue 1.
{\bf (B3)} The eigenvector $\mu^{k}$ in (B2) satisfies 
$\sum^{\infty}_{k=0}\sum_{x\in \X}| \mu^{k}_{x} - \mu^{k+1}_{x} | < \infty $.
Moreover, if $\mu ^{*}=\lim_{k\rightarrow \infty }\mu^{t}$, then $\mu ^{*}$ is 
the vector in (\ref{eqn:strong_ergodicity}).}
\end{lemma}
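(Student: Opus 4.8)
The plan is to prove strong ergodicity through the classical contraction argument built on Dobrushin's coefficient of ergodicity. For a stochastic matrix $M$ on the finite space $\X$, I would work with $\tau(M) = \tfrac{1}{2}\max_{x,x'\in\X}\sum_{y\in\X}|M_{xy}-M_{x'y}|\in[0,1]$. Two properties of $\tau$ are used repeatedly: submultiplicativity, $\tau(MN)\le\tau(M)\tau(N)$ (hence $\tau(MN)\le\tau(N)$), and contraction of zero-sum row vectors, $\|vM\|_1\le\tau(M)\|v\|_1$ whenever $\sum_x v_x=0$. Because $\X$ is finite, weak ergodicity (B1) is equivalent to $\tau(P(k_0,k))\to 0$ as $k\to\infty$ for every fixed $k_0$. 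The first step is to observe that (B3) makes $\{\mu^k\}$ a Cauchy sequence in $\ell^1$, so the limit $\mu^*=\lim_{k\to\infty}\mu^k$ exists and, the set of stochastic vectors being closed, is again stochastic; this $\mu^*$ is the candidate limiting vector in (\ref{eqn:strong_ergodicity}).

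Second, I would set up the core estimate via a telescoping identity that exploits the per-step stationarity supplied by (B2), namely $\mu^j P_j=\mu^j$. Writing $P(j,k)=P_jP(j+1,k)$ and using stationarity, each consecutive difference collapses,
\begin{equation*}
\mu^jP(j,k)-\mu^{j+1}P(j+1,k)=(\mu^j-\mu^{j+1})P(j+1,k),
\end{equation*}
so that summing over $j$ from $k_0$ to $k-1$ telescopes (since $P(k,k)$ is the identity) to
\begin{equation*}
\mu^{k_0}P(k_0,k)-\mu^k=\sum_{j=k_0}^{k-1}(\mu^j-\mu^{j+1})P(j+1,k).
\end{equation*}
As each $\mu^j-\mu^{j+1}$ is zero-sum, the contraction property gives the bound $\|\mu^{k_0}P(k_0,k)-\mu^k\|_1\le\sum_{j=k_0}^{k-1}\tau(P(j+1,k))\,\|\mu^j-\mu^{j+1}\|_1$.

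Third, and this is the step I expect to be the crux, I would drive this bound to zero by splitting the sum at an intermediate index $m$. For the head terms $j<m$, submultiplicativity via $P(j+1,k)=P(j+1,m)P(m,k)$ gives $\tau(P(j+1,k))\le\tau(P(m,k))$, so the head is at most $\tau(P(m,k))$ times the finite constant $\sum_{j\ge k_0}\|\mu^j-\mu^{j+1}\|_1$ supplied by (B3). For the tail terms $j\ge m$, bounding $\tau\le 1$ leaves $\sum_{j\ge m}\|\mu^j-\mu^{j+1}\|_1$, the tail of a convergent series. The delicate point is the order of the quantifiers: given $\delta>0$ one must first \emph{fix} $m$ large enough that the series tail falls below $\delta/2$, and only then invoke weak ergodicity to let $k\to\infty$, so that $\tau(P(m,k))$ with $m$ now frozen drives the head below $\delta/2$. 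This yields $\|\mu^{k_0}P(k_0,k)-\mu^k\|_1\to 0$, and combined with $\mu^k\to\mu^*$ it follows that $\mu^{k_0}P(k_0,k)\to\mu^*$.

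Finally, for an arbitrary initial distribution $\mu$, applying the contraction property to the zero-sum vector $\mu-\mu^{k_0}$ gives $\|\mu P(k_0,k)-\mu^{k_0}P(k_0,k)\|_1\le\tau(P(k_0,k))\,\|\mu-\mu^{k_0}\|_1\to 0$ by weak ergodicity. A triangle inequality combining this with the previous paragraph then shows $\|\mu P(k_0,k)-\mu^*\|_1\to 0$ for every $\mu$ and every $k_0$, which is precisely strong ergodicity with limiting vector $\mu^*$, establishing the lemma.
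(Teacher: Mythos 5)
Your proof is correct. The paper itself offers no proof of this lemma---it is quoted directly from Isaacson and Madsen \cite{IM_BK76}---and your argument (the telescoping identity built from the per-step stationary vectors $\mu^j P_j = \mu^j$, the Dobrushin-coefficient contraction of zero-sum vectors, and the split of the sum with the tail controlled first by (B3) before invoking weak ergodicity on the head) is essentially the standard proof of that cited theorem, so there is nothing to fault and no genuinely different route to compare.
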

\begin{lemma}{\rm \cite{IM_BK76}}
\label{lem:we}
{\rm 
A Markov process $\{P_k\}$ is weakly ergodic if and only if
there is a strongly increasing sequence of positive numbers
$k_{\iota},\ \iota \in {\mathbb Z}_+$ such that
\[
\sum_{\iota = 0}^{\infty}\min_{x^1,x^2\in \X} \sum_{x\in \BB}
\min\{P_{x^1x}(k_{\iota},k_{\iota+1}),P_{x^2x}(k_{\iota},k_{\iota+1})\}= \infty
\]
}
\end{lemma}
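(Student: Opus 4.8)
The plan is to prove this classical characterization through the \emph{Dobrushin ergodic coefficient}. For any stochastic matrix $P$ on the state space I would introduce
\[
\delta(P):=\max_{x^1,x^2}\tfrac{1}{2}\sum_{x}|P_{x^1x}-P_{x^2x}|,\qquad
\alpha(P):=1-\delta(P)=\min_{x^1,x^2}\sum_{x}\min\{P_{x^1x},P_{x^2x}\},
\]
and observe that the summand appearing in the lemma is exactly $\alpha(P(k_\iota,k_{\iota+1}))$; for brevity write $\alpha_\iota:=\alpha(P(k_\iota,k_{\iota+1}))$. The whole argument then reduces to three facts about $\delta$: that it controls row differences and so is the right handle on weak ergodicity, that it is submultiplicative, and that $\prod_\iota(1-\alpha_\iota)$ vanishes precisely when $\sum_\iota\alpha_\iota$ diverges.

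First I would establish the bridge between weak ergodicity and $\delta$: since $P_{x^1x^3}(k_0,k)-P_{x^2x^3}(k_0,k)$ is a single entry of a row difference, the weak-ergodicity requirement that this quantity tend to $0$ for all $x^1,x^2,x^3$ is equivalent to $\lim_{k\to\infty}\delta(P(k_0,k))=0$ for every fixed $k_0$. Next I would record the submultiplicativity $\delta(AB)\le\delta(A)\,\delta(B)$ for stochastic $A,B$; the cleanest route is to express a row difference of $AB$ as a signed combination of rows of $B$ weighted by a difference of rows of $A$, split that weight into its positive and negative parts (whose total masses coincide and equal $\delta(A)$), and bound the result by $\delta(A)\,\delta(B)$. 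Using the block decomposition $P(k_{\iota_0},k_J)=\prod_{\iota=\iota_0}^{J-1}P(k_\iota,k_{\iota+1})$ this yields
\[
\delta(P(k_{\iota_0},k_J))\le\prod_{\iota=\iota_0}^{J-1}\delta(P(k_\iota,k_{\iota+1}))=\prod_{\iota=\iota_0}^{J-1}(1-\alpha_\iota).
\]

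For sufficiency, assuming the series diverges, the elementary inequality $1-a\le e^{-a}$ gives $\prod_{\iota=\iota_0}^{J-1}(1-\alpha_\iota)\le\exp\bigl(-\sum_{\iota=\iota_0}^{J-1}\alpha_\iota\bigr)\to 0$, so $\delta(P(k_{\iota_0},k_J))\to 0$ along the subsequence. To reach an arbitrary starting time $k_0$ and an arbitrary $k$, I would pick $\iota_0$ with $k_{\iota_0}\ge k_0$ and use $\delta(P(k_0,k))\le\delta(P(k_0,k_{\iota_0}))\,\delta(P(k_{\iota_0},k))\le\delta(P(k_{\iota_0},k))$ together with $\delta\le 1$, inserting the nearest subsequence point, so $\delta(P(k_0,k))\to 0$ and weak ergodicity follows from the first step. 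For necessity, assuming weak ergodicity so that $\delta(P(k_0,k))\to 0$ for every $k_0$, I would build the sequence inductively: set $k_0=0$ and, given $k_\iota$, choose $k_{\iota+1}>k_\iota$ large enough that $\delta(P(k_\iota,k_{\iota+1}))\le\tfrac{1}{2}$, which is possible exactly because $\delta(P(k_\iota,\cdot))\to 0$; then $\alpha_\iota\ge\tfrac{1}{2}$ for all $\iota$ and the series is bounded below by $\sum_\iota\tfrac{1}{2}=\infty$.

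The step I expect to be the main obstacle is the submultiplicativity of $\delta$, since it is the one genuinely nontrivial inequality: the sign-splitting bookkeeping must be carried out so that the mass of the positive part of a row difference of $A$ is correctly identified with $\delta(A)$ and paired against row differences of $B$. Everything else, namely the translation to and from the definition of weak ergodicity and the product-versus-sum criterion, is routine once $\delta$ and $\alpha$ are in place.
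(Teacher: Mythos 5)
Your proof is correct. The paper itself gives no proof of this lemma --- it is quoted as a known result from the textbook \cite{IM_BK76} --- and your argument via the Dobrushin ergodic coefficient (the identity $\alpha(P)=1-\delta(P)$, submultiplicativity of $\delta$ over block products, and the equivalence of $\prod_\iota(1-\alpha_\iota)\to 0$ with $\sum_\iota\alpha_\iota=\infty$, plus the inductive block construction for necessity) is precisely the standard proof found in that reference, so there is nothing to reconcile.
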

%%%%%%%%%%%%%%%%%%%%%%%%%%%%%%%%%%%%%%%%%%%%%%%%%%%%%%

We next prove strong ergodicity of $\{P_k^\varepsilon\}$.
Conditions {\bf (B2)}, {\bf (B3)} in Lemma \ref{prop:2} 
can be proved in the same way as \cite{ZM_SIAM13}.
We thus mention only Condition {\bf (B1)}.
Recall now that the probability of transition 
$z^1\rightarrow z^2$ is given by (\ref{eqn:A.1}).
Since $\varepsilon(k)$ is strictly decreasing, there is $k_0\geq 1$ 
 such that $k_0$ is the first round satisfying
\begin{equation}
(1-\varepsilon(k))(1-\kappa\varepsilon(k) ^{\Delta _i })\geq 
 \frac{\varepsilon(k)}{C -1},\
1- \varepsilon(k) \geq \frac{\varepsilon(k) ^{(1-\Delta_i)}}{\kappa(C -1)}.
\label{eqn:conv_ep}
\end{equation}
The existence of $\varepsilon$ satisfying (\ref{eqn:conv_ep}) 
is guaranteed by (\ref{eqn:kappa}).
For all $k \geq k_0$, we have
\begin{equation*}
P_{z^1z^2}^\varepsilon(k) \geq 
\left(\frac{\varepsilon(k)}{C -1}\right)^n.
%\label{eq:4.13}
\end{equation*}
%
%The remaining part of the proof is proved in the same way as \cite{ZM_SIAM13}.
We next define
$x_z = \arg\min_{x\in \BB}P_{xz}(k,k+D+1)$
for any $z \in \BB$.
Then, similarly to \cite{ZM_SIAM13}, 
\[
\min_{z^0 \in \BB} P_{z^0z}(k,k+D+1)\geq P^{\ve(k)}_{x_zz^1}\cdots  
P^{\ve(k+D-1)}_{z^{D-1}z^D}P^{\ve(k+D)}_{z^Dz}
\geq \Big(\frac{\varepsilon(k)}{C -1}
\Big)^{n(D+1)}.
\]
Following \cite{ZM_SIAM13} again,
we have the following inequality with 
$k_{\iota} = (D+1)\iota$ and $(D+1)\iota_0\geq k_0$.
\begin{eqnarray}
\sum_{\iota = 0}^{\infty}\min_{z^1,z^2\in \BB}\sum_{z \in \BB}\min\{P_{z^1z}(k_{\iota},k_{\iota+1}),P_{z^2,z}(k_{\iota},k_{\iota+1})\}
\geq |\BB| \sum_{\iota = \iota_0}^{\infty}
\Big(\frac{\varepsilon(k)}{C -1}
\Big)^{n(D+1)}\nonumber\\
= \frac{|\BB|}{(C-1)^{(D+1)n}}\sum_{\iota=\iota_0}^{\infty}
\frac{1}{(D+1)\iota} = \infty
\nonumber
\end{eqnarray}
This inequality and Lemma \ref{lem:we} prove {\bf (B1)} and hence strong ergodicity of $\{P_k^\varepsilon\}$.
%Fundamentals on ergodicity of a Markov process are
%summarized in Cppendix \ref{app:6}.
%%
%%%%%%%%%%%%%%%%%%%%%%%%%%%%%%%%%%%%%%%%%%%%%%%%%%%%%%%%
%% Lemma 6
%%%%%%%%%%%%%%%%%%%%%%%%%%%%%%%%%%%%%%%%%%%%%%%%%%%%%%%
%\begin{lemma}\label{lem:A6}
%{\rm 
%The Markov process $\{P_k^\varepsilon\}$ induced by PIPIP 
%applied to the constrained strategic game $\Gamma$ in (\ref{eqn:strategiv_game}) 
% with (\ref{eqn:utility}) is strongly ergodic.}
%\end{lemma}
%%%%%%%%%%%%%%%%%%%%%%%%%%%%%%%%%%%%%%%%%%%%%%%%%%%%%%%
%%
%\begin{proof}
%We use Lemma \ref{prop:2} for the proof.
%\end{proof}
%
Thus, the distribution $\mu (\varepsilon(k))$ 
converges to the unique distribution $\mu ^*$  from any initial state.
In addition, we also have $\mu ^*=\mu (0) = {\lim}_{\varepsilon\rightarrow 0}\mu(\varepsilon)$
from ${\lim}_{k\rightarrow \infty}\varepsilon(k) = 0$.
We have already proved in Theorem \ref{thm:2}
that any state $z$ satisfying $\mu_z(0) > 0$ must be included 
in $\diag(\arg\max_{a\in \A}\phi(a))$.
Hence, (\ref{eqn:convergence}) holds and the proof of Theorem \ref{thm:1}
is completed.
\begin{figure}[t]
\begin{center}
\begin{minipage}[b]{4cm}
\begin{center}
\includegraphics[width=4cm]{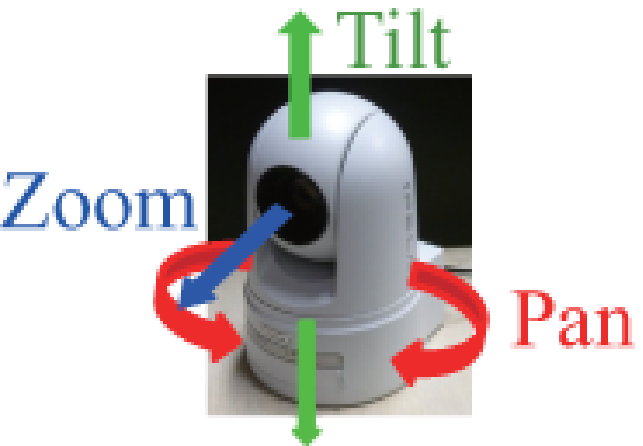}
\caption{PTZ network camera}
\label{fig:19}
\end{center}
\end{minipage}
\hspace{1.5cm}
\begin{minipage}[b]{8cm}
\begin{center}
\includegraphics[width=8cm]{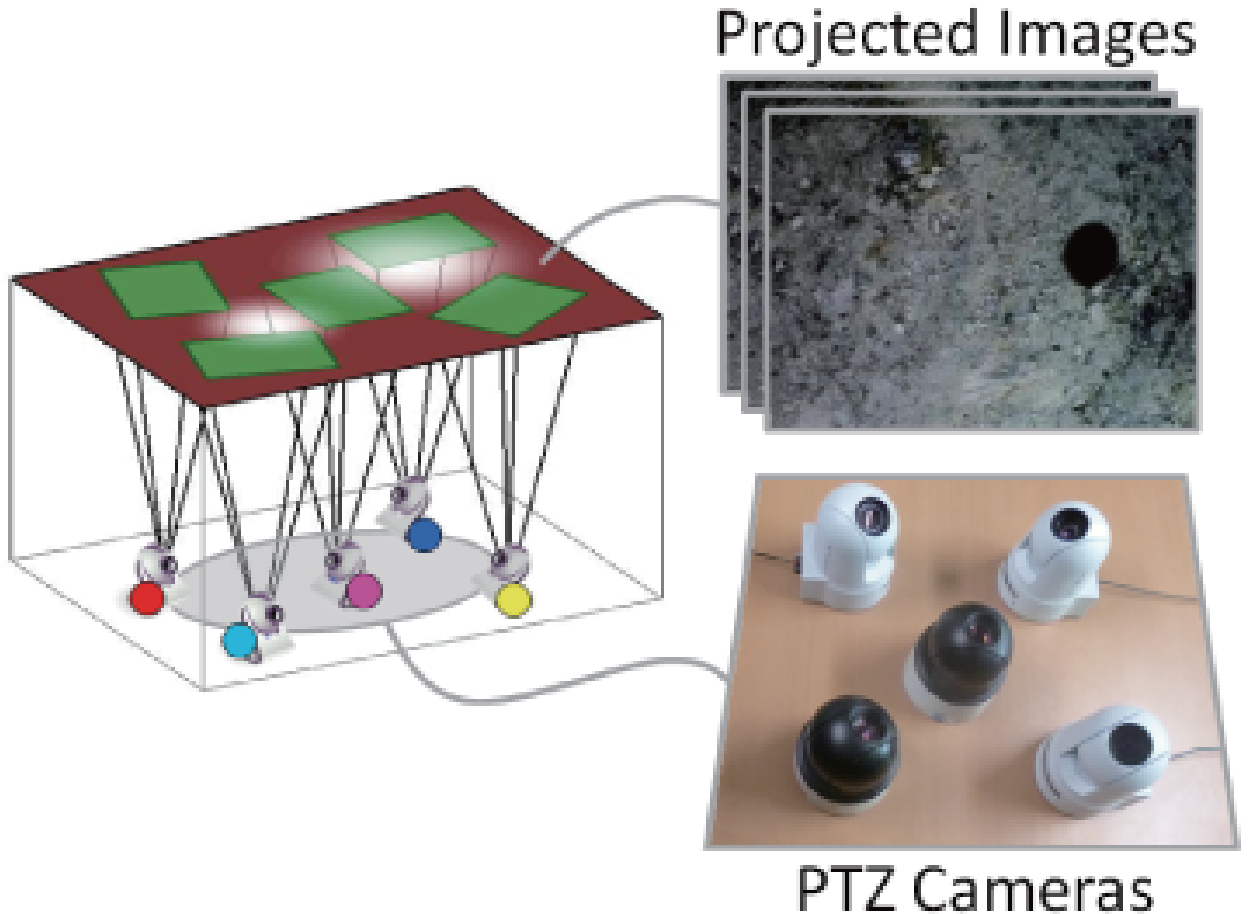}
\caption{Overview of scenario}
\label{fig:20}
\end{center}
\end{minipage}
\end{center}
\end{figure}

\section{Experimental Case Study}
\label{sec:6}

We finally demonstrate the effectiveness of the presented approach
through experiments on a testbed of PTZ visual sensor networks
consisting of 5 PTZ cameras $\V = \{v_1,v_2,v_3,v_4, v_5\}$ (Fig. \ref{fig:19}),
where two of them ($v_1$ and $v_2$) are IPELA SNC-EP520 (SONY Corp.) 
and the other three ($v_3$, $v_4$ and $v_5$) are IPELA SNC-RZ25N (SONY Corp.).
Note that the size of the acquired images is $640 \times 480$
($S_i \approx 3.0 \times 10^5$)
for every $v_i \in \V$.
In this experiment, all the algorithms including image processing 
and the learning algorithm are run via Visual C++ (Microsoft Corp.).

Let the five cameras monitor a ceiling of a room on which 
an image is projected as illustrated in Fig. \ref{fig:20}.
Namely, we regard the ceiling as the environment and 
divide it into 130 squares $\R = \{r_1,\cdots, r_{130}\}$, 10cm on a side.
Note that sensors $v_1,v_2,v_3,v_4, v_5$ are respectively marked by 
purple, yellow, cyan, blue and red circles.

\begin{figure}
\begin{center}
\begin{minipage}{5.4cm}
\begin{center}
\includegraphics[width=4cm,height=3cm]{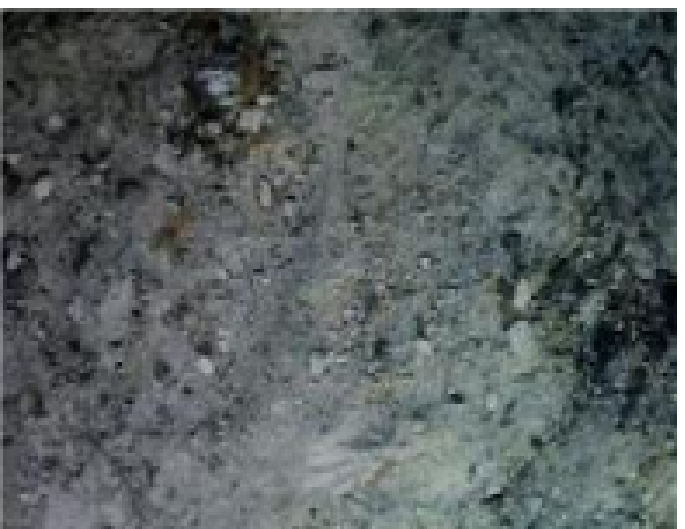}
\end{center}
\caption{Sample image}
\label{fig:21}
\end{minipage}
\begin{minipage}{5.4cm}
%\begin{center}
%\includegraphics[width=4cm,height=3cm]{fig22a.eps}
%\end{center}
%\end{minipage}
%\hspace{1.5cm}
%\begin{minipage}[t]{4cm}
\begin{center}
\includegraphics[width=4cm,height=3cm]{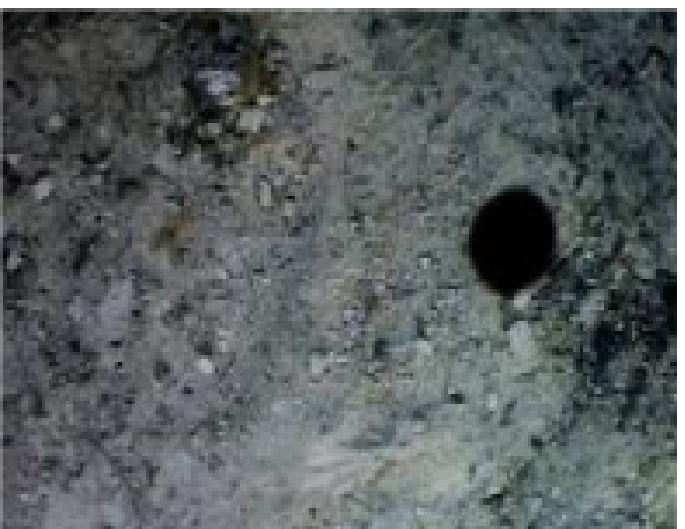}
\end{center}
\caption{Projected image until 1500 round}
\label{fig:22b}
\end{minipage}
%\hspace{1.5cm}
\begin{minipage}{5.4cm}
\begin{center}
\includegraphics[width=4cm,height=3cm]{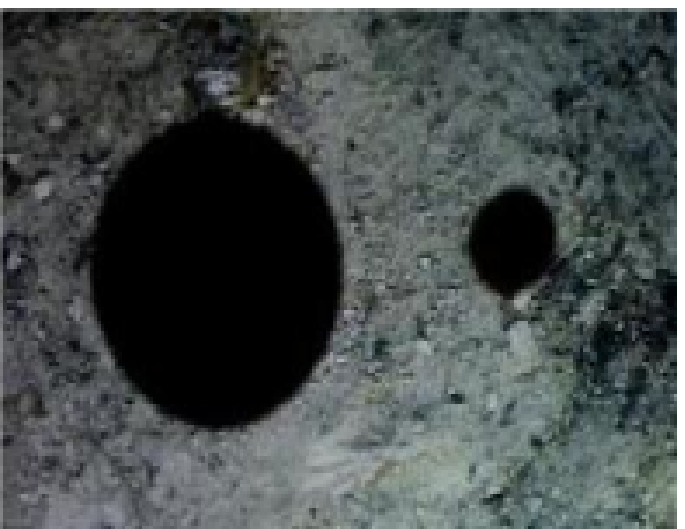}
\end{center}
\caption{Projected image after 1500 round}
\label{fig:22c}
\end{minipage}
\end{center}
\end{figure}

The action sets 
$\A_i=\varTheta_i\times \varPhi_i \times \varLambda_i$ are set as
\begin{eqnarray}
&&\varTheta_1 = \{(5/180)\pi n_{\theta}|\ n_{\theta} \in \{-34, -33, \cdots, 34\}\},\
\nonumber\\
&&\varTheta_i = \{(5/180)\pi n_{\theta}|\ n_{\theta} \in \{-2, -1, \cdots, 20\}\},\ i = 2,3,4,5,
\nonumber\\
&&\varPhi_i = \{(5/180)\pi n_{\varphi}|\ n_{\varphi} \in \{15, 16, \cdots, 18\}\},\ i \in \V,
\nonumber\\
&&\varLambda_1 = \varLambda_2 = \{6.8{\rm mm}, 13.6{\rm mm}\},\ 
\varLambda_3 = \varLambda_4 = \varLambda_5 = \{8.2{\rm mm}, 16.4{\rm mm}\}.
\nonumber
\end{eqnarray}
%for all sensors $v_i\in \V$.
Just to stabilize the evolution of 
the objective function,
we introduce
the constrained action sets
\begin{eqnarray}
 \bar{A}_i(a_i) = \{(\theta_i', \varphi_i', \lambda_i')|\ 
|\theta_i' - \theta_i|\leq (5/180)\pi,\ 
|\varphi_i' - \varphi_i|\leq (5/180)\pi,\
\lambda_i \in \varLambda_i
\}
\nonumber
\end{eqnarray}
for all $a_i \in \A_i$ and $v_i\in \V$,
which clearly satisfies Assumption \ref{ass:1}.
%Indeed, fully random selections from $\A_i$ 
%along with the exploration tend to bring about volatile behavior
%of the objective function and to take rather longer time to reach
%an ideal action.

\begin{figure}[t]
\begin{center}
\includegraphics[width=7cm]{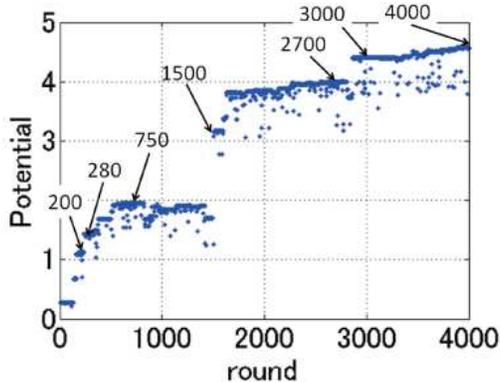}
\caption{Evolution of global objective function}
\label{fig:24}
\end{center}
\end{figure}
\begin{figure}[t]
\begin{center}
\subfigure[Initial State]{
\includegraphics[width=7.5cm,height=5cm]{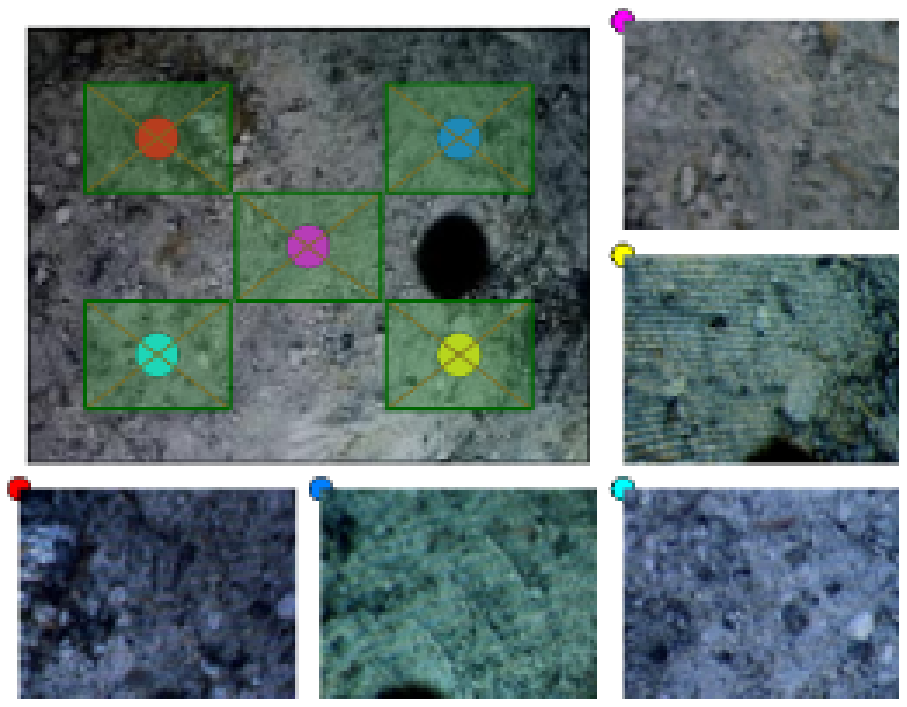}}
\hspace{.5cm}
\subfigure[200 round]{
\includegraphics[width=7.5cm,height=5cm]{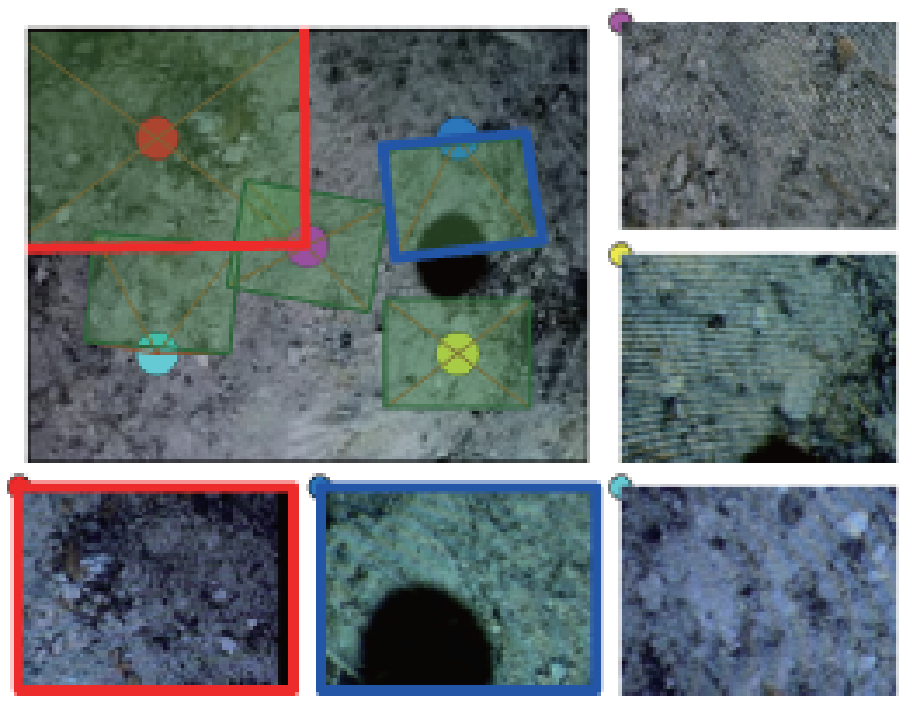}}

\subfigure[280 round]{
\includegraphics[width=7.5cm,height=5cm]{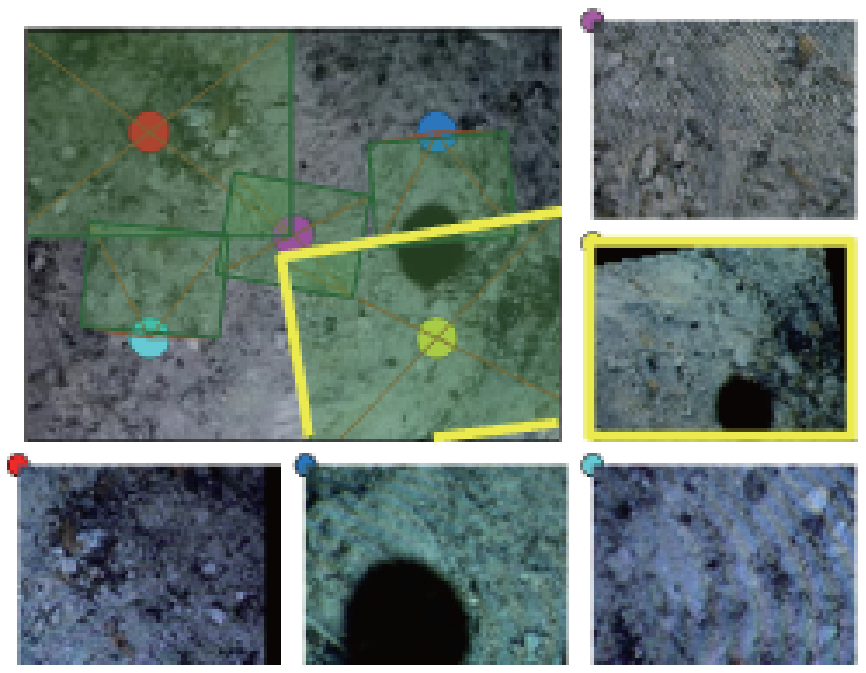}}
\hspace{.5cm}
\subfigure[750 round]{
\includegraphics[width=7.5cm,height=5cm]{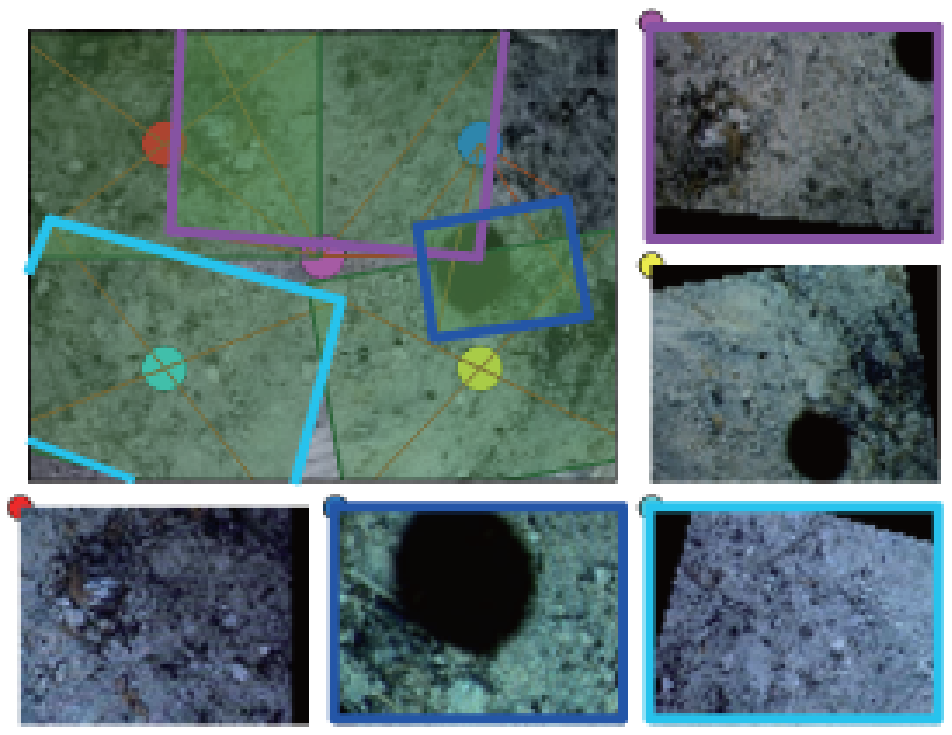}}
\caption{Snapshots of coverage area and acquired images for Fig. \ref{fig:22b}}
\label{fig:23a}
\end{center}
\end{figure}

The global objective function $W$ and utility function $U_i$
are selected as follows.
%where we regard $r_j$ as a visible polygon 
%if at least a point in $r_j$ is visible from $v_i$.
Suppose that 
%the sample image is given by 
%Fig. \ref{fig:21} and 
each sensor $v_i$ stores in memory
a part of the sample image in Fig. \ref{fig:21} 
corresponding to each action in $\A_i$.
Then, we employ (\ref{eqn:info1}) as $I^{\rm info}_{i,j}(a_i)$.
Since (\ref{eqn:info1}) inherently embodies the function of 
$I^{\rm qual}_{i,j}(a_i)$, this experiment does not use $I^{\rm qual}_{i,j}(a_i)$,
and the function $\tilde{W}_{i,j}$ in (\ref{eqn:reward}) is chosen as
\begin{equation}
\tilde{W}_{i,j} = 
\left\{
\begin{array}{ll}
2I^{\rm info}_{i,j}(a_i), &
\mbox{if }r_j \in \R_i(a_i)\mbox{ and } 2I^{\rm info}_{i,j}(a_i) > \gamma\\
\gamma, & \mbox{if }r_j \in \R_i(a_i)\mbox{ and }
2I^{\rm info}_{i,j}(a_i) \leq \gamma.\\
0, & \mbox{if } r_j \not\in \R_i(a_i)
\end{array}
\right..
\label{eqn:reward_exp}
\end{equation}
The positive parameter $\gamma > 0$ is introduced
to place value of monitoring a region containing no useful information
in preparation to future environmental changes.
In this experiment, we set $\gamma = 1.5 \times 10^{-2}$.
We next define $W_j$ and $W$ 
by (\ref{eqn:total_reward1}) and (\ref{eqn:social_welfare}),
and then scale them so that Assumption \ref{ass:3} is satisfied.
Such a scaling is possible since the maximal value of $W_{i,j}$
in (\ref{eqn:reward_exp}) can be easily estimated.
Finally, the utility function $U_i$ is designed
according to (\ref{eqn:utility}).

%\begin{figure}

%\end{figure}

In this experiment, we first project 
the image in Fig. \ref{fig:22b} on the environment,
which differs from the sample image in Fig. \ref{fig:21}
in that a small hole appears.
Then, we run the learning algorithm 
with $\varepsilon = 0.015$ and $\kappa = 0.120$ for 1500 rounds.
%from the initial actions of sensors set as in Fig. \ref{fig:23a}.
Note that all the sensors initially choose zoom-in mode (the larger $\lambda_i$).
After that, we change the image projected on the environment 
to the image in Fig. \ref{fig:22c} with a larger hole,
and leave the state for 2500 rounds. 
Due to the nature of the objective function,
it is intuitively desirable 
%under the initial situation until 1000 round that each sensor
%widely view the environment while avoiding the overlaps.
%Meanwhile, after the environmental changes, 
that sensors capture the holes 
with high resolution while keeping the total
coverage area as wide as possible.

\begin{figure}[t]
\begin{center}
\subfigure[1500 round]{
\includegraphics[width=7.5cm,height=5cm]{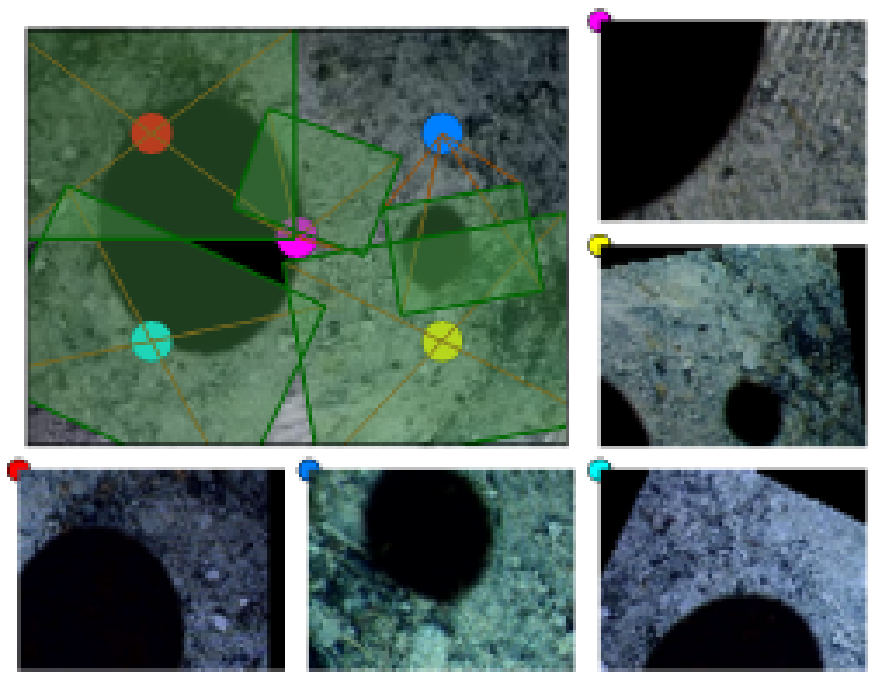}}
\hspace{.5cm}
\subfigure[2700 round]{
\includegraphics[width=7.5cm,height=5cm]{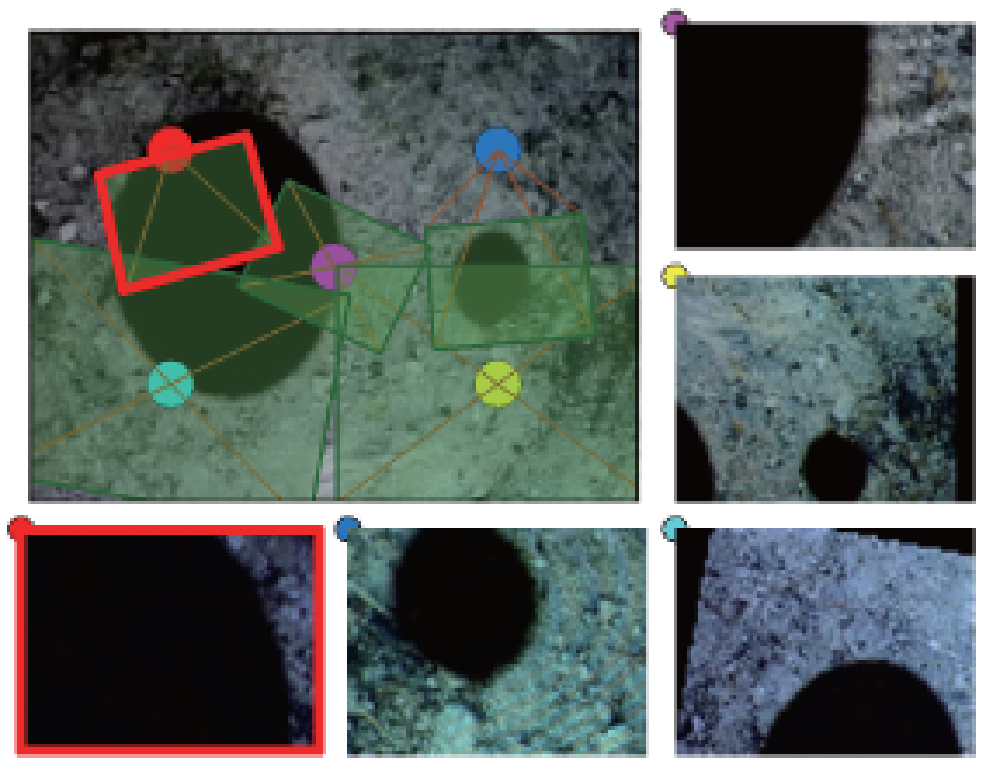}}

\subfigure[3000 round]{
\includegraphics[width=7.5cm,height=5cm]{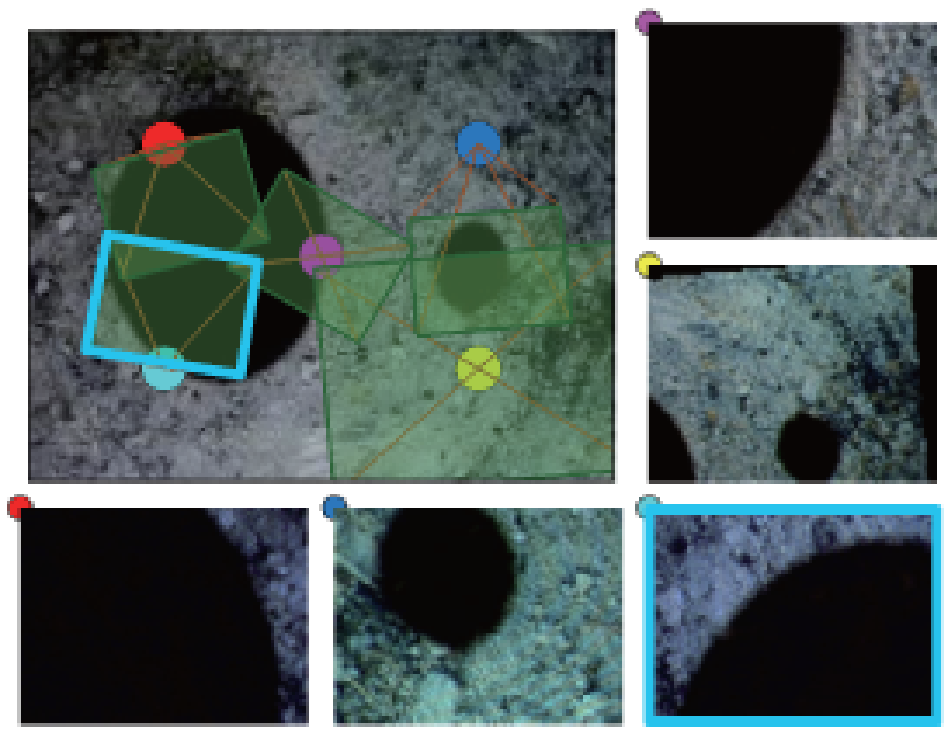}}
\hspace{.5cm}
\subfigure[4000 round]{
\includegraphics[width=7.5cm,height=5cm]{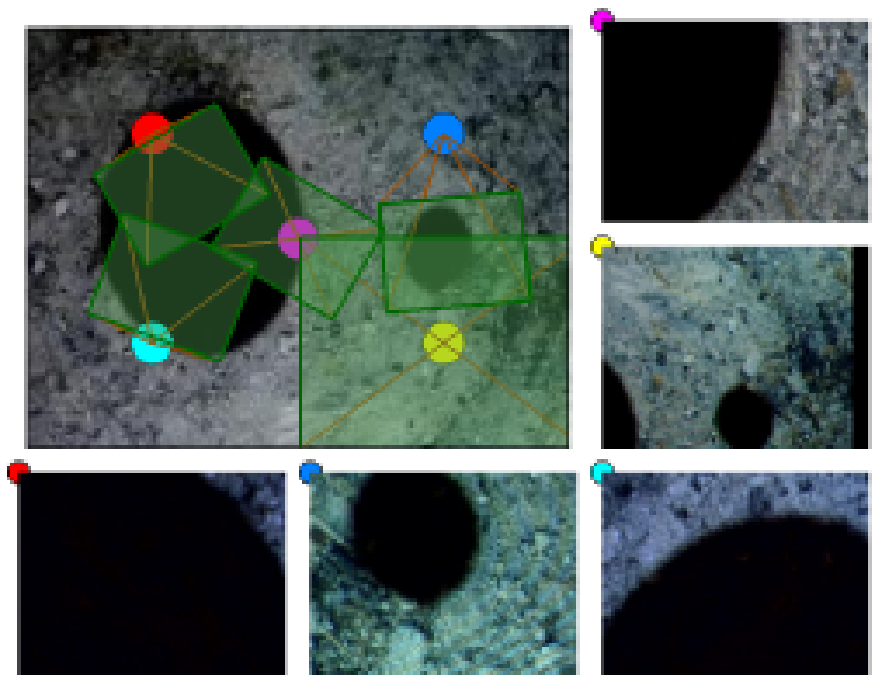}}
\caption{Snapshots of coverage area and acquired images for Fig. \ref{fig:22c}}
\label{fig:23b}
\end{center}
\end{figure}

The experimental results are shown in 
Figs. \ref{fig:24}, \ref{fig:23a} and  \ref{fig:23b}.
Fig. \ref{fig:24} illustrates
the evolution of the global objective function.
We can confirm from this figure that the actions are basically selected so as to maximize the
global objective function.
Figs. \ref{fig:23a} and \ref{fig:23b} show
the snapshots of the coverage area and the acquired images 
at the times marked on Fig. \ref{fig:24}, where green boxes
on the top left (large) pictures
describe the field of views.

%The actions at 1000 round are illustrated in Fig. \ref{fig:29},
%and 
In Fig. \ref{fig:23a}(b), sensor $v_5$ (red) widely covers the environment
by choosing zoom-out mode ($\lambda_2 = 6.8$mm) and $v_4$ (blue) captures the half of the hole,
with zoom-in mode,
which drive up the global objective function.
We see from Fig. \ref{fig:23a}(c) that 
$v_2$ (yellow) covers the remaining half of
the hole and also covers the unmonitored area,
which also increases the objective function.
Then, after a while, they reach a desirable configuration
in Fig. \ref{fig:23a}(d), where the hole is monitored by a sensor in zoom-in mode
and the remaining sensors achieve wide-ranging coverage
by choosing zoom-out mode while avoiding overlaps of field of views.

Fig. \ref{fig:23b}(a) illustrates the configuration 
at the time when the image in Fig. \ref{fig:22c}
starts to be projected.
We see from Fig. \ref{fig:23b}(b) that $v_5$ (red) 
monitors the hole in zoom-in mode,
and from Fig. \ref{fig:23b}(c) that $v_3$ (cyan)
also takes a similar action.
Then, after the pan and tilt angles are finely tuned
to avoid overlaps, they eventually reach the desirable configuration
depicted in Fig. \ref{fig:23b}(d).

All the above results show the effectiveness of the present approach.
Let us finally emphasize that the ideal results are achieved without using
any prior information of environmental changes 
on when, where and how the changes occur.

\section{Conclusions}
\label{sec:7}

In this paper, we have investigated a cooperative environmental monitoring
for PTZ visual sensor networks and presented a distributed solution to
the problem
based on game theoretic cooperative control and payoff-based learning.
We first have presented a novel optimal
environmental monitoring problem. 
Then, after constituting a potential game via an existing
utility design technique,
we have presented a payoff-based learning algorithm
based on \cite{MYAS_SIAM09} so that the vision sensors are led to not
just a Nash equilibrium but the potential function maximizes.
Finally, we have run experiments
to demonstrate the effectiveness of the present approach.

%\begin{ack}                               % Place acknowledgements
The authors would like to thank Mr. S. Mori for his
contributions in the experiments. % here.
%\end{ack}

\appendices
\section{Proof of Lemma \ref{lem:A1}}    
\label{app:1}

Condition {\bf (A2)} in Definition \ref{def:4}
is straightforward from the structure of PHPIP.
We thus prove only  {\bf (A1)} and {\bf (A3)} below.
%{\bf (A3)} in Definition \ref{def:4}
%since {\bf (A2)}
%is straightforward. %from the structure of PHPIP.

Consider a feasible transition $z^1\rightarrow z^2$ with 
$z^1=(a^0,a^1)\in \BB$ and $z^2=(a^1,a^2)\in \BB$, and
partition the set of sensors $\V$ 
according to their behaviors along with the transition as 
\begin{eqnarray}
\Lambda _1\!\!&\!\!=\!\!&\!\!\{v_i\in \V|\ U_i(a^1)\geq U_i(a^0),\ a^2_i\in \C_i(a_i^1)\setminus \{a_i^1\}\},\
%\nonumber\\
\Lambda _2=\{v_i\in \V|\ U_i(a^1)\geq U_i(a^0),\ a^2_i=a_i^1\},
\nonumber\\
\Lambda _3\!\!&\!\!=\!\!&\!\!\{v_i\in \V|\ U_i(a^1)< U_i(a^0),\ a^2_i\in \C_i(a_i^1)\setminus \{a_i^0, a_i^1\}\},
\nonumber\\
\Lambda _4\!\!&\!\!=\!\!&\!\!\{v_i\in \V|\ U_i(a^1)< U_i(a^0),\ a^2_i= a_i^1\},\
\Lambda _5=\{v_i\in \V|\ U_i(a^1)< U_i(a^0),\ a^2_i= a_i^0\}.
\nonumber
\end{eqnarray}
Then, the probability of transition $z^1\rightarrow z^2$
is described by
\begin{eqnarray}
P_{z^1z^2}^\varepsilon = \prod_{i\in \Lambda _1} 
\frac{\varepsilon}{| \C_i(a_i^1)| -1} \prod_{i\in \Lambda _2} 
(1-\varepsilon)\prod_{i\in \Lambda _3} \frac {\varepsilon}
{| \C_i(a_i^1)| - {\delta_i}}
 \prod_{i\in \Lambda _4}(1-\varepsilon)\kappa\varepsilon^{\Delta _i}
%\times 
%\nonumber\\
%&&\times
\prod_{i\in \Lambda_5}(1-\varepsilon)(1-\kappa\varepsilon^{\Delta _i})
\label{eqn:A.1}
\end{eqnarray}
where $\delta_i=1$ if $a_i^0=a_i^1$ and $\delta_i=2$ otherwise.
We see from (\ref{eqn:A.1}) that
the resistance $\chi(z^1\rightarrow z^2)$ of transition $z^1\rightarrow z^2$ defined in (\ref{eq:2.3}) is equal to 
$ |\Lambda _1| +|\Lambda _3| +\sum_{i\in \Lambda _4}\Delta _i$ since   
\begin{eqnarray}
0<\lim_{\varepsilon\rightarrow 0} 
\frac {P_{z^1z^2}^\varepsilon}
{\varepsilon^{|\Lambda_1| +|\Lambda _3| +\sum_{i\in \Lambda_4}\Delta _i}}
%\nonumber\\
%&=&
=\prod _{i\in \Lambda_1}\frac {1}{|\C_i(a_i^1)|-1}\prod _{i\in \Lambda_3}
\frac {1}{|\C_i(a_i^1)| - h_i} \kappa^{|\Lambda_4|}<\infty
\label{eqn:A.2}
\end{eqnarray}
holds.
Thus, {\bf (A3)} in Definition \ref{def:4} is satisfied.

Let us next check {\bf (A1)} in Definition \ref{def:4}. 
From the rule of taking exploratory actions in 
Algorithm 1 and the second item of Assumption \ref{ass:1},
we immediately see that the set of the states accessible 
from any $z\in {\mathcal B}$ is equal to ${\mathcal B}$. 
This implies that the perturbed Markov process 
$\{P^\varepsilon\}$ is irreducible.
We next check aperiodicity of $\{P^\varepsilon\}$. 
It is clear that 
any state in ${\rm diag}({\mathcal A})$ has period 1. 
Let us next pick any 
$(a^0,a^1)$ from the set ${\mathcal B}\setminus {\rm diag}({\mathcal A})$.
Since $a_i^0\in \C_i(a_i^1)$ holds iff $a_i^1\in \C_i(a_i^0)$
from Assumption \ref{ass:1},
the following two paths are both feasible:
$(a^0,a^1)\rightarrow (a^1,a^0)\rightarrow (a^0,a^1)$, 
$(a^0,a^1)\rightarrow (a^1,a^1)\rightarrow (a^1,a^0)\rightarrow (a^0,a^1)$.
This implies that the period of state $(a^0,a^1)$ is $1$ and
the process $\{P^\varepsilon\}$ is proved to be aperiodic.
Hence the process $\{P^\varepsilon\}$ is both 
irreducible and aperiodic, which means {\bf (A1)} in Definition \ref{def:4}.

\section{Proof of Lemma \ref{lem:A2}}    
\label{app:2}

Because of the rule at Step 2 of PHPIP, it is clear that 
any state belonging to $\diag( {\mathcal A})$ cannot move to another
state without explorations, which implies that 
all the states in $\diag(\A)$ itself form recurrent communication 
classes of the unperturbed Markov process $\{P^0 \}$.

Let us consider the states in ${\mathcal B}\setminus \diag({\mathcal A})$
and prove that such states are never included in the recurrent communication 
classes of the unperturbed process $\{P^0\}$.
Here, we use induction.
We first consider $n=1$.
If $U_1(a_1^1)\geq U_1(a_1^0)$, then 
the transition $(a_1^0,a_1^1)\rightarrow (a_1^1,a_1^1)$ is taken.
Otherwise, a sequence of transitions 
$(a_1^0,a_1^1)\rightarrow (a_1^1,a_1^0) \rightarrow (a_1^0,a_1^0)$ occurs.
Thus, for $n = 1$, the state 
$(a_1^0,a_1^1)\in {\mathcal  B}\setminus\diag({\mathcal A})$ is never included 
in recurrent communication classes of $\{P^0 \}$.

We next make a hypothesis that there exists a $n' \in {\mathbb Z}_+$ such 
that all the states in ${\mathcal B}\setminus \diag({\mathcal A})$ are not 
included in recurrent communication classes of 
the unperturbed Markov process $\{P^0\}$ for all $n\leq n'$.
Then, we consider the case $n = n'+1$, where
there are three possible cases:
\begin{description}
\item[(i)] $U_i(a^1)\geq U_i(a^0)\ {\forall i}\in \V = \{1,\cdots, n'+1\}$,
\item[(ii)] $U_i(a^1)< U_i(a^0)\ {\forall i}\in \V = \{1,\cdots, n'+1\}$,
\item[(iii)] $U_i(a^1)\geq U_i(a^0)$ for $n''$ agents where
$n'' \in \{2,\cdots, n'\}$.
\end{description}
In case (i), the transition 
$(a^0,a^1)\rightarrow (a^1,a^1)$ must occur for $\varepsilon= 0$ and,
%Namely, such states are not included in the recurrent communication classes.
in case (ii), the transition
$(a^0,a^1)\rightarrow (a^1,a^0)\rightarrow (a^0,a^0)$ should be selected.
Thus, all the states in ${\mathcal B}\setminus \diag({\mathcal A})$
satisfying (i) or (ii) are 
never included in recurrent communication classes.

In case (iii), at the next iteration, 
all the agents $i$ satisfying $U_i(a^1)\geq U_i(a^0)$
choose the current action.
Then, such agents possess a single action
in the memory and, 
in case of $\varepsilon= 0$, each agent has to choose either of
the actions in the memory.
Namely, these agents never change their actions in all subsequent iterations.
The resulting situation is thus the same as the case of $n = n' + 1 - n''$.
From the above hypothesis, we can conclude that
the states in case (iii) are also not included in recurrent communication classes. 
In summary, the states in
${\mathcal B}\setminus \diag({\mathcal A})$ are never included 
in the recurrent communication classes of $\{P^0 \}$.
The proof is thus completed.

\section{Proof of Lemma \ref{lem:A3}}    
\label{app:3}

Along with the straight route, the sensor $v_i$ such that 
$a^1_i \neq a^2_i$ first 
explores from $a_i^1$ to 
$a_i^2$, whose probability is 
%
%\begin {equation}
$(1-\varepsilon)^{n-1}{\varepsilon}/{(|\C_i(a_i^0)| - 1)}$.
%\label{eq:4.5}
%\end {equation}
%
This implies that the resistance of the transition 
$z^1 = (a^1,a^1)\rightarrow (a^1,a^2)$ is equal to $1$.

We next consider the transition from $(a^1,a^2)$ to $z^2 = (a^2,a^2)$.
If $U_i(a^2)\geq U_i(a^1)$ is true, the probability of this transition 
is $(1-\varepsilon)^{n}$,
whose resistance is equal to $0$.
Otherwise, the inequality $U_i(a^2)<U_i(a^1)$ holds and the 
probability of this transition is equal to 
$(1-\varepsilon)^{n}\times \kappa\varepsilon^{\Delta_i}$,
whose resistance is $\Delta_i$.
See Fig. \ref{fig:14} for the graphic description
of the above sentences.
Let us now notice that the resistance $\chi(\rho)$ 
of the straight route $\rho$ is equal to the sum of the resistances 
of transitions $(a^1,a^1)\rightarrow (a^1,a^2)$ and $(a^1,a^2)\rightarrow 
 (a^2,a^2)$ from (\ref{eqn:resistance_path1}),
and that $\Delta_i < 1/2$ from Assumption \ref{ass:3}.
Hence, we can conclude that $\chi(\rho)$ is smaller than $3/2$.

Let us next prove that the above resistance is minimal among
all paths from $z^1$ to $z^2$.
%Since all states in $\diag(\A)$ are recurrent communication classes,
%at least one sensor has to take exploration to escape the state.
Suppose now that there is a path $\rho'$ other than the straight route $\rho$
such that $\chi(\rho') < \chi(\rho) < 3/2$. 
Then, the path can accept only one 
exploration of one sensor since two explorations lead to resistance $2$.
We see from Algorithm \ref{alg:1} that any sensor 
with $a_i(k-1) = a_i(k-2)$ would not take 
an action other than $a_i(k-1)$ without exploration 
regardless of the other sensors' actions.
Thus, the sensor taking exploration has to be $v_i$
such that $a_i^1 \neq a_i^2$.

If we denote the chosen action through the exploration
by $a_i'$, then the available joint action in the future is limited to 
$a^1$ and $(a_i', a^1_{-i})$ since no exploration
will be taken.
Thus, in order that $z^2$ will be chosen in the future,
$a_i'$ must be equal to $a_i^2$, and then either of $a^1$ and $a^2$
can occur afterward.
Accordingly, the only way to reach $z^2$ at a round is to follow
the transition $(a^1,a^2)\rightarrow (a^2,a^2)$,
whose resistance is the same as $\chi(r) - 1$.
This contradicts the assumption of $\chi(\rho') < \chi(\rho)$,
and hence the proof is completed.

\section{Proof of Lemma \ref{lem:A4}}    
\label{app:4}

As shown in Appendix \ref{app:3}, the resistance of a straight route
must be equal to $1$ or $1 + \Delta_i \in (1, 3/2)$.
Suppose now that the route $\rho$ contains $M_{\rho}$ straight routes with
resistance greater than $1$, and $\rho'$ contains $M_{\rho'}$ such straight routes.
Let us also denote by $v_{i_\iota}$ the sensor taking exploration along with the straight route  
$z^{(\iota)}\Rightarrow z^{(\iota+1)}$ in $\rho$. 
Then, the sensor $v_{i_\iota}$ also takes exploration
along with $z^{(\iota)}\Leftarrow z^{(\iota+1)}$ in $\rho'$. 
We also use the notations 
\begin{equation*}
\tilde{\Delta}_{{\iota}} := U_{i_{\iota}}(a^{(\iota)}) - U_{i_{\iota}}(a^{(\iota+1)}),\
\tilde{\Delta}'_{{\iota}} := -\tilde{\Delta}_{{\iota}}.
\end{equation*}
Since $z^{(\iota)}\Rightarrow z^{(\iota+1)}$ is a straight route
and hence only $v_{i_{\iota}}$ changes his action along with the route,
the following equation holds from Lemma \ref{lem:1} and (\ref{eqn:potential_game}).
\begin{eqnarray}
\tilde{\Delta}_{{\iota}}= U_{i_{\iota}}(a^{(\iota)}) - U_{i_{\iota}}(a^{(\iota+1)})
= \phi(a^{(\iota)}) - \phi(a^{(\iota+1)}) 
%\nonumber\\
%&=& 
%U_{i_{\iota}}(a^{(\iota)}) - U_{i_{\iota}}(a^{(\iota+1)}) = - \Delta_{i_{\iota}}.
\label{eqn:B.1}
\end{eqnarray}
From the proof of Lemma \ref{lem:A3},
the resistance of $z^{(\iota)}\Rightarrow z^{(\iota+1)}$ in $\rho$
should satisfy
\begin{eqnarray}
\chi(z^{(\iota)}\Rightarrow z^{(\iota+1)}) %\nonumber\\
= \left\{
\begin{array}{ll}
1,&\mbox{if }U_{i_{\iota}}(a^{(\iota+1)})\geq U_{i_{\iota}}(a^{(\iota)})\\
1 + \tilde{\Delta}_{{\iota}} \in (1, 3/2), & \mbox{if }
U_{i_{\iota}}(a^{(\iota+1)})<U_{i_{\iota}}(a^{(\iota)}),
\end{array}
\right.
\nonumber
\end{eqnarray}
while the resistance of $z^{(\iota)}\Leftarrow z^{(\iota+1)}$ in $\rho'$
is given as
\begin{eqnarray}
 \chi(z^{(\iota)}\Leftarrow z^{(\iota+1)}) 
%\nonumber\\
= \left\{
\begin{array}{ll}
1+\tilde{\Delta}_{\iota}' \in 
 (1, 3/2),&\mbox{if }U_{i_{\iota}}(a^{(\iota+1)})\geq U_{i_{\iota}}(a^{(\iota)})\\
1, & \mbox{if }
U_{i_{\iota}}(a^{(\iota+1)})<U_{i_{\iota}}(a^{(\iota)}).
\end{array}
\right.\nonumber
\end{eqnarray}
Namely, either of the resistances of 
$z^{(\iota)} \Rightarrow z^{(\iota+1)}$ and $z^{(\iota+1)} \Leftarrow z^{(\iota)}$
is exactly $1$ and the other is greater than $1$ (Fig. \ref{fig:27})
except for the case that $U_i(a^{(\iota+1)}) = U_i(a^{(\iota)})$
in which the resistances are both equal to $1$.
%Namely, the inequality $M_{\rho} + M_{\rho'} \leq M - 1$ holds true.
Let us now collect all the $\tilde{\Delta}_{{\iota}}$ such that
the resistance
of $z^{(\iota)}\Rightarrow z^{(\iota+1)}$ is greater than $1$
and number them as
$\bar{\Delta}_1, \cdots, \bar{\Delta}_{M_{\rho}}$.
Similarly, we define $\bar{\Delta}'_1, \cdots, \bar{\Delta}'_{M_{\rho'}}$
for the reverse route $\rho'$.
Then, from (\ref{eqn:B.1}), we obtain
\begin{equation}
\Delta_1 +  \cdots + \Delta_{M_{\rho}} - 
(\Delta'_1 +  \cdots + \Delta'_{M_{\rho'}}) 
= \phi(a^1)-\phi(a^2).
\label{eqn:B.2}
\end{equation}
Note that (\ref{eqn:B.2}) holds
even in the presence of pairs $(a^{(\iota)},a^{(\iota+1)})$ such that
$U_{i_{\iota}}(a^{(\iota+1)}) = U_{i_{\iota}}(a^{(\iota)})$.
Since $\Delta_1+\cdots+\Delta_{M_{\rho}}=\chi(\rho)-(M-1)$ 
and $\Delta'_1+\cdots+\Delta'_{M_{\rho'}}=\chi(\rho')-(M-1)$ 
from (\ref{eqn:resistance_path1}), we obtain
%
%\begin {equation}
$\chi(\rho) = \chi(\rho') + \phi(a^0) - \phi(a^1)$,
%\label{eq:4.10}
%\end {equation}
%
which means the statement of this lemma.

\begin{figure}
\begin{center}
\includegraphics[width=12cm]{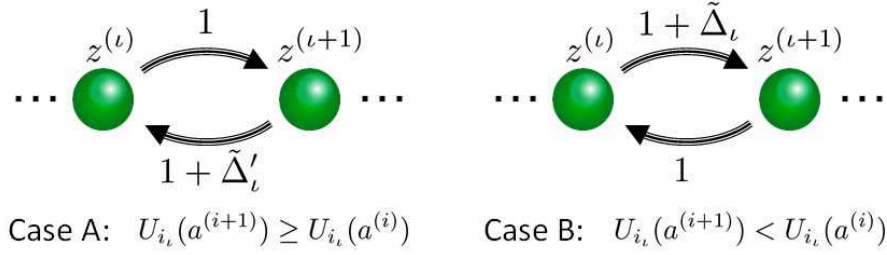}
\caption{Resistance of a straight route (The numbers around
arrows describe resistances of paths)}
\label{fig:27}
\end{center}
\end{figure}

\section{Proof of Lemma \ref{lem:A5}}    
\label{app:5}

The edges of $G_R$, denoted by $\E_R$, are divided into 
$\E_{s}$ in (\ref{eqn:set_es}) and $\E_d := \E_R\setminus \E_s$.
From Lemma \ref{lem:A3}, the weights of the edges in $\E_s$ 
are smaller than $3/2$.
We next consider the weights %of the edges in $\E_d$.
%Now, we consider 
of an edge from $z^1 = (a^1,a^1)\in \diag (\A)$
to $z^2 = (a^2,a^2)\in \diag(\A)$ such that $(z^1,z^2)\in \E_d$.
Then, there exist more than $2$ sensors 
such that $a^1_i \neq a^2_i$, or 
only one sensor $v_i$ such that $a^1_i \neq a^2_i$
satisfies $a^2_i \notin \C_i(a^1_i)$.
%but all feasible paths from $a^1_i$ to $a^2_i$
%have length greater than $2$.
In both cases, at least two explorations must happen to
reach $z^2$ and hence
the resistance of any path in $\E_d$
has to be greater than $2$.
Namely, we have
\begin{equation}
w_{l_sl_s'} < w_{l_d, l_d'}\ \ {\forall} (H_{l_s}, H_{l_s'})\in \E_s 
\mbox{ and } (H_{l_d}, H_{l_d'})\in \E_d.
\label{eqn:EsEd}
\end{equation}

We next form a graph 
$G_R' = (\H, \E_R, W_R')$ by just reversing
the weights of all edges over graph $G_R$.
Namely, the weight $w_{ll'}$ on $G_R$
is equal to  the weight $w_{l'l}$ on $G_R'$.
Let us now apply the Chu-Liu/Edmonds Algorithm \cite{E_BJRNBS67}
to the graph $G_R'$ and compute the minimal tree with a root $H_R$
such that there is a unique path from $H_R$ to any node 
(the directions of edges are opposite to the tree defined in Subsection \ref{sec:5.1}).
Then, it is not difficult to confirm that reversing the directions of all 
edges of the minimal tree yields the minimal resistance tree with root $H_R$ over $G_R$.
Hence, it is sufficient to prove that 
the Chu-Liu/Edmonds Algorithm provides a tree 
consisting only of edges in $\E_{s}$.

In the algorithm \cite{E_BJRNBS67}, every node in $\H \setminus \{H_R\}$ initially chooses the
incoming edge with the minimal weight.
Then, only edges in $\E_s$
can be chosen at the initial step from (\ref{eqn:EsEd}).
If the resulting graph $\T_0$ consisting only of such edges is acyclic,
then the minimum spanning tree is formed and the statement of this lemma is true.
Otherwise, there is at least one cycle in $\T_0$.

We next focus on one of such cycles denoted by $G_{\rm cyc} = (\H_{\rm cyc}, 
 \E_{\rm cyc}, \W_{\rm cyc})$, where all edges in $\E_{\rm cyc}$ have to
be contained in $\E_s$.
In the following, the weight of the edge in $\E_{\rm cyc}$ 
entering a node $H_l \in \H_{\rm cyc}$ is denoted by $w_{\rm cyc}^l$,
and we define $\bar{w}_{\rm cyc} := \min_{H_l \in \H_{\rm cyc}}w_{\rm cyc}^l$.
Then, each node $H_l$ in $\H_{\rm cyc}$ computes 
the temporal weights for all edges from $H_{l'}\notin \H_{\rm cyc}$
to $H_l$ over $G_R'$ by
\begin{equation}
 \tilde{w}_{ll'} = w'_{ll'} - w_{\rm cyc}^l + \bar{w}_{\rm cyc},
\label{eqn:temporal_weight}
\end{equation}
and identifies a node $H_{l'}$ providing the minimal $\tilde{w}_{ll'}$, where
such a node $H_{l'}$ is denoted by $\bar{H}_l$
and the corresponding $\tilde{w}_{ll'}$ is denoted by $\bar{w}_{l}$.
Then, we seek $H_{l^*}\in {\mathcal H}_{\rm cyc}$ 
with the minimal $\bar{w}_{l}$
and replace the edge entering $H_{l^*}$ over $\T_0$  by 
the edge $(H_{l^*}, H_l)$.

Chu-Liu/Edmonds Algorithm repeats the above process and eventually
finds the minimum spanning tree.
Namely, if we can prove that $(H_{l^*}, H_l)$ must be included in $\E_s$,
the statement of the lemma is true.
Now, notice that there exists at least one $H_{l} \in \H_{\rm cyc}$
such that the set
\[
 \H_l = \{H_{l'} \in \H \setminus \H_{\rm cyc}|\
(H_{l'},H_l) \in \E_s\}
\]
is not empty from Assumption \ref{ass:1}.
For such a node $H_l$, the node $\bar{H}_l$ must be chosen from $\H_l$
since (\ref{eqn:EsEd}) holds and
the second and third terms in (\ref{eqn:temporal_weight}) are common
for all options of $H_{l'} \notin \H_{\rm cyc}$.
Then, $w'_{ll'}$ in (\ref{eqn:temporal_weight}) must be smaller than 
$3/2$ since $(H_{l'},H_l) \in \E_s$, and 
$- w_{\rm cyc}^l + \bar{w}_{\rm cyc} \in (-1/2,0]$ holds
since $w_{\rm cyc}^l \in [1, 3/2)$, $\bar{w}_{\rm cyc} \in [1, 3/2)$,
and $\bar{w}_{\rm cyc} := \min_{H_l \in \H_{\rm cyc}}w_{\rm cyc}^l$.
Namely, $\bar{w}_l$ must be smaller than $3/2$
for all $H_l$ such that $\H_l$ is not empty.
In contrast, for any $H_l$ such that $\H_l$ is empty,
$w'_{ll'}$ in (\ref{eqn:temporal_weight}) must be greater than or equal to $2$
because of $(H_l, H_{l'}) \in \E_d$.
Then, by using $- w_{\rm cyc}^l + \bar{w}_{\rm cyc} \in (-1/2,0]$ again,
$\bar{w}_l$ is proved to be greater than $3/2$
and such $H_{l'}$ is never selected as $H_{l^*}$.
Namely, $(H_{l^*}, H_l)$ must be contained in $\E_s$.
This completes the proof.


\begin{thebibliography}{99}
%\bibitem{RASM12}
%M. Dunbabin and L. Marques (Eds.), 
%Special Issue on Robots for Environmental Monitoring,
%{\it IEEE Robotics \& Automation Magazine}, Vol. 19, No. 1, 2012.


\bibitem{DM_RASM12}
M. Dunbabin and L. Marques,
``Robots for environmental monitoring: Significant advancements and applications,''
{\it IEEE Robotics \& Automation Magazine},
Vol. 19, No. 1, pp. 24--39, 2012.



%\bibitem{OFL_TAC04}
%P. Ogren, E. Fiorelli and N. E. Leonard (2004).
%``Cooperative Control of Mobile Sensor Networks: Adaptive Gradient Climbing in A Distributed
%Environment,''
%{\it IEEE Trans. on Automatic Control}, Vol. 49, No. 8, pp. 1292--1302, 2004.

%\bibitem{sensor1}
%P. Lieberzeit and F. Dickert,
%``Sensor technology and its application in environmental analysis,''
%{\it Anal. Bioanal. Chem.}, Vol. 387, No. 1, pp. 237--247, 2007.

%\bibitem{sensor2}
%C. Ho, A. Robinson, D. Miller, and M. Davis, 
%``Overview of sensors and needs for environmental monitoring,''
%{\it Sensors}, Vol. 5, No. 1, pp. 4--37, 2010.



%\bibitem{vision1}
%M. Bryson, A. Reid, F. Ramos, and S. Sukkarieh,
%``Airborne vision-based mapping and classification of large farmland environments,''
%{\it J. Field Robot.}, Vol. 27, No. 5, pp. 632--655, 2010.



%\bibitem{vision2}
%L. Merino, F. Caballero, J. Martnez-de Dios, J. Ferruz, and A. Ollero,
%``A cooperative perception system for multiple UAVs: application to
%automatic detection of forest fires,''
%{\it J. Field Robot.}, Vol. 23, No. 3, pp. 165--184, 2006.





\bibitem{CS_BK}
A. K. Roy-chowdhury and B. Song,
{\it Camera networks: The acquisition and analysis of videos over wide areas},
Morgan and Claypool Publishers, 2011.





%\bibitem{SB_BK98}
%R. S. Sutton and A. G. Barto,
%{\it Reinforcement Learning: An Introduction}, MIT Press, 1998.



%\bibitem{BT_SS93}
%D. Bertsimas and J. N. Tsitsiklis,
%``Simulated annealing,''
%{\it Statistical Science}, Vol. 8, No. 1, pp. 10-15, 1993.



\bibitem{MAS_SMC09}
J. R. Marden, G. Arslan and J. S. Shamma,
``Cooperative control and potential games,''
{\it IEEE  Transactions on Systems,  Man and Cybernetics}, Vol. 39, No. 6, 
	pp. 1393--1407, 2009.



%\bibitem{MS96}
%D. Monderer and L. Shapley,
%``Potential games,''
%{\it Games and Economic Behavior},
%Vol. 14, No. 1, pp. 124--143, 1996.



\bibitem{GMW_PER10}
R. Gopalakrishnan, J. R. Marden and A. Wierman,
``An architectural view of game theoretic control,''
{\it ACM SIGMETRICS Performance Evaluation Review}, Vol. 38, No. 3, pp. 31--36, 2011.



\bibitem{MS08_GEB12} 
J. R. Marden and J. S. Shamma,
``Revisiting log-linear learning :
Asynchrony, completeness and payoff-based implementation,''
{\it Games and Economic Behavior}, Vol. 75, No. 2, pp. 788--808, 2012.



\bibitem{MYAS_SIAM09}
J. R. Marden, H. P. Young, G. Arslan and J. S. Shamma,
``Payoff-based dynamics for multi-player weakly acyclic games,'' 
{\it SIAM Journal on Control and Optimization}, Vol. 48, No. 1, pp. 373--396, 2009.



\bibitem{MYP_CDC12}
J. R. Marden, H. P. Young and L. Y. Pao,
``Achieving Pareto optimality through distributed learning,''
{\it Proc. of the 51st IEEE Conference on Decision and Control}, pp. 7419--7424, 2012.



\bibitem{GHF_ACC12}
T. Goto, T. Hatanaka and M. Fujita,
``Payoff-based inhomogeneous partially irrational play for potential
game theoretic cooperative control of multi-agent systems,''
{\it Proc. of 2012 American Control Conference}, pp. 2380--2387, 2012.




\bibitem{SDKFC_SP11}
B. Song, C. Ding, A. Kamal, J. A. Farrell and A. K. Roy-Chowdhury,
``Distributed camera networks: integrated sensing and analysis for wide
area scene understanding,''
{\it IEEE Signal Processing Magazine}, Vol. 28, No. 3, pp. 20--31, 2011.



\bibitem{HF_TAC13}
T. Hatanaka and M. Fujita,
``Cooperative estimation of averaged 
3D moving target object poses via networked visual motion observers,''
{\it IEEE Transactions on Automatic Control}, Vol. 58, No. 3, to appear, 2013.




\bibitem{TV_SP11}
R. Tron and R. Vidal,
``Challenges faced in deployment of camera sensor networks,''
{\it IEEE Signal Processing Magazine}, Vol. 28, No. 3, pp. 32--45, 2011.



\bibitem{BLCFZ_ACC12}
D. Borra, E. Lovisari, R. Carli, F. Fagnani and S. Zampieri,
``Autonomous calibration algorithms for networks of cameras,''
{\it Proc. of 2012 American Control Conference}, pp. 5126--5131, 2012.
%\bibitem{AC_BK}
%H. Aghajan and A. Cavallaro (Eds.),
%{\it Multi-camera networks: principles and applications}. 
%Academic Press, 2011.



\bibitem{DSMFC_SP12}
C. Ding, B. Song, A. Morye, J. A. Farrell and A. K. Roy-Chowdhury,
``Collaborative sensing in a distributed PTZ camera network,''
{\it IEEE Transactions on Image Processing}, Vol. 21, No. 7, pp. 3282--3295, 2012.



\bibitem{ZM_SIAM13}
M. Zhu and S. Martinez,
``Distributed coverage games for energy-aware mobile sensor networks,''
{\it SIAM Journal on Control and Optimization}, Vol. 51, No. 1, pp. 1--27, 2013.



\bibitem{SPF_ACC12}
M. Spindler, F. Pasqualetti and F. Bullo,
``Distributed multi-camera synchronization for smart-intruder detection,''
{\it Proc. of 2012 American Control Conference}, pp. 5120--5125, 2012.








\bibitem{CL_EJC05}
C. G. Cassandras and W. Li, 
``Sensor networks and cooperative control,''
{\it European Journal of Control}, Vol. 11, No. 4--5, pp. 436--463, 2005.


\bibitem{BCM_BK09}
S. Martinez, J. Cortes, and F. Bullo,
``Motion coordination with distributed information,'' 
{\it IEEE Control Systems Magazine}, Vol. 27, No. 4, pp. 75--88, 2007.



\bibitem{CZ_CDC08}
C. H. Caicedo-N and M. Zefran,
``A coverage algorithm for a class of non-convex regions,''
{\it Proc. of the 47th IEEE International Conference on Decision and 
	Control}, pp. 4244--4249, 2008.

\bibitem{HMA_CDC11}
J. Habibi, H. Mahboubi and A. G. Aghdam, ``A nonlinear optimization approach to coverage problem in mobile sensor networks,'' 
{\it Proc. of the 50th IEEE Conference on Decision and Control}, pp. 7255--7261, 2011.

\bibitem{WH_TAC10}
Y. Wang and I. I. Hussein,
``Awareness coverage control over large-scale domains with intermittent communications,''
{\it IEEE Transactions on Automatic Control}, Vol. 55, No. 8, pp. 1850-1859, 2010.


\bibitem{SJS_TAC12}
M. S. Stankovic, K. H. Johansson and D. M. Stipanovic,
``Distributed seeking of Nash equilibria with applications to mobile sensor networks,''
{\it IEEE Transactions on Automatic Control}, Vol. 57, No. 4, pp. 904--919, 2012.


%\bibitem{pm0}
%I. Rekleitis, V. Lee-Shue, A. New, and H. Choset, 
%``Limited communication, multi-robot team based coverage,'' 
%{\it Proc. of IEEE International Conference on Robotics and Automation 2004}, 
%Vol. 4, 2004, pp. 3462--3468, 2004.


\bibitem{HSS_CDC08}
P. Hokayem, D. M. Stipanovic and M. Spong, ``On persistent coverage
control,''  {\it Proc. of the 46th IEEE Conference on Decision and Control}, 
pp. 6130--6135, 2008.


\bibitem{HHGHFS_IFAC08}
N. Hubel，S. Hirche，A. Gusrialdi，T. Hatanaka，M. Fujita and O. Sawodny,
``Coverage control with information decay in dynamic environments,'' 
{\it Proc. of the 17th IFAC World Congress},
pp. 4180--4185, 2008


\bibitem{SSR_ICRA11}
S. L. Smith, M. Schwager and D. Rus, 
``Persistent monitoring of changing environments using robots with limited range sensing,'' 
{\it Proc. of the IEEE International Conference on Robotics and Automation 2011}, pp. 5448--5455, 2011.


\bibitem{CLD_CDC12}
C. G. Cassandras, X. Lin and X. C. Ding,
``An optimal control approach to the multi-agent persistent monitoring problem,''
{\it Proc. of the 51st IEEE Conference on Decision and Control}, 
pp. 2795-2800, 2012.







%\bibitem{MS96b}
%D. Monderer and L. Shapley,
%``Fictitious play property for games with identical interests,''
%{\it Journal of Economic Theory}, Vol. 68, pp. 258--265, 1996.



%\bibitem{SA03}
%S. Hart and A. Mas-Colell,
%``Regret-based continuous-time dynamics,''
%{\it Games and Economic Behavior}, Vol. 45, No. 2, pp. 375--394, 2003.



\bibitem{ZRG_CDC11}
Y. Zhang, M. Rotea and N. Gans,
``Sensors searching for interesting things: extremum
seeking control on entropy maps,''
{\it Proc. of 50th IEEE Conference on Decision and Control and
European Control Conference}, pp. 4985--4991, 2011.



\bibitem{TNHHTPN_JGR11}
H. Takenaka, T. Y. Nakajima, A. Higurashi, A. Higuchi, T.
Takamura, R. T. Pinker and T. Nakajima,
``Estimation of solar radiation using a neural network based on radiative transfer,''
{\it Journal of Geophysical Research}, Vol. 116, D08215, 2011.



\bibitem{GUCSCK_AMTD12}
M. S. Ghonima, B. Urquhart, C. W. Chow, J. E. Shields,
A. Cazorla and J. Kleiss,
``A method for cloud detection and opacity
classification based on ground based sky
imagery,''
{\it Atmospheric Measurement Techniques Discussions},
Vol. 5, No. 4, pp. 4535--4569, 2012.


%\bibitem{S_53}
%L. S. Shapley. 
%{\it A Value for $n$-person Games},
%Contributions to the theory of games II, 
%Princeton University Press, 1953.




%\bibitem{Y93}
%H. P. Young,
%``The Evolution of Conventions,''
%{\it Econometrica},
%Vol. 61, No. 1, pp. 57--84, 1993.

%\bibitem{Y04}
%H. P. Young,
%{\it Strategic Learning and its Limits},
%Oxford University Press, 2004.



\bibitem{SM_12}
Y. Lim and J. S. Shamma,
``Robustness of stochastic stability in game theoretic learning,''
submitted for conference publication, 2012.

\bibitem{Y_BK01}
H. P. Young,
{\it Individual strategy and social structure: 
an evolutionary theory of institutions},
Princeton University Press, 2001.


\bibitem{IM_BK76}
D. Isaacson and R. Madsen,
{\it Markov chains: theory and applications}, New York, Wiley, 1976.

\bibitem{GHF_arxiv11}
T. Goto, T. Hatanaka and M. Fujita, 
``Payoff-based inhomogeneous
partially irrational play for potential game theoretic cooperative
control of multi-agent systems,'' downloadable at arXiv:1107.4838,
2011.


\bibitem{E_BJRNBS67}
J. Edmonds,
``Optimum branchings,''
{\it Journal of Research of the National Bureau of Standards},
Vol. 71-B, pp. 233--240, 1967.

\end{thebibliography}
\end{document}